\numberwithin{equation}{section}
\newcommand{\ie}{\emph{i.e.}}
\newcommand{\eg}{\emph{e.g.}}
\newcommand{\cf}{\emph{cf}}
\newcommand{\D}{\mathrm{d}}
\newcommand{\C}{\mathbb{C}}
\newcommand{\N}{\mathbb{N}}
\newcommand{\R}{\mathbb{R}}
\newcommand{\Real}{\mathbb{R}}
\newcommand{\Com}{\mathbb{C}}
\newcommand{\Dom}{\mathsf{D}}
\newcommand{\Ran}{\mathsf{R}}
\newcommand{\eps}{\varepsilon}
\newcommand{\sii}{L^2}
\newtheorem{claim}{Claim}[section]
\newtheorem{theorem}[claim]{Theorem}
\newtheorem{lemma}[claim]{Lemma}
\newtheorem{proposition}[claim]{Proposition}
\newtheorem{corollary}[claim]{Corollary}
\theoremstyle{definition}
\newtheorem{remark}[claim]{Remark}
\definecolor{DarkGreen}{rgb}{0,0.5,0.1}
\newcommand\soutD{\bgroup\markoverwith
{\textcolor{DarkGreen}{\rule[.5ex]{2pt}{1pt}}}\ULon}
\newcommand\soutB{\bgroup\markoverwith
{\textcolor{blue}{\rule[.5ex]{2pt}{1pt}}}\ULon}
\newcommand{\Hm}[1]{\leavevmode{\marginpar{\tiny%
$\hbox to 0mm{\hspace*{-0.5mm}$\leftarrow$\hss}%
\vcenter{\vrule depth 0.1mm height 0.1mm width \the\marginparwidth}%
\hbox to 0mm{\hss$\rightarrow$\hspace*{-0.5mm}}$\\\relax\raggedright
#1}}}
\begin{document}

\title{\bf Asymptotic spectral analysis in colliding leaky quantum layers}
\author{Sylwia Kondej$\,^a$ \ and \ David Krej\v{c}i\v{r}\'{\i}k$\,^b$}
\date{
\small \emph{
\begin{quote}
\begin{itemize}
\item[$a)$]
Institute of Physics, University of Zielona G\'ora, ul.\ Szafrana
4a, 65246 Zielona G\'ora, Poland; s.kondej@if.uz.zgora.pl
\item[$b)$]
Department of Theoretical Physics, Nuclear Physics Institute ASCR,
25068 \v{R}e\v{z}, Czech Republic; krejcirik@ujf.cas.cz
\end{itemize}
\end{quote}
}
\medskip
23 June 2016}
\maketitle

\begin{abstract}
\noindent
We consider the Schr\"{o}dinger operator with a complex delta interaction
supported by two parallel hypersurfaces in the Euclidean space
of any dimension.
We analyse spectral properties of the system in the limit
when the distance between the hypersurfaces tends to zero.
We establish the norm-resolvent convergence to a limiting operator
and derive first-order corrections for the corresponding eigenvalues.
%
%
\end{abstract}
%

\section{Introduction}\label{Sec.intro}
%
Semiconductor heterostructures have had tremendous impact
on science and technology as building blocks
for a bottom-up approach to the fabrication
of nanoscale devices.
A key property of these material systems
is the unique versatility in terms of geometrical
dimensions and composition
and their ability to exhibit quantum effects.
Theoretical studies have lead to interesting mathematical
problems which involve an interaction of differential geometry,
spectral analysis and theory of partial differential equations.
In this paper, we rely on the mathematical concept
of \emph{leaky quantum graphs} or \emph{waveguides}
introduced by Exner and Ichinose in 2001~\cite{EI}
(see~\cite{Exner_2008} for a survey),
where the quantum Hamiltonian is modelled
by the Schr\"odinger operator with a Dirac-measure potential
supported on a hypersurface in~$\Real^d$.

The situations $d=1,2,3$ are of particular interest
in the context of mesoscopic physics of nanostructures,
where they are sometimes referred to as
\emph{quantum dots}, \emph{wires} or \emph{layers}, respectively.
We adopt the last terminology
to emphasise the geometric complexity of the problem,
but any value $d \geq 1$ is allowed in this paper.
Using the Dirac-measure interaction
instead of a regular potential
to describe a quantum particle in a nanostructure
is a simplification
in the sense that the former vanishes outside the hypersurface.
At the same time, it is a more realistic model
than considering the particle confined to
a tubular neighbourhood of the hypersurface
by means of Dirichlet boundary conditions
(see~\cite{DEK2}, \cite{CEK}, \cite{LL1}, \cite{Lu-Rowlett_2012} and~\cite{KL}
for this type of models in the case $d=3$),
because it takes into account \emph{tunnelling},
property which is observed and measured in realistic heterostructures
(see, \eg, \cite{Bjork_2002} and~\cite{Franceschi_2003}).

The objective of this paper is to quantify the effect of tunnelling
by considering coalescing heterostructures modelled by Dirac-measure potentials
imposed on two parallel hypersurfaces separated by a distance~$\eps$
and studying spectral properties in the limit as~$\eps$ tends to zero.
Spectral asymptotics of systems with leaky quantum waveguides
have been analysed in various contexts and dimensions recently
(see, \eg,
\cite{Behrndt-Exner-Lotoreichik_2014},
\cite{Behrndt-Grubb-Langer-Lotoreichik_2015},
\cite{Behrndt-Langer-Lotoreichik_2013},
\cite{Duchene-Raymond_2014},
\cite{Exner-Pankrashkin_2014},
\cite{Kondej-Lotoreichik_2014},
\cite{Lotoreichik-Ourmieres_2016}).
The geometric setting introduced in this paper is new and interesting
both physically and mathematically.
In fact, to establish the eigenvalue asymptotics as $\eps \to 0$,
we need to combine diverse methods of Riemannian geometry,
spectral analysis and theory of partial differential equations.

Motivated by a growing interest in non-self-adjoint operators
in recent years
(\cf~the review article~\cite{KSTV}
and the book chapter~\cite{KS-book} and references therein),
in this paper we proceed in a great generality
by allowing \emph{complex} couplings on the colliding hypersurfaces.
In quantum mechanics, non-self-adjoint operators are traditionally
relevant as effective models of open systems
and, more recently, as an unconventional representation of physical observables.
Schr\"odinger operators with complex delta interactions
are specifically used in Bose-Einstein condensates,
where the imaginary part of the complex coupling
models the injection and removal of particles
(see~\cite{Cartarius-Haag-Dast-Wunner_2012} and~\cite{Dohnal-Siegl}).

Let us now specify the mathematical model of this paper
and present our main results.
Let~$\Omega$ be a bounded smooth open set in~$\Real^d$ with $d \geq 1$
and let us denote by $\Sigma_0 := \partial\Omega$ the boundary of~$\Omega$.
For all sufficiently small positive~$\eps$,
we consider parallel hypersurfaces
\begin{equation}\label{hypersurface}
  \Sigma_{\pm\eps} := \{
  q \pm \eps n(q) : q \in \Sigma_0 \}
  \,,
\end{equation}
where $n : \Sigma_0 \to \Real^d$ denotes the outer unit normal to~$\Omega$.
Finally, given two constants $\alpha_\pm \in \Com$,
we consider the operator in $\sii(\Real^d)$
represented by the formal expression
\begin{equation}\label{Hamiltonian}
  H_\eps := -\Delta
  + \alpha_+ \, \delta_{\Sigma_{+\eps}}
  + \alpha_- \, \delta_{\Sigma_{-\eps}}
  \,,
\end{equation}
where~$\delta_\Sigma$ denotes the Dirac delta function
supported by a hypersurface $\Sigma \subset \Real^d$.
The purpose of this paper is to study spectral
properties of~$H_\eps$ in the limit when $\eps \to 0$.

First of all,
it is natural to expect that the limiting operator is given by
\begin{equation}\label{Hamiltonian0}
  H_0 := -\Delta + (\alpha_+ + \alpha_-) \, \delta_{\Sigma_{0}}
  \,.
\end{equation}
In this paper, we show that the convergence holds
in the norm-resolvent sense.
\begin{theorem}\label{Thm.nrs}
For any $z \in \rho(H_0)$, there exists a positive constant~$\eps_0$
such that, for all $\eps < \eps_0$,
we have $z \in \rho(H_\eps)$ and
\begin{equation}\label{nrs}
  \left\|
  (H_\eps-z)^{-1} - (H_0-z)^{-1}
  \right\|_{\sii(\Real^d) \to \sii(\Real^d)}
  = O(\eps)
  \qquad \mbox{as} \qquad \eps \to 0
  \,.
\end{equation}
\end{theorem}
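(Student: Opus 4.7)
The plan is to realise both $H_\eps$ and $H_0$ as $m$-sectorial operators associated with the closed sectorial forms
\[
h_\eps[u,v] := \int_{\R^d}\nabla u\cdot\overline{\nabla v}\,\D x + \alpha_+\!\int_{\Sigma_{+\eps}}\!\!u\bar v\,\D\sigma + \alpha_-\!\int_{\Sigma_{-\eps}}\!\!u\bar v\,\D\sigma
\]
on the common form domain $H^1(\R^d)$, with the analogous definition of $h_0$ on $\Sigma_0$ with coupling $\alpha_++\alpha_-$; the standard trace estimate $\|u\|_{\sii(\Sigma_{\pm\eps})}\leq C\|u\|_{H^1(\R^d)}$ holds uniformly in small $\eps$ because $\Sigma_{\pm\eps}$ is a smooth normal perturbation of the bounded hypersurface~$\Sigma_0$.

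For $z\in\rho(H_0)$ and $f,g\in\sii(\R^d)$, I would set $u_\eps:=(H_\eps-z)^{-1}f$ and $v_0:=(H_0^{\ast}-\bar z)^{-1}g$. Testing the two form equations against one another yields
\[
\langle(H_\eps-z)^{-1}f-(H_0-z)^{-1}f,g\rangle = (h_0-h_\eps)[u_\eps,v_0] = \sum_{\pm}\alpha_\pm\Big(\int_{\Sigma_0}-\int_{\Sigma_{\pm\eps}}\Big)u_\eps\bar v_0\,\D\sigma,
\]
so it suffices to bound the right-hand side by $C\eps\|f\|_{\sii}\|g\|_{\sii}$. Parametrising the parallel surfaces $\Sigma_t:=\Phi_t(\Sigma_0)$ through the normal map $\Phi_t(q):=q+tn(q)$, whose Jacobian satisfies $J_t=1+O(t)$ smoothly on $\Sigma_0$, the fundamental theorem of calculus rewrites each surface difference as
\[
\int_{\Sigma_{\pm\eps}}u_\eps\bar v_0-\int_{\Sigma_0}u_\eps\bar v_0 = \int_0^{\pm\eps}\!\frac{d}{dt}\!\int_{\Sigma_0}(u_\eps\circ\Phi_t)(\bar v_0\circ\Phi_t)J_t\,\D\sigma\,\D t,
\]
and it is enough to control this $t$-integrand by $C\|f\|_{\sii}\|g\|_{\sii}$ uniformly for $t\in(-\eps,\eps)$.

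In the open strip between $\Sigma_{-\eps}$ and $\Sigma_{+\eps}$, the function $u_\eps$ satisfies the homogeneous elliptic equation $(-\Delta-z)u_\eps=f$ with no singular interaction present, and on each side of $\Sigma_0$ the analogous statement holds for~$v_0$; consequently the $t$-derivative of the integrand is expressible through traces and one-sided normal derivatives of $u_\eps$ and $v_0$ on~$\Sigma_t$. To gain uniform-in-$\eps$ control (naive interior elliptic estimates on the collapsing strip carry constants that degenerate as $\eps\to 0$), I would use the Birman--Schwinger--Krein representation
\[
u_\eps = R_0(z)f - \alpha_+ R_0(z)\tau_{+\eps}^{\ast}\varphi_\eps^{+} - \alpha_- R_0(z)\tau_{-\eps}^{\ast}\varphi_\eps^{-},\qquad R_0(z):=(-\Delta-z)^{-1},
\]
in which $\varphi_\eps^{\pm}:=u_\eps|_{\Sigma_{\pm\eps}}$ solve a $2\times 2$ Birman--Schwinger system on $\sii(\Sigma_{+\eps})\oplus\sii(\Sigma_{-\eps})$ (with an analogous single-surface formula for $v_0$ driven by $H_0^{\ast}$), and standard layer-potential mapping properties of $R_0(z)\tau_{\pm\eps}^{\ast}$ convert bounds on $\varphi_\eps^{\pm}$ into bounds on the traces and normal derivatives entering the integrand.

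The main obstacle is the rank degeneracy of the two-surface Birman--Schwinger matrix as $\eps\to 0$: both rows collapse to $\tau_0 R_0(z)\tau_0^{\ast}$, so the matrix does not converge to an invertible operator. I would circumvent this by pulling back via $\Phi_{\pm\eps}$ to a fixed $\sii(\Sigma_0)\oplus\sii(\Sigma_0)$ and performing a block decomposition along the diagonal direction $(1,1)$ and its orthogonal complement: on the complement the limiting matrix is the identity, while on the diagonal it reduces to $I+(\alpha_++\alpha_-)\tau_0 R_0(z)\tau_0^{\ast}$, which is invertible precisely because $z\in\rho(H_0)$. The full $\eps$-dependent matrix is then an $O(\eps)$ norm perturbation of this block-invertible limit, yielding uniform bounds on $\varphi_\eps^{\pm}$ (and on the analogous density for $v_0$), closing the estimate with an $O(\eps)$ remainder, and simultaneously proving $z\in\rho(H_\eps)$ for all sufficiently small~$\eps$.
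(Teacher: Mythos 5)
Your proposal is correct in outline, and it splits into two halves: the first half coincides with the paper's argument, while the second is a genuinely different route. Like the paper, you reduce $\langle(R_\eps-R_0)f,g\rangle$ to the form difference $(h_0-h_\eps)[u_\eps,v_0]$ tested on the two resolvents, and rewrite each surface discrepancy by the fundamental theorem of calculus across the thin normal strip, so that everything hinges on uniform-in-$\eps$ control of traces and one-sided normal derivatives of $u_\eps$ and $v_0$ on the intermediate surfaces $\Sigma_t$, $|t|<\eps$. The paper obtains this control (and the inclusion $z\in\rho(H_\eps)$) from two uniform resolvent bounds proved by energy and elliptic-regularity methods --- $\sii\to H^1$ and $\sii\to H^2$ off the interfaces, the latter via difference quotients in the curvilinear coordinates --- plus an integration-by-parts trick combined with the interface condition to move the trace of $\partial_n\psi_\eps$ from the collapsing strip $\Omega_\eps^{0+}$ to the non-collapsing set $\Omega_\eps^+$; this forces them to first work in a half-plane $\Re z<z_0$ and then extend to general $z\in\rho(H_0)$ by standard resolvent arguments. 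You instead invoke the Krein/Birman--Schwinger representation, which buys a direct treatment of \emph{every} $z\in\rho(H_0)$ at once and reduces both the uniform bounds and the invertibility statement to the $2\times 2$ system on $\sii(\Sigma_{+\eps})\oplus\sii(\Sigma_{-\eps})$; this is the classical route in the leaky-graph literature and is perfectly viable here.

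Two points in your second half deserve more care. First, your framing of the ``obstacle'' is slightly off: the full Birman--Schwinger matrix is the identity plus an operator whose $2\times2$ structure degenerates to rank one, and its limit \emph{is} invertible precisely when $I+(\alpha_++\alpha_-)\tau_0R_0(z)\tau_0^*$ is --- i.e.\ when $z\in\rho(H_0)$ --- which is exactly what your block decomposition along $(1,1)$ exhibits; so this is bookkeeping rather than a genuine degeneracy. Second, and more importantly, the assertion that the pulled-back matrix is an $O(\eps)$ \emph{norm} perturbation of its limit is the real technical content of your route and is not automatic: the off-diagonal entry $\tau_{+\eps}R_0(z)\tau_{-\eps}^*$ has kernel $G_z(\Phi_\eps(q)-\Phi_{-\eps}(q'))$, whose value on the diagonal $q=q'$ is $G_z(2\eps n(q))$ and blows up pointwise as $\eps\to0$ for $d\ge3$. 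The operator-norm convergence with rate $O(\eps)$ does hold (a Schur test on the difference kernel, splitting $|q-q'|\lessgtr\eps$, gives it), but this estimate must be carried out explicitly, as must the uniform $\sii(\Sigma_t)$ bounds on normal derivatives of single-layer potentials evaluated on parallel surfaces at distance $O(\eps)$ from the charge-carrying surface. With those two verifications supplied, your argument closes.
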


As a consequence of Theorem~\ref{Thm.nrs},
we obtain a convergence of the spectrum of~$H_\eps$
to the spectrum of~$H_0$ as $\eps \to 0$.
In particular, discrete eigenvalues change continuously with~$\eps$
(\cf~\cite[Sec.~IV.3.5]{Kato}).
By a discrete eigenvalue~$\lambda_\eps$ of~$H_\eps$
we mean an isolated eigenvalue of finite algebraic multiplicity
such that the range of $H_\eps - \lambda_\eps$ is closed.
We remark that~$H_0$ may or may not possess discrete eigenvalues,
depending on the values of the coupling constants~$\alpha_\pm$
and geometry of~$\Sigma_0$;
in particular, they always exist in the self-adjoint case
if the constants are negative and sufficiently large.
Since the interaction in~\eqref{Hamiltonian}
is compactly supported in~$\Real^d$,
it is also possible to show that the essential spectrum of~$H_\eps$
(\ie~the complement of the discrete eigenvalues in the spectrum)
equals the essential spectrum of the self-adjoint  Laplacian without
the delta interactions, \ie
$$
  \sigma_\mathrm{ess}(H_\eps) = [0,+\infty)
  \,,
$$
for all $\eps \geq 0$,
regardless of the geometry of~$\Sigma_0$ and values of~$\alpha_\pm$.

The main interest of Theorem~\ref{Thm.nrs}
lies in the sharpness of the power of~$\eps$ in~\eqref{nrs}.
Indeed, as the next result of this paper,
we derive the following asymptotics for simple eigenvalues.
\begin{theorem}\label{Thm.evs}
Let~$\lambda_0$ be a simple discrete eigenvalue of~$H_0$
and let~$\psi_0$
be the corresponding eigenfunction.
There exist positive constants~$\eps_0$ and~$r$ such that,
for all $\eps < \eps_0$, $H_\eps$~possesses precisely one discrete eigenvalue
of algebraic multiplicity one
in the open ball~$B_r(\lambda_0)$ disk of radius~$r$ centred at~$\lambda_0$.
Moreover, the following asymptotics holds:
\begin{equation}\label{as}
  \lambda_\eps = \lambda_0 + \lambda_0' \, \eps + O(\eps^2)
  \qquad \mbox{as} \qquad
  \eps \to 0
\end{equation}
with
\begin{equation}\label{as.bis}
  \lambda_0' := \frac{\displaystyle
  \alpha_+
  \int_{\Sigma_0} \partial_n^+ \psi_0^2
  + \alpha_-   \int_{\Sigma_0} \partial_n^- \psi_0^2
  - \int_{\Sigma_0}
  \left[\alpha_+^2+\alpha_-^2 + (\alpha_+ -\alpha_-) \,(d-1) K_1 \right]
  \psi_0^2}
  {\displaystyle \int_{\Real^d} \psi_0^2}
  \,,
\end{equation}
where~$K_1$ denotes the first mean curvature of~$\Sigma_0$ and
$$
  \partial^\pm_n f(x)
  :=
  \lim_{\epsilon \to 0^+} \frac{f(x\pm n\epsilon )-f(x)}{\epsilon}
  \,.
$$
\end{theorem}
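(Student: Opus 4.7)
The plan is to combine the norm-resolvent estimate of Theorem~\ref{Thm.nrs} with a perturbative analysis based on a carefully constructed quasi-mode, exploiting the biorthogonal structure imposed by the complex couplings~$\alpha_\pm$.

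Fix $r>0$ with $\overline{B_r(\lambda_0)}\cap\sigma(H_0)=\{\lambda_0\}$. By~\eqref{nrs}, the Riesz projection $P_\eps:=-\frac{1}{2\pi i}\oint_{\partial B_r(\lambda_0)}(H_\eps-z)^{-1}\,dz$ is well-defined for $\eps<\eps_0$, converges in operator norm to~$P_0$ at rate~$O(\eps)$, and thus has rank one for $\eps$ small (\cite[Sec.~IV.3.5]{Kato}); this produces the unique simple eigenvalue $\lambda_\eps\in B_r(\lambda_0)$ and an eigenfunction $\psi_\eps\to\psi_0$ in~$\sii(\R^d)$. Since only the couplings $\alpha_\pm$ are complex in~$H_\eps$, the adjoint~$H_\eps^*$ is obtained by conjugating $\alpha_\pm$, so that $\overline{\psi_\eps}$ is a left eigenfunction of~$H_\eps^*$ at~$\overline{\lambda_\eps}$. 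This yields the biorthogonal identity $\lambda_\eps\int_{\R^d}\psi_\eps^2=h_\eps[\psi_\eps,\psi_\eps]$, where
$$h_\eps[\psi,\phi]:=\int_{\R^d}\nabla\psi\cdot\nabla\phi+\alpha_+\int_{\Sigma_{+\eps}}\psi\phi+\alpha_-\int_{\Sigma_{-\eps}}\psi\phi$$
is the bilinear form of~$H_\eps$ with the second argument \emph{unconjugated}; moreover, a standard second-order perturbation argument then gives
$$\lambda_\eps=\frac{h_\eps[\tilde\psi_\eps,\tilde\psi_\eps]}{\int_{\R^d}\tilde\psi_\eps^2}+O(\eps^2)$$
for any quasi-mode $\tilde\psi_\eps\in H^1(\R^d)$ satisfying $\|\tilde\psi_\eps-\psi_\eps\|_{H^1(\R^d)}=O(\eps)$.

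Introduce normal coordinates $(q,t)$ in a tubular neighbourhood of~$\Sigma_0$ and extend the one-sided restrictions~$\psi_0^\pm$ across~$\Sigma_0$ as solutions of~$(-\Delta-\lambda_0)u=0$; the membership $\psi_0\in\mathsf{D}(H_0)$ is encoded in the jump relation $\partial^+_n\psi_0+\partial^-_n\psi_0=(\alpha_++\alpha_-)\psi_0|_{\Sigma_0}$. Define~$\tilde\psi_\eps$ to agree with~$\psi_0^\pm$ for~$\pm t\geq\eps$, and inside the sandwich $\{|t|<\eps\}$ to be a polynomial in~$t$ of low degree whose coefficients are fixed so that $\tilde\psi_\eps$ is continuous and respects the $H_\eps$-jump conditions at~$\Sigma_{\pm\eps}$; solvability at leading order is guaranteed precisely by the jump relation for~$\psi_0$. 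Then expand each ingredient of the Rayleigh quotient to first order in~$\eps$: the delta integrals~$\int_{\Sigma_{\pm\eps}}\tilde\psi_\eps^2$ via the Jacobian~$1\pm\eps(d-1)K_1+O(\eps^2)$ of the parallel-surface map together with the Taylor expansion of~$\tilde\psi_\eps$ on~$\Sigma_{\pm\eps}$ (producing the terms involving~$\partial^\pm_n\psi_0$), the Dirichlet integral via the slopes of the interpolation across the sandwich, and the denominator as~$\int\psi_0^2+O(\eps^2)$. Assembling these contributions and simplifying by means of the jump relation for~$\psi_0$ delivers formula~\eqref{as.bis}.

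The principal difficulty lies in the quasi-mode construction: the interpolating polynomial in the sandwich must simultaneously meet the $H_\eps$-jump conditions at~$\Sigma_{\pm\eps}$, achieve the~$H^1$-rate~$O(\eps)$ (rather than the naive~$O(\sqrt\eps)$ produced by a crude matching) required for the second-order Rayleigh estimate, and organize the various first-order contributions so that the Jacobian expansions from the two parallel surfaces, the one-sided derivatives~$\partial^\pm_n\psi_0$ of the limit eigenfunction, and the gradient energy stored across the sandwich combine---through the jump relation---into the precise coefficients~$-(\alpha_+^2+\alpha_-^2)$ and~$-(\alpha_+-\alpha_-)(d-1)K_1$ of~\eqref{as.bis}, rather than the signs and magnitudes that would result from a naive evaluation at~$\tilde\psi_\eps=\psi_0$.
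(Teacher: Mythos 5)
Your reduction of the problem to a Rayleigh-quotient computation is sound as far as it goes, and it is a genuinely different organisation from the paper's: the paper evaluates the quotient at the \emph{exact} trial function $P_\eps\psi_0$ (so that the quotient equals $\lambda_\eps$ identically, \cf~\eqref{eq-formev}--\eqref{eq-aux1-}) and then must \emph{compute} the first-order term $h_\eps\big(\overline{P_\eps^\perp\psi_0},P_\eps^\perp\psi_0\big)$ exactly (Lemma~\ref{le-as3}), whereas you propose to build an explicit quasi-mode good enough that the corresponding error term is $O(\eps^2)$ and can simply be discarded. The difficulty has not disappeared, however; it has been relocated into the hypothesis $\|\tilde\psi_\eps-\psi_\eps\|_{H^1(\R^d)}=O(\eps)$, and this is precisely the step you leave unproved. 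Note that $\psi_0$ itself only achieves $O(\sqrt{\eps})$: inside the sandwich $\Omega_\eps^0$ the normal derivative of $\psi_\eps$ differs from that of $\psi_0$ by $O(1)$ on a set of volume $O(\eps)$, and it is exactly this $O(\sqrt{\eps})^2=O(\eps)$ discrepancy that produces the term $-(\alpha_+^2+\alpha_-^2)\int_{\Sigma_0}\psi_0^2$. To certify that your interpolant is $O(\eps)$-close to the \emph{unknown} function $\psi_\eps$ you need either (a) quantitative control of $\psi_\eps$ and $\partial_n\psi_\eps$ inside the shrinking region $\Omega_\eps^0$ --- which is what the paper extracts via the maximum-principle argument of Theorem~\ref{Thm.uniform} and Lemma~\ref{Lem.derivative}, at a cost comparable to everything you are trying to avoid --- or (b) a residual estimate $\|(H_\eps-\lambda_0)\tilde\psi_\eps\|_{H^{-1}(\R^d)}=O(\eps)$ combined with a bound on $\|(H_\eps-z)^{-1}\|_{H^{-1}\to H^1}$ that is uniform in $\eps$ for $z$ on the contour $\mathcal{C}_r$ around $\lambda_0$; the paper's Remark~\ref{Rem.duality} gives such a bound only for $\Re z<z_0$, so the extension to the contour would itself require an argument. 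Your closing paragraph names this as ``the principal difficulty'' but does not resolve it, so the proof is not complete at its decisive point.

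Two further issues. First, the extension of the one-sided restrictions $\psi_0^\pm$ across $\Sigma_0$ as exact solutions of $(-\Delta-\lambda_0)u=0$ is not available: $\Sigma_0$ and $\psi_0$ are smooth but not analytic, and the Cauchy problem for an elliptic equation across a non-analytic hypersurface is ill-posed. A finite-order Taylor (Borel-type) extension suffices for your purposes and should be used instead; this is fixable but as written the construction does not exist. Second, the final assembly --- where the Jacobian expansion $f(q,\pm\eps)=1\mp\eps(d-1)K_1+O(\eps^2)$, the traces $\partial_n^\pm\psi_0$, and the gradient energy stored in the sandwich are supposed to combine into the coefficients $-(\alpha_+^2+\alpha_-^2)$ and $-(\alpha_+-\alpha_-)(d-1)K_1$ of~\eqref{as.bis} --- is asserted rather than carried out. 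Since the entire content of the theorem beyond Theorem~\ref{Thm.nrs} is the value of $\lambda_0'$, and since naive choices of trial function demonstrably give the wrong answer, this computation (the analogue of Proposition~\ref{prop-1} and Lemma~\ref{le-as3}) cannot be omitted.
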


The functions appearing in the numerator of~\eqref{as.bis}
should be understood in the sense of traces
and their rigorous definition will be provided in the following section.

 We also give an analogous theorem
for degenerate semisimple eigenvalues,
\ie\ for the case when
the algebraic and geometric multiplicity coincide, (\cf~\cite[Sec.~I.5.3]{Kato}).
This result is formulated as Theorem~\ref{th-deg} below.

We remark that a presence of the first mean curvature
in eigenvalue asymptotics has been recently observed
in related problems,
see~\cite{K5}, \cite{K10} and~\cite{Pankrashkin-Popoff_2015b}.

If $\alpha_+ = \alpha_-$, formula~\eqref{as.bis}
simplifies to $\lambda_0' = 2 \alpha_+^2$ (\cf~\eqref{conditions}),
so the first correction term in the eigenvalue asymptotics
is insensitive to the geometric setting if the coupling constants coincide.

We stress that the asymptotics~\eqref{as}
is not a consequence of analytic perturbation theory.
As a matter of fact, taking a formal derivative of~$\lambda_\eps$
with respect to~$\eps$ in the spirit of the Hellmann-Feynman theorem
would lead only to the first integral in the numerator of~\eqref{as.bis}.
Of course, this formal manipulation is not justified
because of the singular dependence of~$H_\eps$ on~$\eps$.
It is interesting that a non-trivial rigorous approach
is needed to reveal the geometric term in~\eqref{as.bis}.

This paper is organised as follows.
In Section~\ref{Sec.pre} we present some necessary
analytic and geometric prerequisites.
The norm-resolvent convergence of Theorem~\ref{Thm.nrs}
is established in Section~\ref{Sec.nrs}.
Our strategy is to derive first estimates for the norm
of the resolvent as an operator between Sobolev spaces,
which we believe are of interest on its own.
In Section~\ref{Sec.efs}, we establish a uniform
convergence of eigenfunctions by a refined application
of the maximum principle.
Section~\ref{Sec.evs} is devoted to a proof of Theorem~\ref{Thm.evs}
based on a detailed study of spectral projections
as well as to its extension to degenerate eigenvalues.
We conclude the paper by an appendix (Section~\ref{Sec.warm}),
where Theorem~\ref{Thm.evs} is re-established in the simplest case $d=1$.
Here the eigenvalue problem can be reduced to
a transcendental equation,
for which the implicit function theorem yields
the the first correction term.

\section{Preliminaries}\label{Sec.pre}
%
Let us start by properly defining the operators~$H_\eps$ and~$H_0$
(the latter can be considered as~$H_\eps$ for $\eps=0$
if we set $\Sigma_{\pm 0} := \Sigma_0$).
The sum in~\eqref{Hamiltonian} has a good meaning
as a sum of bounded operators from the Sobolev space $H^{1}(\Real^d)$
to its dual $H^{-1}(\Real^d)$.
It is more customary to consider~$H_\eps$ as an unbounded
operator in the Hilbert space $\sii(\Real^d)$.
To this purpose, we introduce the quadratic form
\begin{equation}\label{form}
  h_\eps[\psi] := \int_{\Real^d} |\nabla\psi|^2
  + \alpha_+ \int_{\Sigma_{+\eps}} |\psi|^2
  + \alpha_- \int_{\Sigma_{-\eps}} |\psi|^2
  \,, \qquad
  \Dom(h_\eps) := H^{1}(\Real^d)
  \,,
\end{equation}
which is formally associated with the expression
on the right hand side of~\eqref{Hamiltonian},
and define~$H_\eps$ as the unique m-sectorial operator
associated with~$h_\eps$ via the first representation theorem
(\cf~\cite[Thm.~VI.2.1]{Kato}).

The boundary terms in~\eqref{form} should be understood
in the sense of traces (\cf~\cite{Adams}).
More specifically, in analogy with~\eqref{hypersurface},
we introduce a mapping
\begin{equation}\label{tube}
  \mathcal{L}: \Sigma_0 \times \Real \to \Real^d :
  \left\{ (q,t) \mapsto q + t \, n(q)  \right\}
\end{equation}
and define sets $\Sigma_t := \mathcal{L}(\Sigma_0 \times \{t\})$.
Because of the boundedness and smoothness of~$\Omega$,
there exists a positive number~$a$
such that
\begin{equation}\label{diffeomorphism}
  \mbox{$\mathcal{L} :\Sigma_0 \times [-a,a] \to
  \mathcal{L}(\Sigma_0 \times [-a,a])$
  is a diffeomorphism.}
\end{equation}
Consequently, $\Sigma_t$~is a smooth hypersurface
(parallel to~$\Sigma_0$ at distance~$|t|$)
for all $|t| \leq a$.
(Neither~$\Sigma_0$ nor $\Sigma_{t}$ are necessarily connected.)
By the trace embedding theorem (\cf~\cite[Thm.~5.36]{Adams}),
the \emph{trace operator}
\begin{equation}\label{trace}
  \tau_t : H^1(\Real^d) \to \sii(\Sigma_t)
\end{equation}
is bounded for all $|t| \leq a$.
In fact, if $|t| \leq a$,
then for any $\delta>0$ there exists a positive constant~$C_\delta$
(depending in addition to~$\delta$ also on the geometry of~$\Sigma_0$)
such that,
for all $\psi \in H^1(\Real^d)$,
\begin{equation}\label{trace.norm}
  \|\tau_t\psi\|_{\sii(\Sigma_t)}^2
  \leq \delta \, \|\nabla\psi\|_{\sii(\Real^d)}^2
  + C_\delta \, \|\psi\|_{\sii(\Real^d)}^2
  \,.
\end{equation}
This estimate can be proved in a standard way by using
the diffeomorphism~$\mathcal{L}$ and the one-dimensional bound
\begin{equation}\label{elementary}
  \sup_{(-l,l)}|\varphi|^2
  \leq 2 \, \|\varphi\|_{\sii(-l,l)} \, \|\varphi'\|_{\sii(-l,l)}
  +  \, (2l)^{-1} \, \|\varphi\|_{\sii(-l,l)}^2
\end{equation}
valid for all $\varphi \in H^1((-l,l))$,
where~$l$ is any positive number.
It follows that the boundary terms in~\eqref{form}
(in which we ambiguously write~$\psi$ instead of~$\tau_{\pm\eps} \psi$)
represent a relatively bounded perturbation
of the gradient integral with the relative bound equal to zero
(since~$\delta$ can be taken arbitrarily small).
Consequently, the form~\eqref{form} is closed
by classical perturbation results (\cf~\cite[Thm.~VI.1.33]{Kato}),
so that the first representation theorem applies.

Next we set
\begin{equation}\label{Omega.sets}
\begin{aligned}
  \Omega_{\eps}^{0}
  &:= \left\{
  \mathcal{L}(q,t) : \, q \in \Sigma_0, \, -\eps < t < +\eps
  \right\}
  ,
  \\
  \Omega_{\eps}^{\pm}
  &:= \left\{
  \mathcal{L}(q,t) : \, q \in \Sigma_0, \, \eps < \pm t < a/2
  \right\}
  ,
\end{aligned}
\end{equation}
where $0 \leq \eps < a/2$
($\Omega_{0}^{0}$ is an empty set).
In words, $\Omega_{\eps}^{0}$ (respectively, $\Omega_{\eps}^{\pm}$)
is the open set squeezed between
the parallel hypersurfaces~$\Sigma_{+\eps}$ and~$\Sigma_{-\eps}$
(respectively, $\Sigma_{\pm\eps}$ and~$\Sigma_{\pm a/2}$).
For positive~$\eps$, the trace operators
\begin{equation}\label{traces}
\begin{aligned}
  \tau_{-\eps}^- &: H^2(\Omega_{\eps}^{-}) \to H^1(\Sigma_{-\eps})
  \,,
  &\tau_{+\eps}^+ &: H^2(\Omega_{\eps}^{+}) \to H^1(\Sigma_{+\eps})
  \,,
  \\
  \tau_{-\eps}^+ &: H^2(\Omega_{\eps}^{0}) \to H^1(\Sigma_{-\eps})
  \,,
  &\tau_{+\eps}^- &: H^2(\Omega_{\eps}^{0}) \to H^1(\Sigma_{+\eps})
  \,,
\end{aligned}
\end{equation}
are again bounded by the trace embedding theorem.
The claim applies to the first line even if $\eps=0$.
By using the first representation theorem
and elliptic regularity theory,
it is standard to show that~$H_\eps$ acts as the Laplacian,
subject to the interface conditions
\begin{equation}\label{conditions}
\left\{
\begin{aligned}
  \tau_{\pm\eps}^+ \partial_n \psi
  - \tau_{\pm\eps}^- \partial_n \psi
  &= \alpha_\pm \tau_{\pm\eps} \psi
  && \mbox{on} \quad
  \Sigma_{\pm \eps}
  & \mbox{if}\quad \eps>0
  \,,
  \\
  \tau_{+0}^+ \partial_n \psi
  - \tau_{-0}^- \partial_n \psi
  &= (\alpha_+ + \alpha_-) \;\! \tau_{0} \psi
  && \mbox{on} \quad
  \Sigma_{0}
  & \mbox{if}\quad \eps=0
  \,.
\end{aligned}
\right.
\end{equation}
More precisely, we have
\begin{equation}\label{Hamiltonian.domain}
\begin{aligned}
  H_\eps\psi &= -\Delta\psi
  \qquad \mbox{a.e.\ in } \Real^d
  \,,
  \\
  \Dom(H_\eps) &= \left\{
  \psi \in H^1(\Real^d)
  \cap H^2\big(\Real^d\setminus(\Sigma_{+\eps}\cup \Sigma_{-\eps})\big)
  : \psi \mbox{ satisfies~\eqref{conditions}}
  \right\}
  \,.
\end{aligned}
\end{equation}
The meaning of
the trace maps $\partial_n^\pm \psi \in L^2 (\Sigma_0 ) $
used in formula~\eqref{as.bis}
is precisely
$
  \partial_n^\pm \psi  := \pm \tau_{\pm0}^\pm \partial_n \psi
$.

Next, we overtake from~\cite{KRT} some facts
about the geometry of parallel hypersurfaces.
In view of~\eqref{diffeomorphism},
$\Omega_a:=\mathcal{L}(\Sigma_0 \times (-a,a))$
can be identified with the Riemannian manifold
$\Sigma_0 \times (-a,a)$ equipped with
the metric~$G$ induced by~\eqref{tube};
it has a block form
\begin{equation}\label{metric}
  G(q,t) = g(q) \circ \big(I-t \, L(q)\big)^2 + \D t^2
  \,,
\end{equation}
where~$g$ is the Riemannian metric of~$\Sigma_0$,
$L := -\D n$ is the Weingarten map of~$\Sigma_0$
and~$I$ denotes the identity map on $T_q\Sigma_0$.

It follows from~\eqref{metric}
that $|G| := \det(G) = |g| \, f^2$
with $|g| := \det(g)$ and
\begin{equation}\label{Jacobian}
  f(q,t) := \prod_{\mu=1}^{d-1} \big(1-t \, \kappa_\mu(q)\big)
  = 1+\sum _{\mu=1}^{d-1} (-t)^\mu
  \begin{pmatrix} d-1 \\ \mu\end{pmatrix}
  K_\mu(q)
  \,,
\end{equation}
where $\kappa_1, \dots,\kappa_{d-1}$ are the \emph{principal curvatures}
and $K_\mu$ is the $\mu^\mathrm{th}$ \emph{mean curvature} of~$\Sigma_0$
(\cf~\cite{Kuhnel}).
Since the first mean curvature appears in Theorem~\ref{Thm.evs},
we remark that, locally,
$$
  K_1 = \frac{\kappa_1 + \dots + \kappa_{d-1}}{d-1}
  \,.
$$
The sign of~$K_1$ depends on the orientation of~$\Sigma_0$;
in our case where~$\Sigma_0$ is assumed to be oriented
via the outer normal~$n$ to~$\Omega$,
we have $K_1 \leq 0$ if~$\Omega$ is convex (\cf~\cite{K10}).
It follows from~\eqref{Jacobian} that the surface elements
of~$\Sigma_0$ and~$\Sigma_t$ are related by
\begin{equation}\label{elements}
  \D\Sigma_t  =  f(q,t) \, \D\Sigma_0
  \,,
\end{equation}
where $\D\Sigma_0 = |g(q)|^{1/2} \D q$.

From~\eqref{diffeomorphism} and~\eqref{Jacobian}, we deduce
\begin{equation}\label{Ass}
  \forall |t| \leq a
  \,, \quad t \, \max\{\|\kappa_1\|_\infty,\dots,\|\kappa_{d-1}\|_\infty\}  < 1
  \,,
\end{equation}
so that  $\inf_{q\in\Sigma_0}f(q,t) > 0$ for every~$t$ such that $|t| \leq a$.
In particular, there exists a positive constant~$C$
(depending on~$a$ and the supremum norms of the principal curvatures)
such that,
for all $(q,t) \in \Sigma_0 \times [-a,a]$,
\begin{equation}\label{f.bound}
  C^{-1} \leq f(q,t) \leq C
\end{equation}
and
\begin{equation}\label{G.bound}
  C^{-1} \, g(q) + \D t^2 \leq G(q,t) \leq C \, g(q) + \D t^2
  \,.
\end{equation}
(Hereafter we adopt the convention that~$C$
denotes a generic constant whose value can change from line to line.)

Given a coordinate system $(q,t) \in \Sigma_0 \times (-a,a)$,
we denote by~$G_{ij}$ and~$G^{ij}$ the corresponding
coefficients of~$G$ and~$G^{-1}$.
We also adopt the Einstein summation convention,
the range of Latin and Greek indices being $1,\dots,d$
and $1\dots d-1$, respectively,
and abbreviate $\partial_i := \partial/\partial q^i$
with $q^d := t$ (we shall also write $\partial_t := \partial_d$).
It will be convenient to choose for $q=(q^1,\dots,q^{d-1})$
the Riemannian normal coordinates in~$\Sigma_0$,
which exist in a neighbourhood of any point of~$\Sigma_0$.
In these coordinates, since~$\Omega$ is smooth and bounded,
there exists a positive number~$r_0$
such that, for any $p \in \Sigma_0$, the useful estimates
\begin{equation}\label{normal}
  C^{-1} \, (\delta_{\mu\nu}) \leq (g_{\mu\nu}) \leq C \, (\delta_{\mu\nu})
  \,, \qquad
  | \partial_\rho \, g_{\mu\nu} | \leq C
  \,,
\end{equation}
hold in the geodesic ball of radius~$r_0$ centred at~$p$.

Finally, we remark that the mapping~\eqref{tube}
induces a natural unitary transform
between Hilbert spaces
\begin{equation}\label{unitary}
  \mathcal{U} : \sii(\Omega_a) \to
  \sii\left(\Sigma_0\times(-a,a),
  |G(q,t)|^{1/2} \,\D q \wedge \D t\right)
  : \{ \psi \mapsto \psi \circ \mathcal{L} \}
  \,.
\end{equation}
In particular, it will enable us to relate
$\sii(\Sigma_t)$  and $\sii(\Sigma_0)$. Since~$H_\eps$ acts as the Laplacian in
$\Real^d \setminus (\Sigma_{+\eps} \cup \Sigma_{-\eps})$,
its action in the curvilinear ``coordinates'' $(q,t)$
induced by~$\mathcal{L}$ is given by Laplace-Beltrami operator
\begin{equation}\label{LB}
  -\Delta_G := -|G|^{-1/2} \partial_i |G|^{1/2} G^{ij} \partial_j
\end{equation}
 in
$
  \Sigma_0 \times \left[(-a,-\eps) \cup (-\eps,\eps) \cup (\eps,a)\right]
$.
Given $\psi \in L^2 (\R^d)$,
we shall occasionally write $\mathcal{U}\psi$,
meaning that~$\mathcal{U}$ acts on the restriction
of~$\psi$ to $L^2 (\Omega_a)$.
We point out the following topological equivalence
of Sobolev spaces.
\begin{lemma}\label{Lem.equivalence}
There exists a positive constant~$C$ such that,
for every $a \leq t_1 < t_2 \leq a$,
$$
  C^{-1} \, \|\mathcal{U}\psi\|_{H^2(\Sigma_0\times(t_1,t_2))}
  \leq \|\psi\|_{H^2(\mathcal{L}(\Sigma_0\times(t_1,t_2)))} \leq
  C \, \|\mathcal{U}\psi\|_{H^2(\Sigma_0\times(t_1,t_2))}
$$
for every $\psi \in H^2(\mathcal{L}(\Sigma_0\times(t_1,t_2)))$.
\end{lemma}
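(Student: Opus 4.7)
The plan is to exploit the fact that, by \eqref{diffeomorphism} and the assumed boundedness and smoothness of $\Omega$, the restriction of $\mathcal{L}$ to $\Sigma_0 \times [-a,a]$ is a smooth diffeomorphism onto its image, and since $\Sigma_0$ is compact, the derivatives up to order $2$ of both $\mathcal{L}$ and $\mathcal{L}^{-1}$ are uniformly bounded on this set. The equivalence of the $H^2$-norms is then a change-of-variables statement, which one establishes locally and glues by compactness.

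First I would cover $\Sigma_0$ by finitely many geodesic balls of radius at most $r_0/2$ (with $r_0$ from \eqref{normal}) and let $\{\chi_j\}$ be a subordinate smooth partition of unity. Since $\Sigma_0$ is compact, the number of charts is finite, and it suffices to prove the two-sided bound for each product piece $B_j \times (t_1,t_2)$. On such a piece I would employ Riemannian normal coordinates $q=(q^1,\dots,q^{d-1})$, in which \eqref{normal} gives a uniform equivalence between the intrinsic $H^2$-norm of $\mathcal{U}\psi$ on $(B_j \times (t_1,t_2), g+\D t^2)$ and the flat $H^2$-norm on the corresponding Euclidean cylinder. For the change of variables itself, the chain rule gives, for every multi-index with $|\alpha|\leq 2$,
\begin{equation*}
  \partial_x^{\alpha}\psi
  = \sum_{|\beta|\leq |\alpha|}
  P_{\alpha\beta}\bigl(D\mathcal{L}^{-1},D^2 \mathcal{L}^{-1}\bigr)
  \,\bigl(\partial_{(q,t)}^{\beta}\mathcal{U}\psi\bigr)\circ \mathcal{L}^{-1},
\end{equation*}
where $P_{\alpha\beta}$ is a polynomial in the first and second derivatives of $\mathcal{L}^{-1}$; by the uniform smoothness noted above, the coefficients are bounded on $\mathcal{L}(\Sigma_0\times[-a,a])$. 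Combining this with the Jacobian bound \eqref{f.bound}, which identifies $|G|^{1/2}\,\D q\wedge\D t$ with the Euclidean volume element up to a factor bounded above and below, yields the upper inequality. The lower inequality follows identically with the roles of $\mathcal{L}$ and $\mathcal{L}^{-1}$ exchanged, relying on the bounds for the derivatives of $\mathcal{L}$ instead.

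The main obstacle is really only notational: the Sobolev space $H^2(\Sigma_0\times(t_1,t_2))$ is not made explicit in the paper, and one needs to fix the convention (intrinsic norm built from the product metric $g+\D t^2$, or equivalently the sum of squared norms of covariant derivatives up to order two) and check that it agrees, up to a constant depending only on the geometry of $\Sigma_0$, with the norm obtained by patching local charts. Thanks to the uniform control \eqref{normal} on the metric $g$ and its first derivatives in normal coordinates, this equivalence is standard and the constants may be chosen independent of $(t_1,t_2)\in[-a,a]^2$, which is exactly the uniformity the lemma asserts.
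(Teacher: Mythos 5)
Your argument is correct and follows exactly the route the paper intends: the paper's proof is only the one-line remark that the lemma is ``a straightforward application of~\eqref{tube}, namely~\eqref{metric} with estimates~\eqref{normal}'', and your chart-by-chart change of variables with the uniform bounds on $D\mathcal{L}$, $D\mathcal{L}^{-1}$ and the Jacobian~\eqref{f.bound} is precisely the omitted detail. Nothing to add.
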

\begin{proof}
The proof is a straightforward application of~\eqref{tube},
namely~\eqref{metric} with estimates~\eqref{normal}.
We leave the details to the reader.
\end{proof}
%

\section{The norm-resolvent convergence}\label{Sec.nrs}
%
The objective of this section is to prove Theorem~\ref{Thm.nrs}.
For all $\eps \geq 0$ small enough  and $z \in \rho(H_\eps)$,
we set
$$
  R_\eps(z) := (H_\eps-z)^{-1}
  \,.
$$
Given $\Psi \in \sii(\Real^d)$, the function $\psi_\eps := R_\eps(z) \Psi$
is the unique solution of the resolvent equation $(H_\eps-z)\psi_\eps=\Psi$.
The weak formulation of the problem reads
\begin{equation}\label{weak}
  \forall \varphi \in H^1(\Real^d)
  \,, \qquad
  h_\eps(\varphi,\psi_\eps) - z \, (\varphi,\psi_\eps)_{\sii(\Real^d)}
  = (\varphi,\Psi)_{\sii(\Real^d)}
  \,,
\end{equation}
where $h_\eps(\cdot,\cdot)$ is the sesquilinear form
associated with~\eqref{form}.

First of all, we show that the resolvent~$R_\eps(z)$
is uniformly bounded as $\eps \to 0$.
\begin{lemma}\label{Lem.res.bound1}
There exist constants~$z_0 \in \R$ and~$C>0$ such that,
for all $0 \leq \eps < a$
and every $z \in \Com$ such that $\Re z < z_0$,
$z \in \rho(H_\eps)$ and
\begin{equation}\label{res.bound1}
  \left\| R_\eps(z) \right\|_{\sii(\Real^d) \to H^1(\Real^d)}
  \leq C
  \,.
\end{equation}
\end{lemma}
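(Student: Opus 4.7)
The plan is to derive uniform coercivity of the sesquilinear form $h_\eps(\cdot,\cdot) - z(\cdot,\cdot)_{\sii(\Real^d)}$ on $H^1(\Real^d)$ whenever $\Re z$ is sufficiently negative, and then to invoke the Lax--Milgram theorem to simultaneously establish the inclusion $z \in \rho(H_\eps)$ and the uniform resolvent bound~\eqref{res.bound1}.

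First, I would revisit the trace inequality~\eqref{trace.norm} and verify that its constants $\delta$ and $C_\delta$ can be chosen uniformly in $|t|\leq a$. Inspecting the standard derivation based on the diffeomorphism~$\mathcal{L}$ and the one-dimensional bound~\eqref{elementary}, only the metric estimates~\eqref{f.bound} and~\eqref{G.bound} on the tubular neighbourhood~$\Omega_a$ come into play, so the constants are indeed independent of the particular slice parameter. Applying this on $\Sigma_{\pm\eps}$ with $\delta$ chosen small enough in terms of $|\alpha_+|+|\alpha_-|$, one obtains an $\eps$-independent estimate
\begin{equation*}
  |\alpha_+| \, \|\tau_{+\eps}\psi\|_{\sii(\Sigma_{+\eps})}^2
  + |\alpha_-| \, \|\tau_{-\eps}\psi\|_{\sii(\Sigma_{-\eps})}^2
  \leq \tfrac{1}{2} \|\nabla\psi\|_{\sii(\Real^d)}^2
  + c \, \|\psi\|_{\sii(\Real^d)}^2
\end{equation*}
valid for all $\psi \in H^1(\Real^d)$ and all $\eps \in [0,a)$, where $c>0$ depends only on $|\alpha_\pm|$ and on the geometry of~$\Sigma_0$.

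Second, I would set $z_0 := -c-1$ and, for $z\in\Com$ with $\Re z < z_0$, test the weak formulation~\eqref{weak} with $\varphi = \psi_\eps$ and take the real part. The imaginary parts of the boundary terms drop, and the real parts are controlled by the display above, yielding
\begin{equation*}
  \tfrac{1}{2} \|\nabla \psi_\eps\|_{\sii(\Real^d)}^2
  + \|\psi_\eps\|_{\sii(\Real^d)}^2
  \leq \bigl|(\psi_\eps,\Psi)_{\sii(\Real^d)}\bigr|
  \leq \|\psi_\eps\|_{\sii(\Real^d)} \, \|\Psi\|_{\sii(\Real^d)} .
\end{equation*}
A standard Cauchy--Schwarz manipulation then gives $\|\psi_\eps\|_{H^1(\Real^d)} \leq C \|\Psi\|_{\sii(\Real^d)}$ with $C$ independent of~$\eps$.

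Third, the same coercivity bound applied to the form $h_\eps(\cdot,\cdot) - z(\cdot,\cdot)_{\sii(\Real^d)}$ shows that it is bounded and coercive on $H^1(\Real^d)$. By the Lax--Milgram theorem, for every $\Psi \in \sii(\Real^d)$ the problem~\eqref{weak} admits a unique solution $\psi_\eps \in H^1(\Real^d)$; by the first representation theorem this means $z \in \rho(H_\eps)$ and $\psi_\eps = R_\eps(z)\Psi$, whence~\eqref{res.bound1} follows from the previous step. The main (and only) subtle point is the uniformity of the trace inequality in~$\eps$; once this is secured from the uniform tubular geometry expressed by~\eqref{f.bound}--\eqref{G.bound}, the rest of the argument is a routine Lax--Milgram exercise.
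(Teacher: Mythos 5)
Your proposal is correct and follows essentially the same route as the paper: the uniform (in $\eps$) trace inequality \eqref{trace.norm}, testing the weak formulation with $\varphi=\psi_\eps$, and taking real parts to obtain the coercive estimate that yields both $z\in\rho(H_\eps)$ and the uniform bound. The only cosmetic difference is that the paper deduces $z\in\rho(H_\eps)$ from m-sectoriality together with the location of the numerical range, whereas you invoke Lax--Milgram plus the first representation theorem -- two standard and interchangeable ways of packaging the same coercivity.
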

\begin{proof}
Since~$H_\eps$ is m-sectorial, we know that the claim
holds with an \emph{a priori} $\eps$-dependent constant~$z_0$.
The content of the lemma is that~$z_0$ can be made
actually independent of~$\eps$.
Choosing the test function $\varphi:=\psi_\eps$ in~\eqref{weak},
taking the real part of the obtained identity
and applying~\eqref{trace.norm}
together with the Schwarz inequality,
we get
\begin{multline*}
  \left[1-\delta \, (|\Re\alpha_+| + |\Re\alpha_-|) \right]
  \|\nabla\psi_\eps\|_{\sii(\Real^d)}^2
  - \left[C_\delta  \, (|\Re\alpha_+| + |\Re\alpha_-|) + \Re z\right]
  \|\psi_\eps\|_{\sii(\Real^d)}^2
  \\
  \leq \|\psi_\eps\|_{\sii(\Real^d)} \, \|\Psi\|_{\sii(\Real^d)}
  \,.
\end{multline*}
We choose~$\delta$ so small that
$1-\delta \, (|\Re\alpha_+| + |\Re\alpha_-|) \geq 1/2$.
It follows that every $z \in \Com$ such that
$
  \Re z < - C_\delta \, (|\Re\alpha_+| + |\Re\alpha_-|)
$
lies outside the closure of the numerical range of~$H_\eps$,
that is, inside the resolvent set $\rho(H_\eps)$
because~$H_\eps$ is m-sectorial.
Choosing
\begin{equation}\label{z0}
  z_0 := - \frac{1}{2} - C_\delta \, (|\Re\alpha_+| + |\Re\alpha_-|)
  \,,
\end{equation}
we arrive at~\eqref{res.bound1} with $C := \sqrt{8}$.
\end{proof}
\begin{remark}\label{Rem.duality}
It is also possible to  look for solutions of~\eqref{weak}
for $\Psi \in H^{-1}(\Real^d)$ in which case the right hand side
must be understood as the duality pairing
between $H^1(\Real^d)$ and $H^{-1}(\Real^d)$.
Proceeding as in the previous proof, with a slight modification
that the Schwarz inequality is replaced by the estimate
$$
  \left| {}_{H^{1}(\Real^d)}(\psi_\eps,\Psi)_{H^{-1}(\Real^d)} \right|
  \leq \|\psi_\eps\|_{H^1(\Real^d)} \, \|\Psi\|_{H^{-1}(\Real^d)}
  \,,
$$
we obtain
$$
  \left\| R_\eps(z) \right\|_{H^{-1}(\Real^d) \to H^1(\Real^d)}
  \leq C
$$
with $C:=2$.
\end{remark}

We shall need a resolvent estimate of the type~\eqref{res.bound1}
in a better topology.
In the case of the free Hamiltonian (\ie\ $\alpha_\pm = 0$),
we know that the resolvent is bounded in the topology
of bounded operators on $\sii(\Real^d)$ to $H^2(\Real^d)$.
It does not hold if~$\alpha_+$ or~$\alpha_-$ is non-zero,
because then the functions from the domain of~$H_\eps$
are not in $H^2(\Real^d)$, \cf~\eqref{Hamiltonian.domain}.
However, the functions belong to
$H^2(\Real^d\setminus(\Sigma_{+\eps} \cup \Sigma_{-\eps}))$
and the following uniform estimate holds.
\begin{lemma}\label{Lem.res.bound2}
There exists a positive constant~$C$ such that,
for every $z \in \Com$ satisfying $\Re z < z_0$
with~$z_0$ given by~\eqref{z0}
and for all $0 \leq \eps < a/4$,
we have
\begin{equation}\label{res.bound2}
  \left\| R_\eps(z) \right\|_{\sii(\Real^d) \to
  H^2(\Real^d\setminus(\Sigma_{+\eps} \cup \Sigma_{-\eps}))}
  \leq C
  \,.
\end{equation}
\end{lemma}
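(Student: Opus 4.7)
My plan is to bootstrap from the uniform $H^1$ resolvent bound of Lemma~\ref{Lem.res.bound1} to the $H^2$ bound~\eqref{res.bound2} on each of the three connected components of $\Real^d\setminus(\Sigma_{+\eps}\cup\Sigma_{-\eps})$, by combining classical interior elliptic regularity far from the hypersurfaces with a tangential difference-quotient argument near them. Set $\psi_\eps:=R_\eps(z)\Psi$; by Lemma~\ref{Lem.res.bound1} it satisfies $\|\psi_\eps\|_{H^1(\Real^d)}\leq C\|\Psi\|_{\sii(\Real^d)}$, and $-\Delta\psi_\eps=\Psi+z\psi_\eps$ pointwise off the hypersurfaces.

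First I would localize: pick a smooth cutoff $\chi$ supported in $\mathcal{L}(\Sigma_0\times(-a/2,a/2))$ and equal to $1$ on $\mathcal{L}(\Sigma_0\times(-a/3,a/3))$, and decompose $\psi_\eps=\chi\psi_\eps+(1-\chi)\psi_\eps$. The outer piece $(1-\chi)\psi_\eps$ is supported away from $\Sigma_{\pm\eps}$ (for $\eps<a/4$) and satisfies a regular Poisson equation in $\Real^d$ with right-hand side uniformly bounded in $\sii(\Real^d)$, so classical elliptic regularity on $\Real^d$ yields its $H^2$ bound. For the inner piece, Lemma~\ref{Lem.equivalence} reduces the problem to bounding $u:=\mathcal{U}(\chi\psi_\eps)$ in $H^2$ on each of the three strips $\Sigma_0\times(-a/2,-\eps)$, $\Sigma_0\times(-\eps,\eps)$, $\Sigma_0\times(\eps,a/2)$, where $u$ solves $-\Delta_G u=F$ with $F\in\sii$ uniformly bounded, subject to the interface relations~\eqref{conditions} at $t=\pm\eps$. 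Using a partition of unity on $\Sigma_0$ subordinate to Riemannian normal coordinate patches (in which~\eqref{normal} holds), I would carry out the Nirenberg difference-quotient scheme in the tangential directions: testing the weak formulation with $D^\mu_{-h}(\eta^2 D^\mu_h u)$ for a local cutoff $\eta$ produces a uniform $L^2$ bound on $D^\mu_h\nabla u$, the boundary contributions from the $\alpha_\pm$-terms being absorbed by~\eqref{trace.norm} together with the already established $H^1$ bound. Passing to the limit $h\to 0$ gives tangential second derivatives of $u$ in $\sii$ with a uniform bound; the remaining normal second derivative $\partial_t^2 u$ is then recovered algebraically on each strip from the identity $-\Delta_G u=F$, using $G^{dd}=1$ (by~\eqref{metric}).

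The delicate point, which I expect to be the main obstacle, is the uniformity of all constants as $\eps\to 0$, most notably in the thin middle strip $\Sigma_0\times(-\eps,\eps)$. The key observation is that tangential difference quotients and the coefficients of $-\Delta_G$ depend only on the geometry of $\Sigma_0$ through the metric bounds~\eqref{G.bound}, \eqref{f.bound} and~\eqref{normal}, and therefore produce constants independent of $\eps$. Moreover, the interface conditions at $t=\pm\eps$ are purely tangential and hence invariant under tangential translations, so no extra $\eps$-dependent boundary terms are generated when applying $D^\mu_h$; the final recovery of $\partial_t^2 u$ is pointwise in $t$ and thus also manifestly $\eps$-independent. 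Combining the inner and outer estimates and unfolding via Lemma~\ref{Lem.equivalence} gives~\eqref{res.bound2}.
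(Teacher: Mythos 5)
Your proposal is correct and follows essentially the same route as the paper: interior elliptic regularity away from the interfaces, a tangential (Nirenberg) difference-quotient argument in the curvilinear coordinates induced by $\mathcal{L}$ with constants controlled uniformly via~\eqref{G.bound} and~\eqref{normal}, trace terms absorbed into the dominant second-order term on a strip of $\eps$-independent width, and recovery of $\partial_t^2$ from the block-diagonal form of $-\Delta_G$. The only cosmetic difference is that the paper uses a single cutoff in the $t$-variable rather than an additional partition of unity on $\Sigma_0$.
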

\begin{proof}
The message of the lemma is that the constant~$C$ in~\eqref{res.bound2}
can be made independent of~$\eps$,
which is not \emph{a priori} clear.
Setting $\psi_\eps := R_\eps(z) \Psi$
for every $\Psi \in\sii(\Real^d)$ as above
(recall that~$\psi_\eps$ satisfies~\eqref{weak}),
estimate~\eqref{res.bound2} is equivalent to
the simultaneous validity of the bounds
\begin{align}
  \left\| \psi_\eps \right\|_{H^2(\Omega_\eps^{0})}
  &\leq C \, \|\Psi\|_{\sii(\Real^d)}
  \,,
  \label{res.bound2.1}
  \\
  \left\| \psi_\eps \right\|_{H^2(\Omega_\eps^{\pm})}
  &\leq C \, \|\Psi\|_{\sii(\Real^d)}
  \,,
  \label{res.bound2.2}
  \\
  \left\| \psi_\eps \right\|_{H^2(\Real^d \setminus
  \overline{\Omega_{a/4}^{0}})}
  &\leq C \, \|\Psi\|_{\sii(\Real^d)}
  \,,
  \label{res.bound2.3}
\end{align}
with a constant~$C$ independent of~$\Psi$ and~$\eps$.
Here the sets~$\Omega_\eps^{\pm}$ and~$\Omega_\eps^0$
are defined in~\eqref{Omega.sets}.
Note that $\psi_\eps \in H^2(\Real^d\setminus(\Sigma_{+\eps} \cup \Sigma_{-\eps}))$
is known due to~\eqref{Hamiltonian.domain};
our aim is to establish the uniform estimates
\eqref{res.bound2.1}--\eqref{res.bound2.3}.

Estimate~\eqref{res.bound2.3} follows at once
by the interior regularity of weak solutions of
the elliptic problem
$(H_\eps-z)\psi_\eps=\Psi$ in
$
  \Omega' :=
  \Real^d \setminus \overline{\Omega_{a/4}^{0}}
$;
see, \eg, \cite[Thm.~6.3.1]{Evans}
together with Lemma~\ref{Lem.res.bound1},
recall that~$H_\eps$ acts as the Laplacian in~$\Omega'$
due to~\eqref{Hamiltonian.domain}
and notice that~$\Omega'$ is independent of~$\eps$.
The validity of~\eqref{res.bound2.1} and~\eqref{res.bound2.2}
is less obvious because of
the $\eps$-dependent interface conditions~\eqref{conditions}
and a refined boundary regularity is needed.
Let us sketch the proof of~\eqref{res.bound2.1} with $\eps > 0$.
The proof of~\eqref{res.bound2.2} with $\eps \geq 0$ is analogous.
Our approach is based on elliptic regularity;
see, \eg, \cite[Sec.~6.3]{Evans} to where we refer for more details.

Recalling~\eqref{unitary}, we set $v_\eps := \mathcal{U} \psi_\eps$
and $V := \mathcal{U}\Psi$.
Let $\eta:\Real \to [0,1]$ be a smooth cut-off function,
which is equal to~$1$ on $(-a/2,a/2)$
and to~$0$ outside $(-3a/4,3a/4)$
(we keep to denote by the same symbol~$\eta$
the function $1 \otimes \eta$ on $\Sigma_0 \times (-a,a)$).
In~\eqref{weak}, let us choose the test function~$\varphi$
in the following way
$$
  (\mathcal{U}\varphi)(q,t)
  := \eta(t)^2 \, u(q,t)
  \,,
$$
where $u \in H^1(\Sigma_0 \times (-a,a))$.
Using~\eqref{tube}, the identity~\eqref{weak}
is transferred to
\begin{multline}\label{weak.tube}
  \big(
  \partial_{i} (\eta^2 u), G^{ij} \partial_j v_\eps
  \big)_\mathcal{H}
  + \alpha_+ \int_{\Sigma_0} (\bar{u}v_\eps f)(q,\eps) \, \D\Sigma_0
  + \alpha_- \int_{\Sigma_0} (\bar{u}v_\eps f)(q,-\eps) \, \D\Sigma_0
  \\
  - z \, \big( \eta^2 u, v_\eps \big)_\mathcal{H}
  = \big(\eta^2 u,V\big)_{\mathcal{H}}
  \,,
\end{multline}
where~$\mathcal{H}$ denotes the target Hilbert space in~\eqref{unitary}.

In~\eqref{weak.tube}, we choose
\begin{equation} \label{eq-defu}
  u(q,t) := - \partial_\rho^{-h} \partial_\rho^h v_\eps(q,t)
  \,,
\end{equation}
where $\rho \in \{1,\dots,d-1\}$
and $\partial_\rho^h v_\eps(q,t)$ is
the $\rho^\mathrm{th}$ difference quotient of size $h>0$
(\cf~\cite[Sec.~5.8.2]{Evans})
$$%
\partial_\rho^h v_\eps(q,t)
  := \frac{v_\eps(q^1,\dots, q^\alpha + h, \dots q^{d-1},t)-v_\eps(q,t)}{h}
  \,.
$$
Using the ``integration-by-parts'' rule
for the difference quotients
(\cf~\cite[proof of Thm.~5.8.3]{Evans})
and sending~$h$ to zero,
we get
\begin{multline}\label{after0}
  \big(
  \partial_{i} \partial_\rho v_\eps,
  \eta^2 G^{ij} \partial_j \partial_\rho v_\eps
  \big)_\mathcal{H}
  + \alpha_+ \int_{\Sigma_0} |\partial_\rho v_\eps|^2 (q,\eps) \, \D\Sigma_0
  + \alpha_- \int_{\Sigma_0} |\partial_\rho v_\eps|^2(q,-\eps) \, \D\Sigma_0
  \\
  + b[v_\eps]
  + z \, \big( \eta^2 \partial_\rho^2 v_\eps, v_\eps \big)_\mathcal{H}
  = - \big(\eta^2 \partial_\rho^2 v_\eps,V\big)_{\mathcal{H}}
  \,.
\end{multline}
Here $b[v_\eps]$ is a quadratic form gathering subdominant terms
that can be treated as a perturbation of the first line in~\eqref{after0}
or integrals involving only first-order derivatives of~$v_\eps$.
Recall that, by Lemma~\ref{Lem.res.bound1} and~\eqref{G.bound},
we already know that
\begin{equation}\label{know}
  \|v_\eps\|_{H^1(\Sigma_0 \times (-a,a))}
  \leq C \, \|\Psi\|_{\sii(\Real^d)}
  \,.
\end{equation}
Writing
$$
\begin{aligned}
  \big| \big( \eta^2 \partial_\rho^2 v_\eps, v_\eps \big)_\mathcal{H} \big|
  &\leq \delta \, \|\eta \, \partial_\rho^2 v_\eps\|_{\mathcal{H}}^{2}
  + \delta^{-1} \, \|v_\eps\|_{\mathcal{H}}^{2}
  \\
  \big| \big(\eta^2 \partial_\rho^2 v_\eps,V\big)_{\mathcal{H}} \big|
  &\leq \delta \, \|\eta \, \partial_\rho^2 v_\eps\|_{\mathcal{H}}^{2}
  + \delta^{-1} \, \|V\|_{\mathcal{H}}^{2}
  \,,
\end{aligned}
$$
the first terms on the right hand side
with sufficiently small positive~$\delta$
can be treated as a perturbation of
the first dominant term of~\eqref{after0},
while we have
$
  \|V\|_{\mathcal{H}} = \|\Psi\|_{\sii(\Omega_a)}
  \leq \|\Psi\|_{\sii(\Real^d)}
$
and~\eqref{know}.
In fact, the boundary terms in~\eqref{after0} are also a perturbation
because of the following estimate based on~\eqref{elementary}:
\begin{multline*}
  \left|
  \int_{\Sigma_0} |\partial_\rho v_\eps|^2 (q,\pm\eps) \, \D\Sigma_0
  \right|
  \\
  \leq \delta \,
  \|\partial_t \partial_\rho v_\eps\|_{\sii(\Sigma_0\times(-a/2,a/2))}^2
  + (\delta^{-1}+a^{-1}) \,
  \|\partial_\rho v_\eps\|^2_{\sii(\Sigma_0\times(-a/2,a/2))}
  \,.
\end{multline*}
Summing up, from~\eqref{after0} with help of~\eqref{know}
together with~\eqref{G.bound} and~\eqref{normal},
we conclude with key estimates
\begin{equation}\label{key}
  \|\partial_i \partial_\rho v_\eps\|_{\sii(\Sigma_0\times(-a/2,a/2))}
  \leq C \, \|\Psi\|_{\sii(\Real^d)}
\end{equation}
for every $i \in \{1,\dots,d\}$ and $\rho \in \{1,\dots,d-1\}$.

To get an analogous estimate for $\partial_t^2 v_\eps$,
we employ the fact that,
by~\eqref{Hamiltonian.domain} and~\eqref{tube},
$v_\eps$~satisfies the differential equation
(recall~\eqref{LB})
\begin{equation}\label{ae}
  -|G|^{-1/2} \partial_i (|G|^{1/2} G^{ij} \partial_j v_\eps)
  - z \, v_\eps
  = V
  \qquad \mbox{a.e.\ in $\Sigma_0\times(-\eps,\eps)$}.
\end{equation}
Using the block-diagonal structure of~$G$, see~\eqref{metric},
we can cast~\eqref{ae} into the form
\begin{equation}\label{block-diagonal}
  - \partial_t^2 v_\eps = V + z\, v_\eps
  + |G|^{-1/2} \partial_\mu (|G|^{1/2} G^{\mu\nu} \partial_\nu v_\eps)
  + |G|^{-1/2} (\partial_t |G|^{1/2}) \partial_t v_\eps
  \,,
\end{equation}
where the right hand side contains no second-order
derivative of~$v_\eps$ with respect to~$t$.
Using~\eqref{key} and~\eqref{know},
we can thus conclude with the missing inequality
\begin{equation}\label{key.missing}
  \|\partial_t^2 v_\eps\|_{\sii(\Sigma_0\times(-\eps,\eps))}
  \leq C \, \|\Psi\|_{\sii(\Real^d)} \,.
\end{equation}

From~\eqref{key} and~\eqref{key.missing} together with
the first-order derivatives inequality~\eqref{know},
we have thus obtained the estimate
$
  \|v_\eps\|_{H^2(\Sigma_0\times(-\eps,\eps))}
  \leq C \, \|\Psi\|_{\sii(\Real^d)}
$.
By Lemma~\ref{Lem.equivalence},
we then get an analogous estimate for $\psi_\eps = \mathcal{U}^{-1} v_\eps$
in the Euclidean set
$\Omega_\eps^0 = \mathcal{L}(\Sigma_0\times(-\eps,\eps))$.
This concludes the sketch of the proof of~\eqref{res.bound2.1} with $\eps > 0$.
\end{proof}

Now we are in a position to prove Theorem~\ref{Thm.nrs}.
\begin{proof}[Proof of Theorem~\ref{Thm.nrs}]
Let $z \in \Com$ be such that $\Re z < z_0$,
where~$z_0$ is given by~\eqref{z0}, and $0 < \eps < a/4$.
By Lemma~\ref{Lem.res.bound1}, $z \in \rho(H_\eps)$ for all $\eps \geq 0$.
Given any $\Phi,\Psi \in \sii(\Real^d)$, we set
$\psi_\eps := R_\eps(z) \Psi$ as before
and $\phi_0 := R_0(z)^*\Phi$.
We have
\begin{align}\label{difference}
  \left(\Phi,[R_\eps(z) - R_0(z)]\Psi\right)_{\sii(\Real^d)}
  = \ & \left((H_0^*-\bar{z})\phi_0,\psi_\eps\right)_{\sii(\Real^d)}
  - \left(\phi_0,(H_\eps-z)\psi_\eps\right)_{\sii(\Real^d)}
  \nonumber \\
  = \ & h_0(\phi_0,\psi_\eps) - h_\eps(\phi_0,\psi_\eps)
  \nonumber \\
  =\ & \alpha_+ \left[
  (\phi_0,\psi_\eps)_{\sii(\Sigma_0)}-(\phi_0,\psi_\eps)_{\sii(\Sigma_{+\eps})}
  \right]
  \nonumber \\
  \ &+ \alpha_- \left[
  (\phi_0,\psi_\eps)_{\sii(\Sigma_0)}-(\phi_0,\psi_\eps)_{\sii(\Sigma_{-\eps})}
  \right]
  ,
\end{align}
where the second equality employs the fact that
the form domains of~$H_\eps$ and~$H_0$ coincide.
The boundary terms after the last equality
should be interpreted in the sense of traces~\eqref{trace}.

The unitary transform~\eqref{unitary} enables us to identify
$\sii(\Sigma_{\pm\eps})$ with $\sii(\Sigma_{0})$.
Writing $u_0 := \mathcal{U}\phi_0$ and $v_\eps := \mathcal{U}\psi_\eps$
and recalling~\eqref{elements},
we have
\begin{eqnarray*}
  \lefteqn{
  (\phi_0,\psi_\eps)_{\sii(\Sigma_0)}-(\phi_0,\psi_\eps)_{\sii(\Sigma_{+\eps})}
  }
  \\
  &&= \int_{\Sigma_0} (\bar{u}_0 v_\eps)(q,0) \, \D\Sigma_0
  - \int_{\Sigma_0} (\bar{u}_0 v_\eps)(q,\eps) \, f(q,\eps)\,\D\Sigma_0
  \\
  && = \underbrace{-\int_{\Sigma_0\times(0,\eps)}
  \partial_t(\bar{u}_0 v_\eps)(q,t) \, \D\Sigma_0 \wedge \D t}_{I_1}
  + \underbrace{\int_{\Sigma_0}
  (\bar{u}_0 v_\eps)(q,\eps) \, [1-f(q,\eps)]\,\D\Sigma_0}_{I_2}
  \,.
\end{eqnarray*}

Here the last integral can be estimated as follows
$$
\begin{aligned}
  |I_2|
  &\leq \|\phi_0\|_{\sii(\Sigma_{+\eps})} \, \|\psi_\eps\|_{\sii(\Sigma_{+\eps})}
  \, \sup_{q\in\Sigma_0} \frac{|1-f(q,\eps)|}{f(q,\eps)}
  \\
  &\leq C \, \|\phi_0\|_{H^1(\Real^d)} \, \|\psi_\eps\|_{H^1(\Real^d)}
  \, \sup_{q\in\Sigma_0} \frac{|1-f(q,\eps)|}{f(q,\eps)}
  ,
\end{aligned}
$$
where the second inequality is due to~\eqref{trace.norm}.
Taking into account the explicit formula for~$f$ in~\eqref{Jacobian}
and~\eqref{f.bound},
we see that there is a constant~$C$
(depending on the geometric number~$a$
and the supremum norms of the curvature functions~$K_\mu$)
such that
\begin{equation}\label{eq-supnorm}
  \sup_{q\in\Sigma_0} \frac{|1-f(q,\eps)|}{f(q,\eps)}
  \leq C \, \eps
  \,.
\end{equation}
By Lemma~\ref{Lem.res.bound1}, we have
\begin{equation}\label{Tsa}
  \|\psi_\eps\|_{H^1(\Real^d)} \leq C \, \|\Psi\|_{\sii(\Real^d)}
  \qquad \mbox{and} \qquad
  \|\phi_0\|_{H^1(\Real^d)} \leq C \, \|\Phi\|_{\sii(\Real^d)}
  \,.
\end{equation}
Since~$\phi_0$ is defined via the adjoint of the resolvent of~$H_0$,
it might be useful to mention for the latter inequality
that~$H_0$ satisfies the $\mathcal{T}$-self-adjointness relation
$H_0^* = \mathcal{T}H_0\mathcal{T}$,
where~$\mathcal{T}$ is the complex-conjugation operator.
Summing up,
\begin{equation}\label{I2}
  |I_2| \leq C \, \eps \,
  \|\Phi\|_{\sii(\Real^d)} \, \|\Psi\|_{\sii(\Real^d)}
  \,.
\end{equation}

We now turn to estimating~$I_1$.
First of all, we use the Schwarz inequality to get
$$
  |I_1| \leq
  \|u_0\|_{\sii(\Sigma_0\times(0,\eps))}
  \|\partial_t v_\eps\|_{\sii(\Sigma_0\times(0,\eps))}
  + \|\partial_t u_0\|_{\sii(\Sigma_0\times(0,\eps))}
  \|v_\eps\|_{\sii(\Sigma_0\times(0,\eps))}
  \,.
$$
Here the first term on the right hand side
can be estimated as follows
$$
\begin{aligned}
  \|u_0\|_{\sii(\Sigma_0\times(0,\eps))}^2
  &\leq \eps \sup_{t \in (0,\eps)} \int_{\Sigma_0}
  |u_0(q,t)|^2 \, \D \Sigma_0
  \leq C \, \eps \sup_{t \in (0,\eps)} \int_{\Sigma_t} |\phi_0|^2
  \,,
\end{aligned}
$$
where $C:=1/\inf_{\Sigma_0\times(0,a)} f$.
Using in addition~\eqref{trace.norm} and~\eqref{Tsa},
we eventually obtain
$$
  \|u_0\|_{\sii(\Sigma_0\times(0,\eps))}
  \leq C \, \sqrt{\eps} \, \|\Phi\|_{\sii(\Real^d)}
  \,.
$$
In the same manner, we get
$$
  \|v_\eps\|_{\sii(\Sigma_0\times(0,\eps))}
  \leq C \, \sqrt{\eps} \, \|\Psi\|_{\sii(\Real^d)}
  \,.
$$

The terms $\|\partial_t v_\eps\|_{\sii(\Sigma_0\times(0,\eps))}$
and $\|\partial_t u_0\|_{\sii(\Sigma_0\times(0,\eps))}$
require a bit more careful analysis.
As above, we write
$$
\begin{aligned}
  \|\partial_t u_0\|_{\sii(\Sigma_0\times(0,\eps))}^2
  &\leq \eps \sup_{t \in (0,\eps)} \int_{\Sigma_0}
  |\partial_t u_0(q,t)|^2 \, \D \Sigma_0
  \leq C \, \eps \sup_{t \in (0,\eps)} \int_{\Sigma_t} |\partial_n\phi_0|^2
  \,,
\end{aligned}
$$
where we have also used $\partial_t u_0 = \partial_n\phi_0 \circ \mathcal{L}$.
Now, however, we cannot use~\eqref{trace.norm}
because~$\phi_0$ is not in~$H^2(\Real^d)$.
Nevertheless, it belongs to $H^2(\Omega_0^{+})$,
where the set~$\Omega_0^{+}$ is defined~\eqref{Omega.sets}.
Hence,
$$
  \sup_{t \in (0,\eps)} \int_{\Sigma_t} |\partial_n\phi_0|^2
  \leq \sup_{t \in (0,a/2)} \int_{\Sigma_t} |\partial_n\phi_0|^2
  \leq C \, \|\phi_0\|_{H^2(\Omega_0^{+})}^2
  \,,
$$
where the last inequality is a trace embedding based on~\eqref{elementary}.
Applying Lemma~\ref{Lem.res.bound2}, we eventually get
the desired bound
$$
  \|\partial_t u_0\|_{\sii(\Sigma_0\times(0,\eps))}
  \leq C \, \sqrt{\eps} \, \|\Phi\|_{\sii(\Real^d)}
  \,.
$$

It remains to estimate $\|\partial_t v_\eps\|_{\sii(\Sigma_0\times(0,\eps))}$.
Still, as above, we could also write
$$
\begin{aligned}
  \|\partial_t v_\eps\|_{\sii(\Sigma_0\times(0,\eps))}^2
  &\leq \eps \sup_{t \in (0,\eps)} \int_{\Sigma_0}
  |\partial_t v_\eps(q,t)|^2 \, \D \Sigma_0
  \leq C \, \eps \sup_{t \in (0,\eps)} \int_{\Sigma_t} |\partial_n\psi_\eps|^2
  \,.
\end{aligned}
$$
Now, however, the situation is worse than for~$\phi_0$,
because~$\psi_\eps$ belongs only to $H^2(\Omega_\eps^{0+})$,
where
\begin{equation*}
  \Omega_{\eps}^{0+}
  := \left\{
  \mathcal{L}(q,t) : \, q \in \Sigma_0, \, 0 < t < \eps
  \right\}
  ,
\end{equation*}
is diminishing as $\eps \to 0$.
Consequently, \eqref{elementary}~would give
a bad $\eps$-dependent estimate on the norm of the trace operator
associated with the embedding
$
  H^2(\Omega_\eps^{0+}) \to  H^1(\Sigma_t)
$
with $t \in (0,\eps)$.
Instead, we integrate by parts
$$
\begin{aligned}
  \|\partial_t v_\eps\|_{\sii(\Sigma_0\times(0,\eps))}^2
  =\ &
  \int_{\Sigma_0\times(0,\eps)}
  (\partial_t t) \, |\partial_t v_\eps(q,t)|^2 \, \D \Sigma_0 \wedge \D t
  \\
  =\ & - \int_{\Sigma_0\times(0,\eps)}
  2\, t \, \Re \left[
  \partial_t \bar{v}_\eps(q,t) \, \partial_t^2 v_\eps(q,t)
  \right]
  \, \D \Sigma_0 \wedge \D t
  \\
  \ & + \eps
  \lim_{t \to \eps^-}
  \int_{\Sigma_0} |\partial_t v_\eps(q,t)|^2
  \, \D \Sigma_0
  \\
  \leq\ & C \, \eps \left(
  \|\psi_\eps\|_{H^2(\Omega_{\eps}^{0+})}^2
  + \|\tau_{+\eps}^-\partial_n\psi_\eps\|_{\sii(\Sigma_{+\eps})}^2
  \right)
  \,,
\end{aligned}
$$
where the inequality employs $t \leq \eps$
and the geometric estimates~\eqref{f.bound}
together with
$\partial_t^2 v_\eps = \partial_n^2\psi_\eps \circ \mathcal{L}$.
Recall that the trace operator~$\tau_{+\eps}^-$ is defined in~\eqref{traces}.
The trick is to replace $\tau_{+\eps}^-\partial_n\psi_\eps$
by $\tau_{+\eps}^+\partial_n\psi_\eps$
using the interface condition~\eqref{conditions}
and employ~\eqref{elementary} in
the other set that does not diminish as $\eps \to 0$:
$$
\begin{aligned}
  \|\tau_{+\eps}^-\partial_n\psi_\eps\|_{\sii(\Sigma_{+\eps})}
  &\leq \|\tau_{+\eps}^+\partial_n\psi_\eps\|_{\sii(\Sigma_{+\eps})}
  + |\alpha_+| \, \|\tau_{+\eps}\psi_\eps\|_{\sii(\Sigma_{+\eps})}
  \\
  &\leq C \left(
  \|\psi_\eps\|_{H^2(\Omega_\eps^+)}
  + |\alpha_+| \, \|\psi_\eps\|_{H^1(\Real^d)}
  \right)
  \,.
\end{aligned}
$$
Using Lemma~\ref{Lem.res.bound2}
and~\eqref{Tsa},
we eventually get the desired bound
$$
  \|\partial_t v_\eps\|_{\sii(\Sigma_0\times(0,\eps))}
  \leq C \, \sqrt{\eps} \, \|\Psi\|_{\sii(\Real^d)}
  \,.
$$

Summing up, we have proved
\begin{equation}\label{I1}
  |I_1| \leq C \, \eps \,
  \|\Phi\|_{\sii(\Real^d)} \, \|\Psi\|_{\sii(\Real^d)}
  \,.
\end{equation}
This bound together with~\eqref{I2} implies
$$
  \left|
  (\phi_0,\psi_\eps)_{\sii(\Sigma_0)}-(\phi_0,\psi_\eps)_{\sii(\Sigma_{+\eps})}
  \right|
  \leq C \, \eps \,
  \|\Phi\|_{\sii(\Real^d)} \, \|\Psi\|_{\sii(\Real^d)}
$$
and a similar estimates holds for the other difference
of boundary terms in~\eqref{difference}.
Consequently,
$$
  \left|
  \left(\Phi,[R_\eps(z) - R_0(z)]\Psi\right)_{\sii(\Real^d)}
  \right|
  \leq C \, \eps \,
  \|\Phi\|_{\sii(\Real^d)} \, \|\Psi\|_{\sii(\Real^d)}
  \,,
$$
which proves~\eqref{nrs} for $z \in \Com$ with $\Re z < z_0$.
The extension to other values of~$z$ is standard
(\cf~\cite[Sec.~IV.3.3]{Kato}).
\end{proof}
\begin{remark}
Taking into account Remark~\ref{Rem.duality},
\eqref{difference}~implies the operator identity
\begin{equation}\label{first}
  R_\eps(z) - R_0(z)
  = R_0(z)
  \left[
  (\alpha_+ + \alpha_-) \tau_0^*\tau_0
  - \alpha_+ \tau_{+\eps}^* \tau_{+\eps}
  - \alpha_- \tau_{-\eps}^* \tau_{-\eps}
  \right]
  R_\eps(z)
  \,,
\end{equation}
where
$$
  \tau_t^* : \sii(\Sigma_t) \to H^{-1}(\Real) :
  \{\psi \mapsto \psi \, \delta_{\Sigma_t}\}
  \,.
$$
It is a generalisation of
the \emph{first resolvent identity} known for regular potentials.
\end{remark}
%

\section{Convergence of eigenvalues and eigenfunctions}\label{Sec.efs}
%
In this section, we deduce from Theorem~\ref{Thm.nrs}
a convergence of eigenvalues and eigenfunctions of~$H_\eps$
to eigenvalues and eigenfunctions of~$H_0$ as $\eps \to 0$.
In fact, it is immediately seen that the eigenfunctions
converge in the topology of $\sii(\Real^d)$.
By using the maximum principle in a refined way,
we show the non-trivial property that the convergence
actually holds uniformly in a neighbourhood of~$\Sigma_0$.
This result will be needed in Section~\ref{Sec.evs}
to prove Theorem~\ref{Thm.evs}.

Let $\lambda_0$ stand for a simple eigenvalue of~$H_0$
with the corresponding eigenfunction~$\psi_0$ which is assumed to be normalised
according to the usual requirement for non-self-adjoint
spectral problems
$$
  \big(\overline{\psi_0}, \psi_0\big)_{L^2(\R^d)}
  =\int_{\R^d} \psi_0^2 = 1
  \,.
$$
By a simple eigenvalue we always mean that
of algebraic multiplicity one.
Note that~$\overline{\psi_0}$ represents an eigenfunction
of the adjoint operator~$H_0^*$ corresponding
to the eigenvalue~$\overline{\lambda_0}$.
Define
\begin{equation}\label{circle}
  \mathcal{C}_r := \{ z\in \C \,:\, |z-\lambda_0 | = r  \}
  \,,
\end{equation}
where the radius~$r $ is chosen is such a way
that the circle~$\mathcal{C}_r$ surrounds only one point of
$\sigma_{\mathrm{disc}}(H_0)$,
the discrete spectrum of~$H_0$.
The resolvent convergence proved in the previous
section allows us to claim that there exists $\varepsilon_0 >0$ such that
for any non-negative $\varepsilon < \varepsilon_0 $
the circle $\mathcal{C}_r$ surrounds only one
point $\lambda_\varepsilon$  of
$\sigma_{\mathrm{disc}} (H_\varepsilon )$. Let $P_\varepsilon$ stand for
the eigenprojector
\begin{equation}\label{projector}
P_\varepsilon :=\frac{i}{2\pi }\oint_{\mathcal{C}_r }
R_\varepsilon (z) \, \mathrm{d}z\,,
\end{equation}
where  the integration path   traces out the circle
around in a counterclockwise manner.
Let $\psi_\varepsilon$ stand for the  eigenfunction of $H_\varepsilon$ corresponding
to $\lambda_\varepsilon$ and impose the same normalisation condition
$\big(\overline{\psi_\varepsilon}, \psi_\varepsilon \big)_{L^2(\R^d)}= 1$.
Then the corresponding eigenprojector takes the form
$$
  P_\varepsilon = \big(\overline{\psi_\varepsilon} , \cdot \big)_{L^2 (\R^d)}
  \, \psi_\varepsilon \,.
$$
The following statement is  a simple consequence of
the norm-resolvent convergence
(Theorem~\ref{Thm.nrs})
proved in the previous section.
\begin{corollary}\label{cor-2}
The asymptotics
\begin{equation}\label{eq-P}
  \|P_\varepsilon - P_0\|_{L^2 (\R^d )\to L^2 (\R^d )} = O(\varepsilon )\,
\end{equation}
holds. Consequently, we have
\begin{equation}\label{eq-embedingesta}
  |\lambda_\eps - \lambda_0| =O(\varepsilon)
  \qquad\mbox{and}\qquad
  \| \psi_\varepsilon -\psi_0 \|_{L^2(\R^d )}
  =O(\varepsilon)\,. 
\end{equation}
\end{corollary}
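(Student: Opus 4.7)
The plan is to deduce everything from Theorem~\ref{Thm.nrs} via the Riesz-projector representation~\eqref{projector}. The preliminary point is that the pointwise bound~\eqref{nrs} must be upgraded to a uniform estimate
$$
  \sup_{z\in\mathcal{C}_r}
  \|R_\varepsilon(z) - R_0(z)\|_{L^2(\R^d)\to L^2(\R^d)}
  = O(\varepsilon)
$$
along the compact contour $\mathcal{C}_r \subset \rho(H_0)$. Inspection of the proof of Theorem~\ref{Thm.nrs} shows that the implicit constant depends continuously on~$z$ through the uniform resolvent bounds of Lemma~\ref{Lem.res.bound1} and Lemma~\ref{Lem.res.bound2} (directly when $\Re z < z_0$, and via the resolvent identity otherwise), so compactness of $\mathcal{C}_r$ promotes the pointwise rate to the uniform one. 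I expect this uniformity verification to be the only genuinely technical step; everything that follows is a standard rank-one perturbation argument.

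Granting the uniform bound, estimate~\eqref{eq-P} is immediate by subtracting the two instances of~\eqref{projector}:
$$
  \|P_\varepsilon - P_0\|_{L^2(\R^d)\to L^2(\R^d)}
  \leq \frac{1}{2\pi}\oint_{\mathcal{C}_r}
  \|R_\varepsilon(z) - R_0(z)\|_{L^2(\R^d)\to L^2(\R^d)}\,|\mathrm{d}z|
  = O(\varepsilon)\,.
$$

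To extract the eigenfunction asymptotics, I would apply $P_\varepsilon - P_0$ to~$\psi_0$. Setting $c_\varepsilon := (\overline{\psi_\varepsilon},\psi_0)_{L^2(\R^d)}$, we have $P_\varepsilon\psi_0 = c_\varepsilon\psi_\varepsilon$ and $P_0\psi_0 = \psi_0$, so \eqref{eq-P} yields $c_\varepsilon\psi_\varepsilon = \psi_0 + O(\varepsilon)$ in $L^2(\R^d)$. Evaluating the bilinear pairing $(\overline{\psi_0},\,\cdot\,)_{L^2(\R^d)}$ of both sides and using the normalisation $\int\psi_0^2 = 1$ gives $c_\varepsilon^2 = 1 + O(\varepsilon)$. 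Fixing the phase of~$\psi_\varepsilon$ so that $c_\varepsilon \to 1$, we obtain $c_\varepsilon = 1 + O(\varepsilon)$, whence $\|\psi_\varepsilon - \psi_0\|_{L^2(\R^d)} = O(\varepsilon)$.

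For the eigenvalue, I would use the analogous contour identity
$$
  \lambda_\varepsilon P_\varepsilon - \lambda_0 P_0
  = \frac{i}{2\pi}\oint_{\mathcal{C}_r} z\,\bigl[R_\varepsilon(z) - R_0(z)\bigr]\,\mathrm{d}z
  = O(\varepsilon)
$$
in operator norm, obtained from $H_\varepsilon R_\varepsilon(z) = I + z R_\varepsilon(z)$ together with $\oint I\,\mathrm{d}z = 0$. Applying this identity to~$\psi_0$ and pairing with~$\overline{\psi_0}$ in the bilinear form gives $\lambda_\varepsilon c_\varepsilon^2 = \lambda_0 + O(\varepsilon)$, and dividing by $c_\varepsilon^2 = 1 + O(\varepsilon)$ yields $|\lambda_\varepsilon - \lambda_0| = O(\varepsilon)$. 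Beyond the uniformity point, the only nuisance is that the non-self-adjoint normalisation $\int\psi_\varepsilon^2 = 1$ is bilinear rather than sesquilinear, so $c_\varepsilon^2$ (and not $|c_\varepsilon|^2$) appears throughout; this is harmless but makes the sign convention for~$\psi_\varepsilon$ something that must be stated explicitly.
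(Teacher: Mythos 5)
Your argument is correct and follows exactly the route the paper intends: the paper states this corollary without proof as a ``simple consequence'' of Theorem~\ref{Thm.nrs}, and your contour-integral comparison of the Riesz projectors~\eqref{projector}, together with the bilinear pairing against $\overline{\psi_0}$ to control $c_\varepsilon$ and $\lambda_\varepsilon$, is the standard argument being alluded to. The only points worth stating explicitly are the ones you already flag (uniformity of the $O(\varepsilon)$ constant over the compact contour) plus the small observation that $\|\psi_\varepsilon\|_{L^2}$ is bounded because $\psi_\varepsilon = c_\varepsilon^{-1}(\psi_0 + O(\varepsilon))$ with $c_\varepsilon$ bounded away from zero.
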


The rest of this section is devoted to showing
that the convergence of eigenfunctions holds in a better topology,
at least in a neighbourhood of~$\Sigma_0$.
First of all, we establish a regularity of eigenfunctions.
\begin{proposition}\label{Lem.regularity}
Given $\eps \geq 0$, let~$\psi_\eps$ denote
an eigenfunction of~$H_\eps$. Then
\begin{equation}\label{eq-ellregef}
  \psi_\varepsilon \in
  H^m\big(\R^d \setminus (\Sigma_{+\varepsilon} \cup \Sigma_{-\varepsilon})\big)
  \qquad \mbox{for all} \qquad m\in \N
  \,.
\end{equation}
\end{proposition}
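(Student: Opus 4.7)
I would prove the proposition by induction on $m \geq 2$. The base case $m=2$ is a direct consequence of the explicit description of the operator domain in \eqref{Hamiltonian.domain} (which applies to any element of $\Dom(H_\eps)$, in particular to $\psi_\eps$). Assuming the claim holds for some $m \geq 2$, I want to deduce it for $m+1$.

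Away from the compact tubular region $\overline{\Omega_a}$ containing $\Sigma_{\pm\eps}$, the eigenvalue equation $-\Delta\psi_\eps = \lambda_\eps \psi_\eps$ holds in the classical distributional sense on a smooth open set, so standard interior elliptic regularity (\cf~\cite[Thm.~6.3.1]{Evans}) together with the inductive hypothesis yields $\psi_\eps \in H^{m+2}_{\mathrm{loc}}$ there. Global $L^2$-integrability of the higher derivatives follows because $\lambda_\eps \notin \sigma_{\mathrm{ess}}(H_\eps) = [0,+\infty)$, so $\psi_\eps$ decays exponentially at infinity by a standard Agmon-type estimate applied to the equation outside the compact support of the delta perturbations.

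The substantive work is inside the tube $\Omega_a$, where the hypersurfaces live. I would reuse the difference-quotient machinery from the proof of Lemma~\ref{Lem.res.bound2}. Passing to the curvilinear coordinates $(q,t)$, writing $v_\eps := \mathcal{U}\psi_\eps$, the eigenfunction satisfies the same weak formulation \eqref{weak.tube} with $z=\lambda_\eps$ and $V=0$. Testing with $\phi := -\partial_\rho^{-h}\partial_\rho^h v_\eps$ (localised by a smooth cut-off in $t$) for tangential indices $\rho \in \{1,\dots,d-1\}$ and letting $h \to 0$ delivers an analogue of the tangential estimate \eqref{key}. Iterating this argument on successive tangential derivatives $\partial_\rho^k v_\eps$, each of which satisfies the analogous equation up to lower-order commutator terms from differentiating the smooth coefficients $G^{ij}$ and the boundary weights $f(q,\pm\eps)$, yields all purely tangential derivatives of $v_\eps$ up to order $m+1$ in each of the three strips $\Sigma_0\times(-a,-\eps)$, $\Sigma_0\times(-\eps,\eps)$, $\Sigma_0\times(\eps,a)$. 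The reason this works is algebraic: tangential difference quotients preserve the hypersurfaces $\Sigma_{\pm\eps}$, so the boundary integrals in \eqref{weak.tube} transform covariantly without producing interface obstructions.

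Normal regularity is then recovered purely algebraically: the block-diagonal form of the eigenvalue equation (the analogue of \eqref{block-diagonal} with $V$ replaced by $\lambda_\eps v_\eps$) isolates $\partial_t^2 v_\eps$ as a polynomial expression in $v_\eps$, $\partial_t v_\eps$ and tangential derivatives. Differentiating this identity repeatedly in $t$ and $q$ trades every additional normal derivative for tangential derivatives already controlled by the previous step, and the $H^{m+1}$ bound on each strip follows. Lemma~\ref{Lem.equivalence} (whose straightforward extension to $H^{m+1}$ is again a consequence of \eqref{metric} and \eqref{normal}) transfers the estimate back to the Euclidean setting, closing the induction. The main technical obstacle is the bookkeeping for the iterated difference-quotient argument: each new tangential derivative generates further commutator terms from the smooth but curvature-dependent coefficients of $-\Delta_G$; exactly as in Lemma~\ref{Lem.res.bound2}, these are controllable perturbations of the principal gradient form, so no genuinely new analytic difficulty appears.
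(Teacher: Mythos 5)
Your proposal is correct and follows essentially the same route as the paper: base case $m=2$ from the domain characterisation \eqref{Hamiltonian.domain}, interior elliptic regularity away from the interfaces, a tangential difference-quotient bootstrap in the curvilinear coordinates (the tangential derivatives satisfy the same transmission problem with a controlled right-hand side), and recovery of the normal derivatives algebraically from the block-diagonal form \eqref{block-diagonal}. The only cosmetic difference is that the paper packages each bootstrap step as an application of Lemma~\ref{Lem.res.bound2} to the differentiated problem rather than re-running the difference-quotient machinery, and it does not need your Agmon-decay remark since uniform interior estimates already give global $H^m$ on the exterior region.
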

\begin{proof}
We have $H_\eps\psi_\eps = \lambda_\eps \psi_\eps$,
where $\lambda_\eps \in \Com$ is the eigenvalue
and $\psi_\eps \in \Dom(H_\eps)$.
For $m=2$ the claim of the lemma follows from the characterisation
of the operator domain~\eqref{Hamiltonian.domain}.
Starting from the definition of the operator~$H_\eps$
through its quadratic form~\eqref{form} defined
on the Sobolev space~$H^1(\Real^d)$,
the $H^2$-regularity outside
$\Sigma_{+\varepsilon} \cup \Sigma_{-\varepsilon}$
is actually established by our Lemma~\ref{Lem.res.bound2}.
For the present eigenvalue problem,
we can write
\begin{equation}\label{iterate}
  (H_\eps-z)\psi_\eps
  = (\lambda_\eps-z) \psi_\eps
  =: \Psi_\eps
  \,,
\end{equation}
where $z$~is any number from the resolvent set of~$H_\eps$.
Recalling that~$H_\eps$ acts as the Laplacian
outside $\Sigma_{+\varepsilon} \cup \Sigma_{-\varepsilon}$,
from elliptic regularity theory (see, \eg, \cite[Thm.~6.3.2]{Evans}),
we immediately get
$
  \psi_\eps \in H^m (\Real^d \setminus \overline{\Omega_{a/4}^0})
$
for all $m \in \N$.
It remains to show the $H^m$-regularity close
to the parallel hypersurfaces
$\Sigma_{+\varepsilon} \cup \Sigma_{-\varepsilon}$.

Let us comment on the proof for $\eps=0$.
The case of positive~$\eps$ is proved analogously.
We refer to \cite[Sec.~6.3]{Evans} for more details
on this type of elliptic-regularity-type arguments.
Setting $v := \mathcal{U} \psi_0$ and
$V := \mathcal{U} \Psi_0 = (\lambda_\eps-z) v$,
where~$\mathcal{U}$ is the unitary transform~\eqref{unitary}
implementing the natural curvilinear coordinates
in a vicinity of~$\Sigma_0$,
\eqref{iterate}~yields a weak formulation of the problem
\begin{equation}\label{iterate.curved}
\left\{
\begin{aligned}
  (-\Delta_G-z)v
  &= V
  &&\mbox{in} \quad
  \Sigma_{0} \times (-a,a)
  \,,
  \\
  v(q,0^+) - v(q,0^-)
  &= 0
  &&\mbox{on} \quad
  \Sigma_{0}
  \,,
  \\
  \partial_t v(q,0^+) - \partial_t v(q,0^-)
  &= (\alpha_+ +\alpha _- ) v(q,0)
  &&\mbox{on} \quad
  \Sigma_{0}
  \,,
\end{aligned}
\right.
\end{equation}
where the Laplace-Beltrami operator~$-\Delta_G$ acts as in~\eqref{LB}.
Once we know that the right hand side~$V$ belongs to
$H^2\big(\Sigma_0 \times [(-a,0)\cup(0,a)]\big)$,
we can differentiate~\eqref{iterate.curved}
(in the sense of weak derivatives)
and obtain that the derivative $\partial_\rho v$
with $\rho \in \{1,\dots,d-1\}$
again satisfies the same problem~\eqref{iterate.curved},
including the same interface conditions,
but with a changed right hand side $V' \in \sii(\Sigma_0 \times (-a/2,a/2))$.
By applying Lemma~\ref{Lem.res.bound2}, we deduce
$\partial_\rho v \in H^2\big(\Sigma_0 \times [(-a/2,0)\cup(0,a/2)]\big)$.
The fact that also respective restrictions of~$\partial_t^3 v$
belong to $\sii(\Sigma_0 \times (0,a/2))$
and $\sii(\Sigma_0 \times (-a/2,0))$
can be then shown from the differential equation
that~$\partial_\rho v$ satisfies almost everywhere,
by writing as in~\eqref{block-diagonal}.
Hence, we have established
$
  v \in H^3\big(\Sigma_0 \times [(-a/2,0)\cup(0,a/2)]\big)
$.
In particular, $V' \in H^2(\Sigma_0 \times (-a/2,a/2))$.
Repeating this argument, we eventually obtain
$
  v \in H^m\big(\Sigma_0 \times [(-a/2,0)\cup(0,a/2)]\big)
$
for all $m \in \N$.
\end{proof}

The proposition has the usual corollary that the eigenfunctions
are smooth outside the interface hypersurfaces.
\begin{corollary}\label{Corol.regularity}
Let~$\psi_\eps$ denote an eigenfunction of~$H_\eps$.
Then~$\psi_\eps$ is continuous in~$\Real^d$ and
\begin{equation}\label{regularity}
  \psi_\eps \in
\begin{cases}
  C^{\infty}\big(\overline{\Omega}\big)
  \cap C^{\infty}\big(\overline{\Real^d\setminus\Omega}\big)
  &\mbox{if} \quad \eps = 0 \,,
  \\
  C^{\infty}\big(\overline{\Omega_\eps^0}\big)
  \cap C^{\infty}\big(\overline{\Real^d\setminus\Omega_\eps^0}\big)
  &\mbox{if} \quad \eps > 0 \,.
\end{cases}
\end{equation}
\end{corollary}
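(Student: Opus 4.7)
The plan is to derive the corollary directly from Proposition~\ref{Lem.regularity} by combining classical Sobolev embeddings on each connected component of the complement of the interfaces with the matching of boundary traces forced by $\psi_\eps \in H^1(\Real^d)$. I would first observe that Proposition~\ref{Lem.regularity} already yields $\psi_\eps \in H^m\big(\Real^d \setminus (\Sigma_{+\eps} \cup \Sigma_{-\eps})\big)$ for every $m \in \N$, so the issue is purely one of converting Sobolev regularity into classical smoothness up to the interfaces, plus establishing continuity across them.

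For the smoothness statement with $\eps = 0$, I would restrict $\psi_0$ separately to the two open components of $\Real^d \setminus \Sigma_0$, namely $\Omega$ and $\Real^d \setminus \overline{\Omega}$. The interior domain $\Omega$ is bounded with smooth boundary, so the Sobolev embedding $H^m(\Omega) \hookrightarrow C^k(\overline{\Omega})$, valid for any $k \in \N$ once $m > k + d/2$, together with Proposition~\ref{Lem.regularity} gives $\psi_0 \in C^\infty(\overline{\Omega})$. For the unbounded exterior, I would work locally: smoothness in the interior of $\Real^d \setminus \overline{\Omega}$ follows from interior elliptic regularity applied to the classical equation $-\Delta\psi_0 = \lambda_0 \psi_0$, while smoothness up to $\Sigma_0$ from the outside is obtained by applying the same Sobolev embedding on bounded subsets of the form $(B_R \setminus \overline{\Omega})$, where $B_R \subset \Real^d$ is any large ball containing $\Sigma_0$. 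The case $\eps > 0$ proceeds identically, the only difference being that the complement of $\Sigma_{+\eps} \cup \Sigma_{-\eps}$ has three connected pieces, whose closures are grouped as $\overline{\Omega_\eps^0}$ and $\overline{\Real^d \setminus \Omega_\eps^0}$ in the statement.

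The continuity of $\psi_\eps$ on all of $\Real^d$ then follows from the membership $\psi_\eps \in H^1(\Real^d)$, which is enforced by the form domain~\eqref{form}. Indeed, on each smooth hypersurface $\Sigma_{\pm\eps}$ (or $\Sigma_0$ when $\eps = 0$), the one-sided traces of an $H^1$-function from both sides of the hypersurface coincide; this is already encoded in the fact that the interface conditions~\eqref{conditions} prescribe only a jump of the normal derivative, not of the function itself. Combined with the $C^\infty$-regularity up to the interfaces established in the previous step, this gives a single continuous representative on $\Real^d$.

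No genuine obstacle arises in this argument, since all ingredients, namely Proposition~\ref{Lem.regularity}, Sobolev embedding on bounded smooth domains, and interior elliptic regularity, are already available. The only technical point deserving attention is the unboundedness of the exterior component, which is resolved by the standard localisation on bounded subdomains sketched above; the $L^2$-decay of eigenfunctions at infinity is not needed for the local smoothness asserted in~\eqref{regularity}.
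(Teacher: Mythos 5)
Your argument is correct and follows essentially the same route as the paper: Proposition~\ref{Lem.regularity} plus the Sobolev embedding on each component of the complement of the interfaces, with continuity across them deduced from $\psi_\eps\in H^1(\Real^d)$. The extra care you take with the unbounded exterior component (localising to $B_R\setminus\overline{\Omega}$) is a reasonable refinement of the paper's direct citation of the embedding theorem, but not a different method.
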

\begin{proof}
By Proposition~\ref{Lem.regularity},
we have
$
  \psi_0
  \in H^m(\R^d \setminus \Sigma_0)
$
for every positive integer~$m$.
Hence, by the Sobolev embedding theorem
(see, \eg, \cite[Thm.~5.4]{Adams}),
$
  \psi_0 \in C^{k}(\overline{\Omega})
  \cap C^{k}(\overline{\Real^d \setminus \Omega})
$
for each positive integer~$k$.
This proves~\eqref{regularity} for $\eps =0$.
The continuity follows from the fact that~$\psi_0$
as an element of the form domain~$\Dom(h_0)$
belongs to $H^1(\Real^d)$.
The claims for positive~$\eps$ are proved analogously.
\end{proof}
As a consequence of this corollary,
the eigenvalue problem $H_\eps \psi_\eps = \lambda_\eps \psi_\eps$
can be considered in a classical sense.
Setting
\begin{equation}\label{efs.diff}
  \phi_\eps := \psi_\eps - \psi_0
\end{equation}
and combining the eigenvalue equations for $\eps>0$ and $\eps=0$,
we see that~$\phi_\eps$ with positive~$\eps$
is a continuous and piecewise smooth solution of
the classical boundary value problem
\begin{equation}\label{classical}
\left\{
\begin{aligned}
  - \Delta \phi_\eps - \lambda_\eps \phi_\eps
  &= (\lambda_\eps - \lambda_0) \psi_0
  && \mbox{in}\quad
  \Real^d \setminus (\Sigma_{+\eps} \cup \Sigma_{-\eps}\cup \Sigma_0)
  \,,
  \\
  \tau_{\pm\eps}^+ \partial_n \phi_\eps
  - \tau_{\pm\eps}^- \partial_n \phi_\eps
  - \alpha_\pm \tau_{\pm\eps} \phi_\eps
  &= \alpha_\pm \tau_{\pm\eps} \psi_0
  && \mbox{on} \quad
  \Sigma_{\pm \eps}
  \,,
  \\
  \tau_{+0}^+ \partial_n \phi_\eps
  - \tau_{-0}^- \partial_n \phi_\eps
  &= -(\alpha_+ + \alpha_-) \tau_{0} \psi_0
  && \mbox{on} \quad
  \Sigma_{0}
  \,.
\end{aligned}
\right.
\end{equation}

To establish the uniform convergence of eigenfunctions,
we use the maximum principle following the ideas of~\cite{FK4}.
The first ingredient is a version of the mean value theorem
in the present setting.
\begin{lemma}\label{Lem.MVT}
For every $x \in \Real^d$ and $r > 0$, we have the identity
\begin{eqnarray}\label{MVT}
  \lefteqn{\phi_\eps(x) =
  \frac{1}{|\partial B_r|} \int_{\partial B_r} \phi_\eps}
  \\
  &&
  + \int_0^r \frac{\D\rho}{|\partial B_\rho|}
  \Bigg[
  \lambda_\eps \int_{B_\rho} \phi_\eps
  - \alpha_+ \int_{\Sigma_{+\eps}^\rho} \phi_\eps
  - \alpha_- \int_{\Sigma_{-\eps}^\rho} \phi_\eps
  \nonumber \\
  && +
  (\lambda_\eps-\lambda_0) \int_{B_\rho} \psi_0
  - \alpha_+ \int_{\Sigma_{+\eps}^\rho} \psi_0
  - \alpha_- \int_{\Sigma_{-\eps}^\rho} \psi_0
  + (\alpha_+ + \alpha_-) \int_{\Sigma_{0}^\rho} \psi_0
  \Bigg]
  \,,
  \nonumber
\end{eqnarray}
where~$\phi_\eps$ denotes the difference of eigenfunctions~\eqref{efs.diff},
$B_r \equiv B_r(x)$ is the open ball
of radius~$r$ centred at~$x$,
$|\partial B_r|$~stands for the $(d-1)$-dimensional
Hausdorff measure of its boundary
and $\Sigma_{\pm\eps}^r := \Sigma_{\pm\eps} \cap B_r$.
\end{lemma}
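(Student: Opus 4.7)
The plan is to adapt the classical derivation of the mean value property for harmonic functions to the present piecewise smooth setting. For fixed $x \in \Real^d$ and $\rho > 0$, set
$$
  M(\rho) := \frac{1}{|\partial B_\rho|} \int_{\partial B_\rho(x)} \phi_\eps \,.
$$
Since $\phi_\eps$ is continuous on $\Real^d$ by Corollary~\ref{Corol.regularity}, $M(\rho) \to \phi_\eps(x)$ as $\rho \to 0^+$. The identity~\eqref{MVT} is therefore equivalent to
$$
  \phi_\eps(x) - M(r) = -\int_0^r M'(\rho) \, \D\rho \,,
$$
so the task reduces to computing $-M'(\rho) \, |\partial B_\rho|$ and recognising the bracketed integrand of~\eqref{MVT}.

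First I would parametrise $\partial B_\rho$ by $y = x + \rho\, \omega$ with $\omega \in S^{d-1}$ and differentiate under the integral to obtain $M'(\rho) = \frac{1}{|\partial B_\rho|} \int_{\partial B_\rho} \partial_n \phi_\eps$, where $n$ is the outer unit normal to $B_\rho$. Next, since $\phi_\eps$ is smooth in each of the open regions carved out by $\Sigma_{+\eps} \cup \Sigma_{-\eps} \cup \Sigma_0$ (Corollary~\ref{Corol.regularity}) and its normal derivative admits two-sided traces on each of these interfaces, I would apply the divergence theorem piecewise to $B_\rho$. Adding up the contributions from the regions on either side of each interface yields the identity
$$
  \int_{\partial B_\rho} \partial_n \phi_\eps = \int_{B_\rho} \Delta \phi_\eps + \sum_\Sigma \int_{\Sigma \cap B_\rho} \bigl[\partial_n \phi_\eps\bigr] \,,
$$
where the sum runs over $\Sigma \in \{\Sigma_{+\eps}, \Sigma_{-\eps}, \Sigma_0\}$ and $[\partial_n \phi_\eps]$ denotes the jump of the normal derivative from the inner to the outer side in the orientation induced by the outer normal to~$\Omega$.

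The interface conditions in~\eqref{classical} then prescribe these jumps: across $\Sigma_{\pm\eps}$ the second line of~\eqref{classical} gives $[\partial_n \phi_\eps] = \alpha_\pm (\tau_{\pm\eps} \phi_\eps + \tau_{\pm\eps} \psi_0)$, while across $\Sigma_0$ the third line gives $[\partial_n \phi_\eps] = -(\alpha_+ + \alpha_-) \, \tau_0 \psi_0$. Substituting these jumps together with the bulk identity $\Delta \phi_\eps = -\lambda_\eps \phi_\eps - (\lambda_\eps - \lambda_0) \psi_0$ from the first line of~\eqref{classical} and multiplying by $-1$ reproduces exactly the bracketed expression in~\eqref{MVT}. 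A final integration from $0$ to $r$ together with $M(0^+) = \phi_\eps(x)$ gives the claim.

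The main obstacle I anticipate is purely one of bookkeeping: the sign conventions in each piecewise application of the divergence theorem must be aligned consistently with the outer-normal orientation used by the traces $\tau_{\pm\eps}^\pm$ in~\eqref{classical}. A mild secondary issue is that for exceptional pairs $(x,\rho)$ the sphere $\partial B_\rho$ may fail to meet the interfaces transversally; however, such radii form a set of Lebesgue measure zero and both sides of~\eqref{MVT} depend continuously on $r$, so the identity for these special radii follows by approximation.
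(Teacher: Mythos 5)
Your proposal is correct and follows essentially the same route as the paper: integrate the differential equation of~\eqref{classical} over $B_\rho$, apply the divergence theorem piecewise so that the interface conditions supply the jump terms, and treat the resulting sphere integral $\int_{\partial B_\rho}\partial_\nu\phi_\eps$ as the derivative of the spherical average, exactly as in the classical mean value theorem. The sign bookkeeping you flag works out as you state, and your remark on exceptional radii handles the only technical subtlety.
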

\begin{proof}
The formula follows by integrating the differential equation of~\eqref{classical}
in the ball~$B_\rho$ of radius $\rho \in (0,r]$,
using the interface conditions of~\eqref{classical}
after an application of the divergence theorem
and handling the boundary term $\int_{\partial B_\rho} \partial\phi_\eps/\partial\nu$,
with~$\nu$ denoting the outward unit normal to~$\partial B_\rho$	,
as in the classical mean value theorem, see \cite[Thm.~2.1]{Gilbarg-Trudinger}.
\end{proof}

To handle the first term on the right hand side of~\eqref{MVT},
we use the following elementary result (\cite[Lem.~3.14]{FK4}).
\begin{lemma}\label{Lem.Fubini}
Let $\phi \in \sii(\Real^d)$ and $\delta > 0$.
For every $x \in \Real^d$,
there exists $r =r(x,\phi,\delta) \in (0,\delta]$ such that
$$
  \frac{1}{|\partial B_r|} \int_{\partial B_r} |\phi|
  \leq \frac{1}{|B_\delta|^{1/2}} \,
  \|\phi\|_{\sii(B_\delta)}
  \,.
$$
Here~$|B_r|$ denotes the $d$-dimensional Lebesgue measure
of the ball~$B_r$.
\end{lemma}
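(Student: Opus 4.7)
The plan is to argue by contradiction, using Cauchy--Schwarz on each sphere combined with the polar-coordinate decomposition of the $L^2$-norm on the ball. First, the trivial case $\phi = 0$ a.e.\ in $B_\delta$ can be disposed of immediately (any $r \in (0,\delta]$ works since both sides of the claimed inequality vanish), so we may assume $\|\phi\|_{L^2(B_\delta)} > 0$.

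Suppose, for contradiction, that for every $r \in (0,\delta]$ one has the reverse strict inequality
$$
  \frac{1}{|\partial B_r|}\int_{\partial B_r}|\phi|
  \;>\;
  \frac{1}{|B_\delta|^{1/2}}\,\|\phi\|_{L^2(B_\delta)}.
$$
By Cauchy--Schwarz on the sphere, $\bigl(\int_{\partial B_r}|\phi|\bigr)^{2} \leq |\partial B_r|\int_{\partial B_r}|\phi|^{2}$, which after squaring the assumed inequality and multiplying through yields
$$
  \frac{|\partial B_r|}{|B_\delta|}\,\|\phi\|_{L^2(B_\delta)}^{2}
  \;<\;
  \int_{\partial B_r}|\phi|^{2}
$$
for every $r \in (0,\delta]$.

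Now integrate this pointwise-in-$r$ inequality over $r \in (0,\delta)$. On the left, $\int_0^\delta|\partial B_r|\,dr = |B_\delta|$ by the standard polar-coordinate formula, so the left-hand side integrates to $\|\phi\|_{L^2(B_\delta)}^{2}$. On the right, Fubini (or the coarea formula applied to the distance-to-$x$ function) gives $\int_0^\delta\!\int_{\partial B_r}|\phi|^{2}\,d\sigma\,dr = \|\phi\|_{L^2(B_\delta)}^{2}$. We therefore obtain $\|\phi\|_{L^2(B_\delta)}^{2} < \|\phi\|_{L^2(B_\delta)}^{2}$, which is the desired contradiction.

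This is really an elementary averaging lemma and no serious obstacle is expected; the only point to watch is that the strict inequality is preserved under integration, which is why we first excluded the trivial case $\phi \equiv 0$ on $B_\delta$. One could equivalently prove the statement directly (without contradiction) by observing that the mean $r \mapsto |\partial B_r|^{-1}\int_{\partial B_r}|\phi|$ cannot exceed its $L^2$-weighted average over $(0,\delta)$ everywhere, but the contradiction route is the cleanest to write.
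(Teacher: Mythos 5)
Your proof is correct and follows exactly the route the paper sketches for this lemma: assume the reverse inequality for all $r\in(0,\delta]$, apply the Schwarz inequality on each sphere, and integrate in $r$ via the coarea formula to contradict $\|\phi\|_{L^2(B_\delta)}^2=\|\phi\|_{L^2(B_\delta)}^2$. The paper leaves these details to the reader; your write-up simply fills them in (and the preliminary exclusion of $\phi\equiv 0$ on $B_\delta$ is harmless but not actually needed, since a strictly positive integrand over $(0,\delta)$ already yields a strict inequality after integration).
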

\begin{proof}
Assume by contradiction that there exists a point $x \in \Real^d$
such that for all $r \in (0,\delta]$ the reverse inequality holds.
Then one easily arrives at a contradiction by using in addition
the coarea formula and the Schwarz inequality.
\end{proof}

Now we are in a position to establish the uniform convergence of eigenfunctions.
While Proposition~\ref{Lem.regularity}
and its Corollary~\ref{Corol.regularity}
deal with \emph{any} eigenfunctions of~$H_\eps$,
from now on we assume again that~$\psi_\eps$ and~$\psi_0$
are eigenfunctions of~$H_\eps$ and~$H_0$, respectively,
corresponding to simple eigenvalues~$\lambda_\eps$ and~$\lambda_0$
as described in the beginning of this section.
\begin{theorem}\label{Thm.uniform}
We have
\begin{equation}\label{uniform}
  \|\psi_\eps - \psi_0\|_{L^\infty(\Sigma_{\pm\eps})} = O(\eps)
  \,.
\end{equation}
\end{theorem}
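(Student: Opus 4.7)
The plan is to combine the mean-value identity of Lemma~\ref{Lem.MVT} with the $\sii$-convergence from Corollary~\ref{cor-2} via a self-consistent bootstrap argument, in the spirit of \cite{FK4}. Set $M_\eps := \|\phi_\eps\|_{L^\infty(\Sigma_{+\eps}\cup\Sigma_{-\eps})}$; the aim is to show $M_\eps=O(\eps)$. I would fix any $x \in \Sigma_{+\eps}$ (the case $\Sigma_{-\eps}$ being analogous), and, for a small positive constant $\delta$ to be chosen independently of $\eps$ at the end, apply Lemma~\ref{Lem.Fubini} to select a radius $r=r(x,\phi_\eps,\delta) \in (0,\delta]$ for which the spherical mean of $|\phi_\eps|$ over $\partial B_r(x)$ is at most $|B_\delta|^{-1/2}\|\phi_\eps\|_{\sii(\R^d)}=O(\eps)$ by Corollary~\ref{cor-2}. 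Then I would apply Lemma~\ref{Lem.MVT} at $x$ with this~$r$ and estimate each term of the bracket in~\eqref{MVT}.

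The $\psi_0$-contributions can be controlled directly. The volume piece $(\lambda_\eps-\lambda_0)\int_{B_\rho}\psi_0$ is $O(\eps\rho^d)$ because $|\lambda_\eps-\lambda_0|=O(\eps)$ and $\psi_0$ is bounded (Corollary~\ref{Corol.regularity}). The three surface $\psi_0$-integrals regroup, by cancellation, as
\begin{equation*}
 \alpha_+ \Bigl( \int_{\Sigma_0^\rho} \psi_0 - \int_{\Sigma_{+\eps}^\rho}\psi_0 \Bigr) + \alpha_-\Bigl( \int_{\Sigma_0^\rho}\psi_0 - \int_{\Sigma_{-\eps}^\rho}\psi_0\Bigr),
\end{equation*}
each parenthesis being $O(\eps\,\rho^{d-1})$ because $\psi_0$ is $C^\infty$ up to $\Sigma_0$ from either side (Corollary~\ref{Corol.regularity}) together with the Jacobian estimate $f(q,\pm\eps)=1+O(\eps)$ (cf.~\eqref{Jacobian},~\eqref{eq-supnorm}). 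Similarly the volume term $\lambda_\eps\int_{B_\rho}\phi_\eps$ is estimated by Cauchy-Schwarz together with $\|\phi_\eps\|_{\sii(\R^d)}=O(\eps)$ (and, where the resulting integral fails to be integrable near $\rho=0$ in high dimensions, by interpolating against the uniform $L^\infty$-bound on $\phi_\eps$ obtained by iterating Lemma~\ref{Lem.res.bound2} and the Sobolev embedding). After division by $|\partial B_\rho|\sim\rho^{d-1}$ and integration over $\rho\in(0,r]$, all these contributions are $O(\eps)$.

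The surface $\phi_\eps$-integrals will be handled \emph{self-referentially} by $|\alpha_\pm|\int_{\Sigma_{\pm\eps}^\rho}|\phi_\eps|\leq CM_\eps\rho^{d-1}$, contributing $CrM_\eps\leq C\delta M_\eps$ after division and integration. Gathering everything and taking the supremum of $|\phi_\eps(x)|$ over $x \in \Sigma_{+\eps}\cup\Sigma_{-\eps}$ yields
$$ M_\eps \leq C'\eps + C\delta M_\eps. $$
Choosing $\delta$ small enough that $C\delta \leq 1/2$ absorbs the bootstrap term and delivers $M_\eps \leq 2C'\eps$, which is~\eqref{uniform}. The main obstacle is precisely this self-referential surface term: no standard trace inequality converts $\|\phi_\eps\|_{\sii(\R^d)}=O(\eps)$ into $\|\phi_\eps\|_{L^\infty(\Sigma_{\pm\eps})}=O(\eps)$ directly, and the refined mean-value identity of Lemma~\ref{Lem.MVT} is designed precisely to reduce the $L^\infty$-bound on $\Sigma_{\pm\eps}$ to the $\sii$-bound on $\R^d$ at the cost of a small self-interaction, which is then absorbed by shrinking~$r$.
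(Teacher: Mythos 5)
Your overall architecture (mean-value identity of Lemma~\ref{Lem.MVT}, radius selection via Lemma~\ref{Lem.Fubini}, absorption of the self-referential surface term by taking $\delta$ small) is the paper's strategy, but there is a genuine gap at the step you dispatch in one line: the claim that each difference $\int_{\Sigma_0^\rho}\psi_0-\int_{\Sigma_{\pm\eps}^\rho}\psi_0$ is $O(\eps\,\rho^{d-1})$ ``because $\psi_0$ is smooth up to $\Sigma_0$ and $f(q,\pm\eps)=1+O(\eps)$''. Smoothness and the Jacobian only control the difference of the \emph{integrands} on the common part of the two domains. The domains themselves do not match: the ball $B_\rho(x)$ cuts $\Sigma_{\pm\eps}$ and $\Sigma_0$ in patches which, after pulling $\Sigma_{\pm\eps}^\rho$ back to $\Sigma_0$ by the normal projection $p_\eps$, differ by a collar $p_\eps^{-1}(\Sigma_{\pm\eps}^\rho)\,\triangle\,\Sigma_0^\rho$ on which there is no cancellation, only the bound $\|\psi_0\|_\infty\cdot|p_\eps^{-1}(\Sigma_{\pm\eps}^\rho)\,\triangle\,\Sigma_0^\rho|$. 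Controlling this measure is the crux of the paper's proof (estimates \eqref{fundamental} and \eqref{fundamental.better}); the generic bound is only $O(\eps^{(d-1)/2})$, which is $O(\eps)$ for $d\geq3$, needs the centre $x$ to be within $O(\eps)$ of $\Sigma_0$ to improve to $O(\eps^{d-1})$ when $d=2$ (your choice $x\in\Sigma_{\pm\eps}$ happens to satisfy this, but you never invoke it), and fails outright for $d=1$, where the paper must fall back on the explicit solution of the Appendix. Without this geometric input your estimate of the $\psi_0$-surface terms is unjustified, and the dimension-dependent case analysis it forces is missing from your argument.

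A second, structural problem concerns the volume term $\lambda_\eps\int_{B_\rho}\phi_\eps$. Cauchy--Schwarz with $\|\phi_\eps\|_{\sii}=O(\eps)$ gives $O(\eps)\rho^{d/2}$, and $\int_0^r\rho^{d/2-(d-1)}\,\D\rho$ diverges for $d\geq4$; your proposed repair by interpolating against a uniform $L^\infty$ bound (which is only $O(1)$, not $O(\eps)$) does not restore the rate $O(\eps)$ for large $d$ — any $L^p$ interpolate loses a power of $\eps$. The paper avoids this by bounding the term by $C\,\|\phi_\eps\|_{L^\infty(\Omega_{2\delta}^0)}\,\delta^2$ and folding it into the bootstrap, which is precisely why the bootstrap quantity must be the supremum over the full tubular neighbourhood $\Omega_\delta^0$ (related to $\Omega_{2\delta}^0$ via the interior estimate \eqref{interior}) and not merely over the two hypersurfaces, as in your $M_\eps$. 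As it stands, your bootstrap on $M_\eps=\|\phi_\eps\|_{L^\infty(\Sigma_{+\eps}\cup\Sigma_{-\eps})}$ cannot close because the volume term is not controlled by $M_\eps$ and is not independently $O(\eps)$ in high dimensions.
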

\begin{proof}
Assume that $0 < \eps \leq \delta/2$,
where $\delta<a$ is a positive number independent of~$\eps$
that will be additionally restricted later on.
From~\eqref{classical} and the methods of the theory of interior regularity
of solutions of elliptic problems
(see, \eg, \cite[Sec.~6.3.1]{Evans}),
we deduce the bound
$$
  \|\phi_\eps\|_{H^{m+2}(\Real^d\setminus\overline{\Omega_\delta^0})}
  \leq C \left(
  \|\phi_\eps\|_{\sii(\Real^d)}
  + |\lambda_\eps-\lambda_0| \|\psi_0\|_{H^m(\Real^d\setminus\Sigma_0)}
  \right)\,
$$
for every $m \in \N$.
Here the constant~$C$ depends on~$d$, $\delta$ and~$m$,
but it is independent of~$\eps$
(the dependence of the coefficient~$\lambda_\eps$ on~$\eps$
on the left hand side of the differential equation in~\eqref{classical}
is unimportant due to Corollary~\ref{cor-2}).
Within this proof, the symbol~$C$ denotes a generic constant
whose value may change from line to line,
but it is always independent of~$\eps$.
By the convergence results of Corollary~\ref{cor-2},
the regularity of Proposition~\ref{Lem.regularity}
and the Sobolev embedding theorem, we obtain
\begin{equation}\label{interior}
  \|\phi_\eps\|_{C^k(\overline{\Real^d\setminus\Omega_\delta^0}))}
  \leq C \, \eps
\end{equation}
for every $k \in \N$.
In particular, this proves the uniform convergence
of eigenfunctions in $\Real^d\setminus\Omega_\delta^0$.
To prove the uniform convergence in a neighbourhood of~$\Sigma_0$
containing the colliding hypersurfaces~$\Sigma_{+\eps}$ and~$\Sigma_{-\eps}$,
we give slightly different proofs in high and low dimensions.

\medskip
\noindent
\fbox{$d \geq 3$}
First of all, we employ Lemma~\ref{Lem.MVT}
with $x \in \Omega_\delta^0$ and $r \leq \delta$.
We estimate the terms on the right hand side of~\eqref{MVT} as follows.
For every continuous function $\phi \in L^\infty(\Real^d)$,
we have
$$
\begin{aligned}
  \left|
  \int_0^r \frac{\D\rho}{|\partial B_\rho|} \int_{B_\rho} \phi
  \right|
  &\leq
  \|\phi\|_{L^\infty(\Omega_{2\delta}^0)}
  \int_0^r \frac{|B_\rho|}{|\partial B_\rho|} \, \D\rho
  = \|\phi\|_{L^\infty(\Omega_{2\delta}^0)} \, \frac{r^2}{2 d}
  \,,
  \\
  \left|
  \int_0^r \frac{\D\rho}{|\partial B_\rho|}
  \int_{\Sigma_{\pm\eps}^\rho} \phi
  \right|
  &\leq
  \|\phi\|_{L^\infty(\Sigma_{\pm\eps})}
  \int_0^r \frac{|\Sigma_{\pm\eps}^\rho|}{|\partial B_\rho|} \, \D\rho
  \leq C \, \|\phi\|_{L^\infty(\Sigma_{\pm\eps})} \, r
  \,.
\end{aligned}
$$
Here the last estimate employs the geometric bound
$
  |\Sigma_{\pm\eps}^\rho| \leq C \rho^{d-1}
$.
Consequently, using Corollary~\ref{cor-2},
\begin{equation}\label{MVT.bounds1}
\begin{aligned}
  \left| \lambda_\eps
  \int_0^r \frac{\D\rho}{|\partial B_\rho|} \int_{B_\rho} \phi_\eps
  \right|
  &\leq C \, \|\phi_\eps\|_{L^\infty(\Omega_{2\delta}^0)} \, \delta^2
  \,,
  \\
  \left| (\lambda_\eps-\lambda_0)
  \int_0^r \frac{\D\rho}{|\partial B_\rho|} \int_{B_\rho} \psi_0
  \right|
  &\leq C \, \eps
  \,,
  \\
  \left| \alpha_\pm
  \int_0^r \frac{\D\rho}{|\partial B_\rho|}
  \int_{\Sigma_{\pm\eps}^\rho} \phi_\eps
  \right|
  &\leq C \, \|\phi_\eps\|_{L^\infty(\Sigma_{\pm\eps})} \, \delta
  \,.
\end{aligned}
\end{equation}
To handle the last terms on the right hand side of~\eqref{MVT},
we recall the unitary transform~\eqref{unitary}.
Setting $v_0 := \mathcal{U} \psi_0$, we have
\begin{eqnarray*}
  \lefteqn{
  \int_{\Sigma_{\eps}^\rho} \psi_0 - \int_{\Sigma_0^\rho} \psi_0
  = \int_{p_\eps^{-1}(\Sigma_{\eps}^\rho)}
  v_0(q,\eps) \, f(q,\eps) \, \D q
  - \int_{\Sigma_0^\rho} v_0(q,0) \, \D q
  }
  \\
  && = \int_{p_\eps^{-1}(\Sigma_{\eps}^\rho) \cap \Sigma_0^\rho}
  \int_0^\eps \partial_t(v_0 f)(q,t)  \, \D t \, \D q
  \\
  && \qquad
  + \int_{p_\eps^{-1}(\Sigma_{\eps}^\rho) \setminus \Sigma_0^\rho}
  v_0(q,\eps) \, f(q,\eps) \, \D q
  - \int_{\Sigma_0^\rho \setminus p_\eps^{-1}(\Sigma_{\eps}^\rho)} v_0(q,0) \, \D q
  \,,
\end{eqnarray*}
where $p_\eps(q) := \mathcal{L}(q,\eps)$.
Consequently,
\begin{align*}
  \left|
  \int_{\Sigma_{\eps}^\rho} \psi_0 - \int_{\Sigma_0^\rho} \psi_0
  \right|
  \leq \ &  |\Sigma_0| \, \eps \,
  \|v_0\|_{C^1(\overline{\Sigma_0\times(0,\delta)})}
  \, \|f\|_{C^1(\overline{\Sigma_0\times(0,\delta)})}
  \\
  & +
  \big|p_\eps^{-1}(\Sigma_{\eps}^\rho) \,\triangle\, \Sigma_0^\rho\big| \,
  \|v_0\|_{C^0(\overline{\Sigma_0\times(0,\delta)})}
  \, \|f\|_{C^0(\overline{\Sigma_0\times(0,\delta)})}
  \,.
\end{align*}
It is a matter of purely geometric considerations to check
that the estimate
\begin{equation}\label{fundamental}
  \big|p_\eps^{-1}(\Sigma_{\eps}^\rho) \,\triangle\, \Sigma_0^\rho\big|
  \leq C \, \eps^{(d-1)/2}
\end{equation}
holds true.
Hence, in view of~\eqref{Jacobian} and Corollary~\ref{Corol.regularity},
we get the estimate
\begin{equation}\label{MVT.bounds2}
  \left|
  \int_{\Sigma_{\eps}^\rho} \psi_0 - \int_{\Sigma_0^\rho} \psi_0
  \right|
  \leq C \, \eps
  \,.
\end{equation}
The same bound holds for $\Sigma_{-\eps}^\rho$
instead of~$\Sigma_{\eps}^\rho$.
Summing up, using the estimates~\eqref{MVT.bounds1} and~\eqref{MVT.bounds2}
in~\eqref{MVT} and assuming that $\delta \leq 1$,
we arrive at
\begin{equation}\label{MVT.consequence}
  |\phi_\eps(x)| \leq
  \frac{1}{|\partial B_r|} \int_{\partial B_r} |\phi_\eps|
  + C \, \eps
  + C \, \|\phi_\eps\|_{L^\infty(\Omega_{2\delta}^0)} \, \delta
  \,.
\end{equation}

Let $x_\eps \in \Omega_\delta^0$ be a point
in which~$|\phi_\eps|$ achieves its maximum in~$\overline{\Omega_\delta^0}$,
\ie\ $\sup_{x\in\Omega_\delta^0}|\phi_\eps(x)|=|\phi_\eps(x_\eps)|$.	
We write
$$
  \|\phi_\eps\|_{L^\infty(\Omega_{2\delta}^0)}
  \leq \|\phi_\eps\|_{L^\infty(\Omega_{\delta}^0)}
  + \|\phi_\eps\|_{L^\infty(\Omega_{2\delta}^0\setminus\Omega_{\delta}^0)}
  \leq |\phi_\eps(x_\eps)| + C \, \eps
  \,,
$$
where the second inequality follows from~\eqref{interior}.
Using this estimate in~\eqref{MVT.consequence}, we obtain
\begin{equation}\label{MVT.consequence2}
  (1 - C\delta) \, |\phi_\eps(x_\eps)|
  \leq
  \frac{1}{|\partial B_r|} \int_{\partial B_r} |\phi_\eps|
  + C \, \eps
  \,.
\end{equation}
Consequently, choosing~$\delta$ sufficiently small
in comparison to the constant~$C$
on the left hand side (coming from~\eqref{interior}),
we arrive at
\begin{equation}\label{MVT.consequence3}
  \|\phi_\eps\|_{L^\infty(\Omega_{\delta}^0)}
  = |\phi_\eps(x_\eps)| \leq
  \frac{C}{|\partial B_r|} \int_{\partial B_r} |\phi_\eps|
  + C \, \eps
  \,.
\end{equation}

Finally, applying Lemma~\ref{Lem.Fubini}
to the right hand side of~\eqref{MVT.consequence3},
we get
$$
  \|\phi_\eps\|_{L^\infty(\Omega_{\delta}^0)}
  \leq \frac{C}{|B_\delta|^{1/2}} \,
  \|\phi_\eps\|_{\sii(B_\delta)}
  + C \, \eps
  \,.
$$
By Corollary~\ref{cor-2} and~\eqref{interior},
we obtain the uniform convergence
\begin{equation}\label{uniform.high}
  \|\phi_\eps\|_{L^\infty(\Real^d)}
  \leq C \, \eps
  \,,
\end{equation}
which in particular implies~\eqref{uniform}.

\medskip
\noindent
\fbox{$d = 2$}
The above proof fails in low dimensions,
because~\eqref{fundamental} does not give the desired
decay rate of order~$\eps$.
In dimension $d=2$, however, just a slight modification
is needed to repair it
by noticing that the better estimate
\begin{equation}\label{fundamental.better}
  \big|p_\eps^{-1}(\Sigma_{\eps}^\rho) \,\triangle\, \Sigma_0^\rho\big|
  \leq C \, \eps^{d-1}
\end{equation}
holds (in all dimensions) provided that the centre~$x$ of the ball~$B_r$
is chosen within a distance of order~$\eps$ from~$\Sigma_0$.
More specifically, we choose $x \in \Omega_{2\eps}^0$.
Then~\eqref{MVT.bounds2} does hold even if $d=2$.
At the same time, the first term in~\eqref{MVT.bounds1}
must be handled differently;
we use the Schwarz inequality to get
\begin{equation}\label{MVT.bounds1.low}
  \left| \lambda_\eps
  \int_0^r \frac{\D\rho}{|\partial B_\rho|} \int_{B_\rho} \phi_\eps
  \right|
  	\leq C \, \|\phi_\eps\|_{L^\infty(\Omega_{2\eps}^0)} \,
  \int_0^r \frac{|B_\rho|^{1/2}}{|\partial B_\rho|} \, \D\rho
  \,,
\end{equation}
where the integral on the right hand side equals $r/(2\sqrt{\pi})$.
Consequently, estimate~\eqref{MVT.consequence}
can be replaced by
\begin{equation}\label{MVT.consequence.low}
  |\phi_\eps(x)| \leq
  \frac{1}{|\partial B_r|} \int_{\partial B_r} |\phi_\eps|
  + C \, \eps
  + C \, \|\phi_\eps\|_{L^\infty(\Omega_{2\eps}^0)} \, \delta
  \,.
\end{equation}
Choosing now $x_\eps \in \Omega_{2\eps}^0$ to be a point
in which~$|\phi_\eps|$ achieves its maximum in~$\overline{\Omega_{2\eps}^0}$,
we again get the estimate~\eqref{MVT.consequence2}
and applying Lemma~\ref{Lem.Fubini} together with Corollary~\ref{cor-2},
we obtain
\begin{equation}\label{uniform.low}
  \|\phi_\eps\|_{L^\infty(\Omega_{2\eps}^0)}
  \leq C \, \eps
  \,.
\end{equation}
In particular, it implies~\eqref{uniform}.

\medskip
\noindent
\fbox{$d = 1$}
We do not see a way how to make the present proof work in dimension $d=1$,
where even~\eqref{fundamental.better} gives just a uniform bound,
so we get no decay in~$\eps$ for the left hand side of~\eqref{MVT.bounds2}.
In the one-dimensional situation, however,
the eigenvalue problem is explicitly solvable (see Appendix)
and it can be checked by hand
that the uniform convergence~\eqref{uniform.high} holds.
\end{proof}
\begin{remark}
We point out that the previous proof gives
the  uniform convergence of eigenfunctions~\eqref{uniform.high}
in the whole~$\Real^d$ with $d \geq 3$.
It holds also if $d=1$ by an explicit verification.
If $d=2$, we only get~\eqref{uniform.low} and~\eqref{interior}
(these results holds in all dimensions, of course)
and the global bound~\eqref{uniform.high}
with~$\eps$ being replaced by~$\sqrt{\eps}$ on the right hand side.
\end{remark}

As a consequence of Theorem~\ref{Thm.uniform},
we get the following lemma that will be needed
in the next section.
\begin{lemma}\label{Lem.derivative}
We have
$$
 \int_{\Sigma_{\pm \eps}} \psi_0 \, \partial_n^\pm (\psi_\eps-\psi_0)
  = O(\eps)
  \,.
$$
\end{lemma}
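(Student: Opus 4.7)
The plan is to reduce the lemma to Green's second identity on the unbounded exterior region
$$V_\eps^+ := \{\mathcal{L}(q,t) : q \in \Sigma_0, \, t > \eps\} \cup (\Real^d \setminus \overline{\Omega_a}),$$
whose only finite piece of boundary is $\Sigma_{+\eps}$; the analogous integral over $\Sigma_{-\eps}$ will be handled by the symmetric region $V_\eps^-$. Throughout $V_\eps^+$ neither $\psi_0$ nor $\psi_\eps$ encounters a supporting hypersurface, so by Corollary~\ref{Corol.regularity} both are smooth up to the boundary and satisfy the classical Helmholtz equations $-\Delta\psi_0 = \lambda_0\psi_0$ and $-\Delta\psi_\eps = \lambda_\eps\psi_\eps$ there. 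Setting $\phi_\eps := \psi_\eps - \psi_0$, a direct algebraic computation gives the pointwise identity $\psi_0\Delta\phi_\eps - \phi_\eps\Delta\psi_0 = -(\lambda_\eps - \lambda_0)\,\psi_0\psi_\eps$ in $V_\eps^+$.

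Applying Green's identity (the outward normal to $V_\eps^+$ on $\Sigma_{+\eps}$ being $-n$) then converts the sought-after quantity into
$$\int_{\Sigma_{+\eps}} \psi_0 \, \partial_n^+ \phi_\eps = \int_{\Sigma_{+\eps}} \phi_\eps \, \partial_n^+ \psi_0 + (\lambda_\eps - \lambda_0) \int_{V_\eps^+} \psi_0\,\psi_\eps.$$
I would then bound the two right-hand terms separately. The surface integral is $O(\eps)$ because $\partial_n^+\psi_0$ is the trace of a function which is smooth on the fixed exterior $\Real^d\setminus\overline{\Omega}$ (hence uniformly bounded on the nearby $\Sigma_{+\eps}$), while Theorem~\ref{Thm.uniform} supplies $\|\phi_\eps\|_{L^\infty(\Sigma_{+\eps})} = O(\eps)$. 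The volume integral is $O(\eps)$ because Corollary~\ref{cor-2} gives $|\lambda_\eps - \lambda_0| = O(\eps)$ and the remaining factor is controlled by $\|\psi_0\|_{\sii(\Real^d)}\|\psi_\eps\|_{\sii(\Real^d)}$, uniformly bounded by the same corollary.

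The one step requiring a word of justification is Green's identity on the unbounded domain $V_\eps^+$, since no boundary contribution at infinity appears in the formula above. This is standard: $\lambda_0$ and $\lambda_\eps$ lie outside the essential spectrum $[0,+\infty)$, so both eigenfunctions (and their derivatives, by elliptic regularity applied to the Helmholtz equation outside a large ball) decay exponentially at infinity; truncating $V_\eps^+$ by a ball of radius $R$ and sending $R \to \infty$ therefore causes the spurious boundary term to vanish. The argument for $\Sigma_{-\eps}$ with $\partial_n^-$ is entirely parallel, using $V_\eps^-$ with outward normal $+n$.
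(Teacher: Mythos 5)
Your plan is correct, and it rests on the same two pillars as the paper's proof: an integration by parts against $\psi_0$ over a region whose relevant boundary is $\Sigma_{\pm\eps}$, followed by Theorem~\ref{Thm.uniform} to kill the surface term $\int_{\Sigma_{\pm\eps}}\phi_\eps\,\partial_n^\pm\psi_0$ and Corollary~\ref{cor-2} to kill the volume terms. The implementation, however, is a genuine variant. The paper multiplies the boundary-value problem~\eqref{classical} by $\xi\psi_0$ with a compactly supported cutoff~$\xi$ and integrates by parts over the bounded collar $\{\mathcal{L}(q,t):\eps<t<a\}$, so no behaviour at infinity ever enters; the price is a residual term $\int\Delta(\xi\psi_0)\,\phi_\eps$, absorbed via $\|\phi_\eps\|_{\sii(\Real^d)}=O(\eps)$. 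You instead use Green's second identity on the full one-sided complement of $\Sigma_{+\eps}$ and exploit that $\psi_0$ is an \emph{exact} eigenfunction there, which collapses the volume contribution to $(\lambda_\eps-\lambda_0)\int\psi_0\psi_\eps=O(\eps)$ — cleaner, at the cost of having to rule out a contribution from infinity; your decay argument is fine for $\lambda\notin[0,\infty)$, and alternatively one can truncate at radius $R$ and use $\psi_0\nabla\phi_\eps,\,\phi_\eps\nabla\psi_0\in L^1$ to dispose of the sphere term along a sequence $R_j\to\infty$. Two small repairs are needed. First, as written $V_\eps^+=\{\mathcal{L}(q,t):t>\eps\}\cup(\Real^d\setminus\overline{\Omega_a})$ also contains the core of~$\Omega$ enclosed by $\Sigma_{-a}$, so its boundary is not only $\Sigma_{+\eps}$; you mean the unbounded component, \ie\ the exterior of the region enclosed by $\Sigma_{+\eps}$ (the companion region $V_\eps^-$ for the $\Sigma_{-\eps}$ integral is then bounded, so no issue at infinity arises there). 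Second, to bound $\int_{V_\eps^+}\psi_0\psi_\eps$ uniformly you should note $\|\psi_\eps\|_{\sii(\Real^d)}\leq\|\psi_0\|_{\sii(\Real^d)}+O(\eps)$ from Corollary~\ref{cor-2}, since the normalisation $\int\psi_\eps^2=1$ alone does not control the $L^2$ norm of a complex-valued eigenfunction.
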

\begin{proof}
Let $\xi \in C_0^\infty(\Omega_a^0)$ be a real-valued function
such that $\xi=1$ on
$
  \Omega_\eps^+ \equiv
  \{\mathcal{L}(q,t) : \, q \in \Sigma_0, \, \eps < t < a/2\}
$,
\cf~\eqref{Omega.sets}.
Multiplying~\eqref{classical} by~$\xi\psi_0$
and integrating by parts over the larger set
$
  \tilde{\Omega}_\eps^+ :=
  \{\mathcal{L}(q,t) : \, q \in \Sigma_0, \, \eps < t < a\}
$,
we arrive at the identity
\begin{multline*}
  -\int_{\tilde{\Omega}_\eps^+} \Delta(\xi\psi_0) \, \phi_\eps
  - \int_{\Sigma_\eps} \partial_n^+\psi_0 \, \phi_\eps
  + \int_{\Sigma_\eps} \psi_0 \, \partial_n^+\phi_\eps
  - \lambda_\eps \int_{\tilde{\Omega}_\eps^+} \xi \psi_0 \, \phi_\eps
  \\
  = (\lambda_\eps-\lambda_0) \int_{\tilde{\Omega}_\eps^+} \xi \psi_0^2
  \,.
\end{multline*}
From Corollary~\ref{cor-2} and Theorem~\ref{Thm.uniform}
together with Corollary~\ref{Corol.regularity},
we thus deduce
$$
  \int_{\Sigma_{+\eps}} \psi_0 \, \partial_n^+ \phi_\eps
  = O(\eps)
  \,.
$$
This proves the claim for~$\Sigma_{+\eps}$.
The other asymptotics is proved analogously.
\end{proof}
%

\section{Eigenvalue asymptotics}\label{Sec.evs}
%
This section is devoted to a proof of Theorem~\ref{Thm.evs}
and its extension to degenerate eigenvalues.

\subsection{Simple eigenvalues}
The analysis of the eigenvalue asymptotics will be based on the formula
\begin{equation}\label{eq-formev}
\lambda_\varepsilon  =
  \frac{h_\varepsilon \big(\overline{P_\varepsilon\psi_0}, P_\varepsilon\psi_0 \big)}
  {\big(\overline{P_\varepsilon\psi_0}, P_\varepsilon \psi_0 \big)_{L^2 (\R^d)}}
  \,,
\end{equation}
where
\begin{equation}\label{eq-aux1-}
  h_\varepsilon \big(\overline{P_\varepsilon \psi_0}, P_\varepsilon\psi_0 \big)
  = h_0 \big( \overline{\psi_0} , \psi_0 \big)
  + (h_\varepsilon - h_0 )\big( \overline{\psi_0} , \psi_0 \big)
  - h_\varepsilon
  \big(\overline{P_\varepsilon^\perp  \psi_0}, P_\varepsilon^\perp  \psi_0 \big)
\end{equation}
and
$$
 P_\varepsilon^\perp :=I -  P_\varepsilon\,.
$$
Note that the analogous decomposition was also a starting point for the eigenvalues analysis
derived in~\cite{EY} and~\cite{Exner-Kondej_2008}.
However,  our further strategy
is based on  essentially different arguments.
In particular, it requires certain modifications
to the non-self-adjoint class of operators considered in this paper.

The first term on the right hand side of~(\ref{eq-aux1-}) yields
$ h_0 (\overline{\psi_0} , \psi_0) = \lambda_0$.
The following statement will allow
to estimate the second term.
\begin{proposition}\label{prop-1}
Suppose
$
  \psi \in H^{1}(\R^d)
  \cap C^{\infty}(\overline{\Omega_0^+})
  \cap C^{\infty}(\overline{\Omega_0^-})
$.
Then we have
\begin{multline}\label{eq-formconv}
  h_\varepsilon (\overline{\psi}, \psi)  - h_0 (\overline{\psi}, \psi)
  \\
  = \eps \left( \alpha_+ \int_{\Sigma_0} \partial^+_n \psi ^2
  +\alpha_- \int_{\Sigma_0} \partial^-_n \psi^2
  -(\alpha_+ -\alpha_-)(d-1) \int_{\Sigma_0 }  K_1 \,\psi ^2 \right)
  + O(\varepsilon^2 )
  \,,
\end{multline}
where the error term depends on $\psi$.
\end{proposition}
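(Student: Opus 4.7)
The plan is to reduce the identity to elementary Taylor expansions in $\eps$. The first observation is that the gradient terms in $h_\eps$ and $h_0$ coincide, so only the boundary integrals contribute:
\begin{equation*}
  h_\eps(\bar\psi,\psi) - h_0(\bar\psi,\psi)
  = \alpha_+\!\left(\int_{\Sigma_{+\eps}}\psi^2 - \int_{\Sigma_0}\psi^2\right)
  + \alpha_-\!\left(\int_{\Sigma_{-\eps}}\psi^2 - \int_{\Sigma_0}\psi^2\right).
\end{equation*}
I would then use the diffeomorphism $\mathcal{L}$ together with the surface-element identity~\eqref{elements} to pull each boundary integral back from $\Sigma_{\pm\eps}$ onto $\Sigma_0$, reducing the task to the asymptotic analysis of $(\mathcal{U}\psi)(q,\pm\eps)^2 \, f(q,\pm\eps)$ uniformly in $q\in\Sigma_0$.

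The next step is two independent Taylor expansions. The Jacobian has the explicit polynomial form~\eqref{Jacobian}, so its first-order expansion $f(q,\pm\eps) = 1 \mp \eps(d-1)K_1(q) + O(\eps^2)$ is immediate and uniform by compactness of~$\Sigma_0$. For $\psi$, the one-sided $C^\infty$-regularity on $\overline{\Omega_0^\pm}$ gives smooth extensions of $\mathcal{U}\psi$ up to $t = 0^\pm$, which yields $(\mathcal{U}\psi)(q,\pm\eps) = \psi(q) \pm \eps\,\partial_n^\pm\psi(q) + O(\eps^2)$, with remainder bounded in terms of the $C^2$ norm of $\psi$ on $\overline{\Omega_0^\pm}$. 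Squaring produces the linear correction $\pm 2\eps\,\psi\,\partial_n^\pm\psi$.

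Multiplying the two expansions and using $2\psi\,\partial_n^\pm\psi = \partial_n^\pm\psi^2$, the linear-in-$\eps$ contribution of the $\Sigma_{\pm\eps}$-integral to the difference $h_\eps - h_0$ takes the form $\alpha_\pm\partial_n^\pm\psi^2 \mp \alpha_\pm(d-1)K_1\psi^2$. Summing over the two signs, the curvature contributions collect into $-(\alpha_+ - \alpha_-)(d-1)K_1\psi^2$ because the linear-in-$\eps$ coefficients from $f(q,+\eps)$ and $f(q,-\eps)$ are opposite, which yields exactly~\eqref{eq-formconv}. I do not anticipate any serious obstacle: the only point requiring attention is matching the sign convention in the definition of $\partial_n^\pm$ stated after~\eqref{as.bis} with the one-sided Taylor coefficients on either side of $\Sigma_0$, and verifying that the $O(\eps^2)$ remainders remain uniform after integration over the compact hypersurface~$\Sigma_0$.
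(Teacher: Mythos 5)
Your proposal is correct and follows essentially the same route as the paper: the paper likewise reduces the difference to the three boundary integrals pulled back to $\Sigma_0$ via $\mathcal{U}$ and \eqref{elements}, expands $(\mathcal{U}\psi)(q,\pm\eps)$ to first order using the one-sided smoothness, and combines this with the explicit form \eqref{Jacobian} of $f$. Your sign bookkeeping (in particular $\partial_n^-\psi = -\tau_{-0}^-\partial_n\psi$ turning the $-\eps\,\partial_t$ coefficient on the inner side into $+\eps\,\partial_n^-\psi$) matches the convention stated after \eqref{as.bis}, so the argument goes through as written.
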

\begin{proof}
Similarly as above,
we define $v := \mathcal{U} \psi$,
which reflects the continuity properties of~$\psi$.
A straightforward calculation yields
\begin{align} \nonumber
  h_\varepsilon (\overline{\psi}, \psi)  - h_0 (\overline{\psi}, \psi)
  = \ &
  \alpha_+
  \int_{\Sigma_0}   v (q, \varepsilon )^2 \, f(q, \varepsilon )
  \, \mathrm{d}\Sigma_0
  \\ \nonumber
  & +\alpha_-
  \int_{\Sigma_0}   v (q, -\varepsilon )^2 \, f(q, -\varepsilon )
  \, \mathrm{d}\Sigma_0
  \\
  \label{eq-formconv1}
  &  -(\alpha_++\alpha_- )
  \int_{\Sigma_0}   v (q, 0 )^2 \, \mathrm{d}\Sigma_0
  \,.
\end{align}
Employing the continuity properties of~$v$,
we can expand
$$
  v (q, \pm \varepsilon )
  = v (q, 0 )\pm \varepsilon \, \partial_t v (q, 0^\pm  )
  + \breve{v}_\varepsilon
  \,,
$$
where
$
  \|\breve{v}_\varepsilon\|_{L^2 (\Sigma_0)}=O(\varepsilon^2)
$.
Applying these asymptotics
to~(\ref{eq-formconv1}) and combining it with~(\ref{Jacobian}),
we get the sought statement.
\end{proof}
The third term of~(\ref{eq-aux1-}) is estimated
by means of the following lemma.
\begin{lemma} \label{le-as3}
The  asymptotics
\begin{equation}\label{eq-estim2.bis}
h_\varepsilon \big(\overline{P_\varepsilon^\perp  \psi_0},
P_\varepsilon^\perp  \psi_0\big) =   \varepsilon
\left( (\alpha_+^2+\alpha_-^2 )\int_{\Sigma_0  }\psi_0 ^2 
 \right) + O(\varepsilon^2)\,
\end{equation}
holds, where  the error term depends on~$\psi_0$.
\end{lemma}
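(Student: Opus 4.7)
The plan is to replace $P_\varepsilon^\perp \psi_0$ by the explicit difference $\eta_\varepsilon := \psi_\varepsilon - \psi_0$, reduce $h_\varepsilon(\bar\eta_\varepsilon, \eta_\varepsilon)$ to boundary integrals on $\Sigma_0$ and $\Sigma_{\pm\varepsilon}$, and extract the leading order via a one-sided Taylor expansion in the normal coordinate.

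Since $P_\varepsilon\psi_0 = c_\varepsilon\psi_\varepsilon$ with $c_\varepsilon := (\overline{\psi_\varepsilon},\psi_0)_{L^2}$, so that $P_\varepsilon^\perp \psi_0 = -\eta_\varepsilon + (1-c_\varepsilon)\psi_\varepsilon$, a key preliminary observation is that $c_\varepsilon$ is already close to $1$ to \emph{quadratic} order in~$\varepsilon$: the normalizations $\int \psi_0^2 = \int \psi_\varepsilon^2 = 1$ give
\[
  1-c_\varepsilon = -\int\eta_\varepsilon\psi_0 = \tfrac12\int\eta_\varepsilon^2,
\]
whence $|1-c_\varepsilon|\le\tfrac12\|\eta_\varepsilon\|_{L^2}^2 = O(\varepsilon^2)$ by Corollary~\ref{cor-2}. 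Using the eigenvalue identities $h_\varepsilon(\bar\psi_\varepsilon,\psi_\varepsilon)=\lambda_\varepsilon$ and $h_\varepsilon(\bar\psi_0,\psi_\varepsilon)=\lambda_\varepsilon c_\varepsilon$ together with the bilinear symmetry $h_\varepsilon(\bar\phi,\psi)=h_\varepsilon(\bar\psi,\phi)$ (a consequence of the $\mathcal{T}$-self-adjointness of $H_\varepsilon$), a direct expansion yields
\[
  h_\varepsilon(\overline{P_\varepsilon^\perp\psi_0},P_\varepsilon^\perp\psi_0) = h_\varepsilon(\bar\eta_\varepsilon,\eta_\varepsilon) - (1-c_\varepsilon)^2\lambda_\varepsilon = h_\varepsilon(\bar\eta_\varepsilon,\eta_\varepsilon) + O(\varepsilon^4).
\]

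Next I split $h_\varepsilon(\bar\eta_\varepsilon,\eta_\varepsilon) = h_\varepsilon(\bar\eta_\varepsilon,\psi_\varepsilon) - h_\varepsilon(\bar\eta_\varepsilon,\psi_0)$. The first piece equals $\lambda_\varepsilon(1-c_\varepsilon) = O(\varepsilon^2)$ by the eigenvalue identity. Writing $h_\varepsilon = h_0 + (h_\varepsilon - h_0)$ for the second, the explicit form
\[
  (h_\varepsilon - h_0)(\bar\eta_\varepsilon,\psi_0) = \alpha_+ J_+ + \alpha_- J_- - (\alpha_+ + \alpha_-) J_0, \qquad J_t := \int_{\Sigma_t}\eta_\varepsilon\psi_0,
\]
together with $h_0(\bar\eta_\varepsilon,\psi_0) = \lambda_0(c_\varepsilon - 1) = O(\varepsilon^2)$ yields
\[
  h_\varepsilon(\bar\eta_\varepsilon,\eta_\varepsilon) = \alpha_+(J_0 - J_+) + \alpha_-(J_0 - J_-) + O(\varepsilon^2),
\]
reducing the task to evaluating $J_0 - J_\pm$ to first order in~$\varepsilon$.

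Passing to the curvilinear coordinates via $\mathcal{U}$, using the surface-element identity~\eqref{elements}, and performing a one-sided Taylor expansion of $\mathcal{U}\eta_\varepsilon$ and $\mathcal{U}\psi_0$ in $t$ at $t=0^+$, the uniform bound $\|\eta_\varepsilon\|_{L^\infty(\Omega_{2\varepsilon}^0)} = O(\varepsilon)$ from Theorem~\ref{Thm.uniform} absorbs all corrections except the linear Taylor term, yielding
\[
  J_+ - J_0 = \varepsilon \int_{\Sigma_0} \partial_t(\mathcal{U}\eta_\varepsilon)(q,0^+)\,\psi_0\,\D\Sigma_0 + O(\varepsilon^2).
\]
The main obstacle is identifying $\partial_t(\mathcal{U}\psi_\varepsilon)(\cdot,0)$ with $O(\varepsilon)$ precision in $L^2(\Sigma_0)$. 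Since $\psi_\varepsilon$ is smooth in $\overline{\Omega_\varepsilon^0}$ by Corollary~\ref{Corol.regularity}, $\partial_t\mathcal{U}\psi_\varepsilon$ is continuous across $t=0$ within the slab; the block-diagonal form~\eqref{block-diagonal} of the eigenvalue equation combined with the uniform $H^2$-bound of Lemma~\ref{Lem.res.bound2} controls $\partial_t^2\mathcal{U}\psi_\varepsilon$, so $\partial_t\mathcal{U}\psi_\varepsilon$ varies by $O(\varepsilon)$ across $t\in(-\varepsilon,\varepsilon)$. Combined with the interface condition~\eqref{conditions} at $\Sigma_{+\varepsilon}$ and the convergence of the outer normal derivative $\partial_n\psi_\varepsilon|_{\Sigma_{+\varepsilon}^+} \to \partial_n^+\psi_0|_{\Sigma_0}$ (which follows from interior elliptic regularity in $\Omega_0^+$ via~\eqref{interior}), this produces $\partial_t(\mathcal{U}\psi_\varepsilon)(\cdot,0) = \partial_n^+\psi_0 - \alpha_+\psi_0 + O(\varepsilon)$. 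Since $\partial_t(\mathcal{U}\psi_0)(q,0^+) = \partial_n^+\psi_0$, subtraction gives $\partial_t(\mathcal{U}\eta_\varepsilon)(q,0^+) = -\alpha_+\psi_0 + O(\varepsilon)$ and thus $J_0 - J_+ = \varepsilon\alpha_+\int_{\Sigma_0}\psi_0^2 + O(\varepsilon^2)$. The symmetric argument at $\Sigma_{-\varepsilon}$ yields $J_0 - J_- = \varepsilon\alpha_-\int_{\Sigma_0}\psi_0^2 + O(\varepsilon^2)$, and substitution into the formula above delivers~\eqref{eq-estim2.bis}.
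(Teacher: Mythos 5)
The algebraic reduction in your first half is correct and is a genuinely different (and arguably cleaner) route than the paper's: instead of the contour-integral identity for $P_\varepsilon^\perp\psi_0$ used in the paper, you exploit the bilinear symmetry of $h_\varepsilon$, the eigenvalue identities, and the observation $1-c_\varepsilon=\tfrac12\int\eta_\varepsilon^2=O(\varepsilon^2)$ to reduce everything to $\alpha_+(J_0-J_+)+\alpha_-(J_0-J_-)$ up to $O(\varepsilon^2)$. All of that checks out.

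The gap is in the evaluation of $J_0-J_\pm$, specifically in the sentence where you identify $\partial_t(\mathcal{U}\psi_\varepsilon)(\cdot,0)$ with $\partial_n^+\psi_0-\alpha_+\psi_0+O(\varepsilon)$. Two steps there are unjustified. First, the claim that $\partial_n\psi_\varepsilon$ on the \emph{outer} side of $\Sigma_{+\varepsilon}$ equals $\partial_n^+\psi_0$ on $\Sigma_0$ up to $O(\varepsilon)$ does not follow from~\eqref{interior}: that estimate controls $\phi_\varepsilon=\psi_\varepsilon-\psi_0$ in $C^k$ only on $\Real^d\setminus\Omega_\delta^0$ with $\delta$ a \emph{fixed} constant, whereas $\Sigma_{+\varepsilon}$ lies inside $\Omega_\delta^0$ for small $\varepsilon$; no uniform $C^1$ control of $\phi_\varepsilon$ up to the moving interface is available. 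This is exactly the difficulty the paper resolves with Lemma~\ref{Lem.derivative} (proved by an integration by parts over $\tilde\Omega_\varepsilon^+$, not by interior regularity), which gives the weak statement $\int_{\Sigma_{+\varepsilon}}\psi_0\,\partial_n^+(\psi_\varepsilon-\psi_0)=O(\varepsilon)$ — precisely what you need, and which you never invoke. Second, the claim that $\partial_t(\mathcal{U}\psi_\varepsilon)$ ``varies by $O(\varepsilon)$'' across $t\in(-\varepsilon,\varepsilon)$ is not delivered by the uniform $H^2$-bound of Lemma~\ref{Lem.res.bound2}: from $\|\partial_t^2 v_\varepsilon\|_{\sii(\Sigma_0\times(-\varepsilon,\varepsilon))}\le C$ one gets only
\begin{equation*}
  \bigl\|\partial_t v_\varepsilon(\cdot,\varepsilon^-)-\partial_t v_\varepsilon(\cdot,0)\bigr\|_{\sii(\Sigma_0)}
  \le \varepsilon^{1/2}\,\|\partial_t^2 v_\varepsilon\|_{\sii(\Sigma_0\times(0,\varepsilon))}
  = O(\varepsilon^{1/2}),
\end{equation*}
so after multiplying by the prefactor $\varepsilon$ your error term is only $O(\varepsilon^{3/2})$, not the claimed $O(\varepsilon^2)$; upgrading this requires an $\varepsilon$-uniform bound on $\sup_t\|\partial_t^2 v_\varepsilon(\cdot,t)\|_{\sii(\Sigma_0)}$, i.e.\ higher regularity uniform in $\varepsilon$, which you neither state nor prove. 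The paper avoids propagating the derivative across the slab altogether: it evaluates the relevant traces at $t=\pm\varepsilon$, applies the interface condition~\eqref{conditions} there, and then uses Theorem~\ref{Thm.uniform} and Lemma~\ref{Lem.derivative} to pass to $\Sigma_0$. Your argument becomes complete if you replace the appeal to~\eqref{interior} by Lemma~\ref{Lem.derivative} and restructure the Taylor step so that the derivative is only ever needed at $t=\pm\varepsilon^{\mp}$.
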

\begin{proof}
Let us denote
$$
\eta_\varepsilon (z):= \frac{i}{2\pi }\big(R_\varepsilon (z) - R_0
(z)\big)\psi_0  \,.
$$
Then
$$P_\varepsilon^\perp  \psi_0  = \int_{C_r }\eta_\varepsilon
 (z)  \, \mathrm{d}z\,.
$$
A straightforward calculation yields
\begin{align} \nonumber
h_\varepsilon
\big(\overline{P_\varepsilon^\perp  \psi_0}, P_\varepsilon^\perp  \psi_0\big)
= \ &
  \int_{\mathcal{C}_r } \mathrm{d}z(h_\varepsilon - z )
  \big( \overline{P_\varepsilon^\perp  \psi_0}, \eta_\varepsilon(z)\big)+
 \int_{\mathcal{C}_r} \mathrm{d}z \, z (
\overline{P_\varepsilon^\perp  \psi_0} , \eta _\varepsilon (z) )_{L^2 (\R^d )}
\\ \nonumber
= \ & \frac{i}{2\pi }  (h_0 -
h_\varepsilon  )\big(  \overline{P_\varepsilon^\perp  \psi_0 }, \int_{\mathcal{C}_r
} \mathrm{d}z \, R_0 (z)\psi_0 \big)
\\  \nonumber
&
+\int_{\mathcal{C}_r } \mathrm{d}z \, z \,
\big(\overline{ P_\varepsilon^\perp  \psi_0}, \eta _\varepsilon (z)  \big)_{L^2 (\R^d )}
\\
\label{eq-aux1}
= \ & (h_0 - h_\varepsilon  )
\big(
\overline{P_\varepsilon^\perp  \psi_0}, \psi_0 \big) +
 \int_{\mathcal{C}_r } \mathrm{d}z \, z \,
  \big( \overline{P_\varepsilon^\perp  \psi_0} , \eta _\varepsilon (z)
  \big)_{L^2 (\R^d )}\,,
\end{align}
where we have  used the
fact $\frac{i}{2\pi } \int_{\mathcal{C}_r} \mathrm{d}z  \, R_0
(z)\psi_0 = \psi_0 $ and
\begin{eqnarray}
  (h_\varepsilon -z )\left( u,\big(R_\varepsilon (z)-R_0 (z)\big)\psi_0\right)
  =(h_0 - h_\varepsilon )(  u,  R_0 (z)\psi_0 )
\end{eqnarray}
valid for all $u\in H^{1} (\R^d)$ (\cf~\cite[Sec.~VIII.3.2]{Kato}).
It follows from (\ref{eq-P}) that
$$
  P_\varepsilon  \psi_0
  = ( \overline{\psi_\varepsilon} , \psi_0 )_{L^2 (\R^d)} \psi_\varepsilon
  = (1+O(\varepsilon))\psi_\varepsilon
  \,.
$$
Moreover,
\begin{equation}\label{eq-error}
P^\perp _\varepsilon  \psi_0 =(1+O(\varepsilon))\psi_\varepsilon  - \psi_0  \,,\,\,\,\,\,\,\,\,\, \| P_\varepsilon^\perp  \psi_0 \|_{L^2 (\R^d )} = O(\varepsilon )\,,
\end{equation}
which implies
\begin{equation}\label{eq-cloP1}
  \big(\overline{P_\varepsilon  \psi_0}, P_\varepsilon  \psi_0\big)_{L^2 (\R^d)}
  = 1+O(\varepsilon^2)
  \,.
\end{equation}
Using the above asymptotics,
we conclude that the second term
on the last line of~(\ref{eq-aux1})
behaves as $O(\varepsilon^2)$.

It remains to estimate the first term
on the last line of~(\ref{eq-aux1}).
Applying the notations $v_0 := \mathcal{U} \psi_0 $
and $w_\varepsilon :=  \mathcal{U}P_\varepsilon^\perp  \psi_0 $,
we  get
\begin{align} \nonumber \label{eq-aux2}
(  h_\varepsilon - h_0   )
 \big(
 \overline{P_\varepsilon^\perp  \psi_0 },  \psi_0  \big)
  = \ &
  \alpha_+ \underbrace{\int_{\Sigma_0 } \big( ( w_\varepsilon
 v_0 ) (q,  \varepsilon )
-  (w_\varepsilon
 v_0)( q , 0)\big) \, \mathrm{d}\Sigma_0}_{L^+ _1}
\\ \nonumber
 & +    \alpha_- \underbrace{\int_{\Sigma_0 }
\big( ( w_{\varepsilon}
 v_0 ) (q,  -\varepsilon )
-  (w_{\varepsilon}
 v_0 )( q , 0)\big) \, \mathrm{d}\Sigma_0}_{L^-_1}
\\  \nonumber
& + \alpha_+  \underbrace{\int_{\Sigma_0 }  ( w_\varepsilon
 v_0 ) (q, \varepsilon ) \, \big(f(q,  \varepsilon)-1 \big)
  \, \mathrm{d}\Sigma_0 }_{L^+_2}
  \\  \nonumber
&
 +\alpha_- \underbrace{\int_{\Sigma_0 }  ( w_{\varepsilon}
 v_0 ) (q, -\varepsilon )\big(f(q, - \varepsilon)-1 \big)
  \, \mathrm{d}\Sigma_0 }_{L^-_2}\,.
\end{align}
Using again the bound~(\ref{eq-supnorm})
together with the Schwarz inequality, we estimate
\begin{equation}\label{eq-I}
|L_2^\pm| \leq C \varepsilon \,
  \|w_\varepsilon \|_{L^2 (\Sigma_{\pm  \varepsilon})}
\|v_0 \|_{L^2 (\Sigma_{\pm  \varepsilon})}\,.
\end{equation}
Employing now the  statement of~Theorem~\ref{Thm.uniform}
we conclude
\begin{equation}\label{eq-weps}
  \|w_\varepsilon \|_{L^2 (\Sigma_{\pm  \varepsilon})} = O (\varepsilon)
  \,,
\end{equation}
which leads to $L_2^\pm =  O(\varepsilon^2 )$
in view of (\ref{eq-I}) and the fact that $\|v_0 \|_{L^2 (\Sigma_{\pm  \varepsilon})}$ can be uniformly bounded. This means that $L_2^\pm $ contributes to the error term.

To estimate $L_1^\pm $ we rely on the regularity of eigenfunctions
established in Lemma~\ref{Lem.regularity}.
For $g \in \{ w_\varepsilon , v_0\}$, we have
the expansion
$$
g (q,0 )= g (q, \pm \varepsilon ) \mp \varepsilon \, \partial_t g (q, \pm \varepsilon^\mp )+\breve{g}_\varepsilon \,,
$$
where $\breve{g}_\varepsilon\in L^2 (\Sigma_0 )$
admits the norm asymptotics of type $O(\varepsilon^2)$.
This implies
$$
  L_1^\pm = \pm \varepsilon L^\pm _3 \pm \varepsilon L^\pm _4
$$
with
$$
\begin{aligned}
  L^\pm _3 & :=
  \int_{\Sigma_0 }( \partial_t w_\varepsilon (q, \pm \varepsilon^\mp ))
  \, v_0 (q, \pm \varepsilon) \, \mathrm{d}\Sigma_0
  \,.
  \\
  L^\pm _4 & := \int_{\Sigma_0 } w_\varepsilon (q, \pm \varepsilon)
  \, \partial_t v_0 (q, \pm \varepsilon) \, \mathrm{d}\Sigma_0
  \,.
\end{aligned}
$$
Note that since~$v_0 $ is smooth for $t\neq 0$,
we do not need to distinguish ``left'' and ``right''
limits for $  \partial_t v_0 (q, \pm \varepsilon)$.
Employing  again (\ref{eq-weps})
and $\|\partial_t   v_0 \|_{L^2 (\Sigma_{\pm \varepsilon}\,)}\leq C$,
we claim that $L_4^\pm =O(\eps) $,
\ie\  $\varepsilon L_4^\pm $ contributes to  the error term.
It remains to estimate $L_3 ^\pm $.
To  this aim
we use the boundary conditions which for $v_\varepsilon$ read
$$
\partial_t v_\varepsilon (q, \pm \varepsilon^+  ) - \partial_t v_\varepsilon (q, \pm \varepsilon^- )
=  \alpha_\pm v_\varepsilon (q, \pm \varepsilon )\,.
$$
Using these
equivalences and  decomposition~(\ref{eq-error}),
we obtain
\begin{multline}
 L_3^\pm = - \alpha_\pm \int_{\Sigma_0 }( v_\varepsilon v_0 )(q, \pm \epsilon )\, \mathrm{d}\Sigma_0\\ \nonumber
  \pm
 \int_{\Sigma_0 } \partial_t \big(
  v_\varepsilon (q, \pm \varepsilon^\pm ) -
v_0 (q, \pm \varepsilon^\mp )\big) v_0 (q, \pm \varepsilon)\,
 \mathrm{d}\Sigma_0 +O(\varepsilon)\,.
\end{multline}
Employing again $v_0 (q, \pm \varepsilon^\mp ) = v_0 (q, \pm \varepsilon^\pm ) $
and combining it with the statements of Theorem~\ref{Thm.uniform}
and Lemma~\ref{Lem.derivative},
we obtain
$$
  L_3^\pm = -  \alpha_\pm \int_{\Sigma_0 } v_0 ^2 \, \mathrm{d}\Sigma_0
  + O(\eps )\,.
$$
Summing up, the above estimates we come to~(\ref{eq-estim2.bis}),
which completes the proof.
\end{proof}
Now we are in a position to establish Theorem~\ref{Thm.evs}.
\begin{proof}[Proof of Theorem~\ref{Thm.evs}]
Combining (\ref{eq-P}),  (\ref{eq-estim2.bis}),
(\ref{eq-cloP1}) and (\ref{eq-formev}) we get
$$
\lambda_{\varepsilon} =
\frac{h_\varepsilon \big(\overline{P_\varepsilon
\psi_0 }, P_\varepsilon
\psi_0 \big)}
{\big(\overline{P_\varepsilon
\psi_0}, P_\varepsilon
\psi_0  \big)_{L^2 (\R^d )}}=
\lambda_0 + \varepsilon \lambda_0'+
O(\varepsilon^2)\,,
$$
where $\lambda_0'$ is defined by (\ref{as.bis}).
\end{proof}
\subsection{Degenerate eigenvalues}
In this subsection, we extend Theorem~\ref{Thm.evs}
to the case of degenerate eigenvalues.
More specifically, now we assume that~$\lambda_0$
is a discrete \emph{semisimple} eigenvalue of~$H_0$.
The semisimple property means that the algebraic multiplicity
can be greater than one, but it is still equal to
the geometric multiplicity of the eigenvalue
(\cf~\cite[Sec.~I.5.3]{Kato}).
It is the most general situation in the self-adjoint setting
(\ie\ $\alpha_\pm \in \Real$ in our case).

Let $k\in \N$ stand for the multiplicity of~$\lambda_0$
and let $\{\psi_0^i\}_{i=1}^{k}$ denote a system
of linearly independent eigenvectors of~$H_0$,
normalised in such a way that
the biorthonormal relations
\begin{equation}\label{basis}
  \big(\overline{\psi_0^i},\psi_0^j\big)_{\sii(\R^d)} = \delta_{ij}
\end{equation}
hold true for all $i,j \in \{1,\dots,k\}$.
We note that $\big\{\overline{\psi_0^i}\big\}_{i=1}^{k}$ constitutes a system
of linearly independent eigenvectors of the adjoint~$H_0^*$
corresponding to the semisimple eigenvalue~$\overline{\lambda_0}$
of the same multiplicity~$k$.

Our main result reads as follows.
\begin{theorem}\label{th-deg}
Let~$\lambda_0$ be a semisimple discrete eigenvalue of~$H_0$
of  multiplicity $k \geq 1$
and let~$\{\psi_0^i\}_{i=1}^k$ stand for
a system of the corresponding eigenfunctions
normalised via~\eqref{basis}.
There exist positive constants~$\eps_0$ and~$r$ such that,
for all $\eps < \eps_0$, $H_\eps$~possesses precisely~$k$
(counting the algebraic multiplicity)
discrete eigenvalues
$\{\lambda^i_\varepsilon \}_{i=1}^k$
in the open
disk of radius~$r$ centred at~$\lambda_0$.
Moreover, $\{\lambda^i_\varepsilon\}_{i=1}^k$ admit the following asymptotics
\begin{equation}\label{eq-evasHd}
   \lambda^i_\varepsilon = \lambda_0 \,
  + \lambda_{i}' \,\varepsilon
  + o(\varepsilon) \,
  \,,
\end{equation}
where $\{\lambda_i'\}_{i=1}^k$ are eigenvalues
(counting the algebraic multiplicity)
of the matrix $S \equiv \{s_{ij}\}_{i,j=1}^k$ with entries
\begin{multline}\label{eq-evmatrixHd}
s_{ij} :=
  \alpha_+ \int_{\Sigma_0} \partial_n^+ (\psi_0^i \psi_0^j)
  + \alpha_-   \int_{\Sigma_0} \partial_n^- (\psi_0^i \psi_0^j)
  \\
  - \int_{\Sigma_0}
  \left[\alpha_+^2+\alpha_-^2 + (\alpha_+ -\alpha_-) \,(d-1) K_1 \right]
  \psi_0^i \psi_0^j
  \,.
\end{multline}
\end{theorem}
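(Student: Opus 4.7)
The plan is to generalise the scalar Rayleigh-type quotient~\eqref{eq-formev} to a $k\times k$ generalised matrix eigenvalue problem. Introduce $M_\eps, N_\eps \in \C^{k\times k}$ by
$$
  m_{ij}(\eps) := h_\eps\big(\overline{P_\eps \psi_0^i}, P_\eps \psi_0^j\big),
  \qquad
  n_{ij}(\eps) := \big(\overline{P_\eps \psi_0^i}, P_\eps \psi_0^j\big)_{\sii(\R^d)}.
$$
Applying Theorem~\ref{Thm.nrs} to the contour integral defining~$P_\eps$ yields $P_\eps\psi_0^i=\psi_0^i+O(\eps)$ in~$\sii(\R^d)$. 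Combined with~\eqref{basis} and the bilinear orthogonality $(\overline{u},v)_{\sii}=0$ for $u\in\mathrm{range}(P_\eps)$, $v\in\mathrm{range}(P_\eps^\perp)$---which follows from the $\mathcal{T}$-self-adjointness $H_\eps^*=\mathcal{T}H_\eps\mathcal{T}$---the same computation as for~\eqref{eq-cloP1} gives $n_{ij}(\eps)=\delta_{ij}+O(\eps^2)$. Hence $N_\eps$ is invertible for small $\eps$, the $\{P_\eps\psi_0^j\}_{j=1}^k$ form a basis of $\mathrm{range}(P_\eps)$, and the eigenvalues of $N_\eps^{-1}M_\eps$ coincide with the $\lambda_\eps^i$ counted with algebraic multiplicity.

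For the expansion of $M_\eps$, the same $\mathcal{T}$-self-adjointness ensures $h_\eps(\overline{P_\eps u}, P_\eps^\perp v)=0$, so that entry-wise the manipulation behind~\eqref{eq-aux1-} gives
$$
  m_{ij}(\eps)
  = \lambda_0\delta_{ij}
  + (h_\eps - h_0)\big(\overline{\psi_0^i}, \psi_0^j\big)
  - h_\eps\big(\overline{P_\eps^\perp \psi_0^i}, P_\eps^\perp \psi_0^j\big).
$$
Since $(\phi,\psi)\mapsto (h_\eps-h_0)(\overline\phi,\psi)$ is a symmetric bilinear form on $H^1(\R^d)$, polarising Proposition~\ref{prop-1} by applying~\eqref{eq-formconv} to $\psi_0^i\pm\psi_0^j$ and subtracting yields
\begin{multline*}
  (h_\eps - h_0)\big(\overline{\psi_0^i}, \psi_0^j\big)
  = \eps \bigg[
  \alpha_+ \int_{\Sigma_0} \partial_n^+(\psi_0^i \psi_0^j)
  + \alpha_- \int_{\Sigma_0} \partial_n^-(\psi_0^i \psi_0^j) \\
  - (\alpha_+ - \alpha_-)(d-1) \int_{\Sigma_0} K_1 \, \psi_0^i \psi_0^j
  \bigg]
  + O(\eps^2).
\end{multline*}
The analogous polarisation of Lemma~\ref{le-as3} should give
$$
  h_\eps\big(\overline{P_\eps^\perp \psi_0^i}, P_\eps^\perp \psi_0^j\big)
  = \eps (\alpha_+^2 + \alpha_-^2) \int_{\Sigma_0} \psi_0^i \psi_0^j + o(\eps),
$$
from which $m_{ij}(\eps)=\lambda_0\delta_{ij}+\eps s_{ij}+o(\eps)$ with $s_{ij}$ as in~\eqref{eq-evmatrixHd}.

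Combining $M_\eps=\lambda_0 I+\eps S+o(\eps)$ with $N_\eps=I+O(\eps^2)$ gives $N_\eps^{-1}M_\eps=\lambda_0 I+\eps S+o(\eps)$, and continuity of the spectrum of a finite matrix in its entries (\cf~\cite[Sec.~II.5.1]{Kato}) forces each eigenvalue of $N_\eps^{-1}M_\eps$ to be of the form $\lambda_0+\eps\lambda_i'(\eps)$ with $\lambda_i'(\eps)\to\lambda_i'$, establishing~\eqref{eq-evasHd}.

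The main technical obstacle I anticipate is the polarised version of Lemma~\ref{le-as3}. Its original proof uses in an essential way that $\psi_\eps$ is a \emph{single} eigenfunction of~$H_\eps$: the pointwise interface condition~\eqref{conditions} enters the integrals~$L_3^\pm$, while Theorem~\ref{Thm.uniform} and Lemma~\ref{Lem.derivative} control the boundary error. In the degenerate case $P_\eps\psi_0^i$ is merely a vector in $\mathrm{range}(P_\eps)$ and not itself an eigenfunction; fortunately, it belongs to $\Dom(H_\eps^m)$ for every~$m$, hence enjoys the same piecewise $H^m$-regularity as a true eigenfunction (Proposition~\ref{Lem.regularity}) and satisfies~\eqref{conditions} on each~$\Sigma_{\pm\eps}$. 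The maximum-principle scheme of Section~\ref{Sec.efs} should then extend to give the uniform bound $\|P_\eps\psi_0^i-\psi_0^i\|_{L^\infty(\Sigma_{\pm\eps})}=O(\eps)$, after which the polarised analogues of the integrals $L_1^\pm,\dots,L_4^\pm$ are handled essentially as in the simple case, with the pointwise eigenequation for $\psi_\eps$ replaced by the projected identity $H_\eps P_\eps\psi_0^j=P_\eps H_\eps\psi_0^j$.
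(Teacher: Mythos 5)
Your proposal is correct in substance and reaches the same asymptotics through the same analytic core (polarised versions of Proposition~\ref{prop-1} and Lemma~\ref{le-as3}), but it packages the linear algebra differently from the paper. The paper first picks the \emph{adapted} basis $\{\psi_0'^i\}$ of $\mathrm{Ran}(P_0)$ defined by $P_\eps\psi_0'^i=\psi_\eps^i$, so that the matrix of $H_\eps$ on $\mathrm{Ran}(P_\eps)$ is diagonal and each diagonal entry $h_\eps\big(\overline{P_\eps\psi_0'^i},P_\eps\psi_0'^i\big)$ is expanded exactly as in the simple case (with Theorem~\ref{Thm.uniform} and Lemma~\ref{Lem.derivative} re-run for the true eigenfunctions $\psi_\eps^i$); only at the end does it transform back to the biorthonormal basis $\{\psi_0^i\}$ to identify the matrix~$S$. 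You instead work in the fixed basis $\{P_\eps\psi_0^j\}$ and read off the eigenvalues from $N_\eps^{-1}M_\eps$; the identity $M_\eps=N_\eps C$, with $C$ the matrix of $H_\eps\!\upharpoonright\!\mathrm{Ran}(P_\eps)$, and the $\mathcal{T}$-induced bilinear orthogonality you invoke are both valid, so this is sound. Your route buys two things: it needs no change of basis at the end, and it does not presuppose that $H_\eps$ restricted to $\mathrm{Ran}(P_\eps)$ is diagonalisable (the paper's choice of $k$ linearly independent eigenfunctions $\psi_\eps^i$ tacitly assumes this, whereas $N_\eps^{-1}M_\eps$ captures algebraic multiplicities regardless of the Jordan structure). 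The price is exactly the obstacle you flag: $P_\eps\psi_0^i$ is not an eigenfunction, so the function $\phi_\eps^i:=P_\eps\psi_0^i-\psi_0^i$ entering the maximum-principle scheme of Section~\ref{Sec.efs} satisfies a system in which $\lambda_\eps\psi_\eps$ is replaced by $\sum_l c_{li}P_\eps\psi_0^l$; since $C=\lambda_0 I+O(\eps)$ and the interface conditions~\eqref{conditions} hold for every element of $\Dom(H_\eps)$, the scheme does extend, but this step should be written out rather than asserted, and the analogue of Lemma~\ref{Lem.derivative} must likewise be checked for $P_\eps\psi_0^i$. With that step supplied, your argument is a complete and slightly more robust variant of the paper's proof.
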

\begin{proof}
Relying again on the norm-resolvent convergence of Theorem~\ref{Thm.nrs},
we can choose $r>0$ in such a way that the circle~$\mathcal{C}_r$
introduced in~\eqref{circle} surrounds~$k$ eigenvalues
of~$H_\varepsilon $ for all~$\varepsilon$ small enough.
These eigenvalues admit the following asymptotics
\begin{equation}\label{eq-as-ev-hD}
  \lambda^i_\varepsilon = \lambda_0 +O(\varepsilon )\,,\qquad i=1,\dots, k
  \,.
\end{equation}
Let us denote by
$\psi_\eps^i$, $i=1,...,k$,
the corresponding linearly independent  eigenfunctions of~$H_\eps$
with the normalisation
$(\overline{\psi_\eps^i} , \psi_\eps^i)_{L^2 (\R^d)}=1$.
Then we can find a system $\{\psi_0 '^i\}_{i=1}^k$  of  eigenfunctions of $H_0$ corresponding to
$\lambda _0$ such that
\begin{equation}\label{eq-convev}
  \psi_\eps^i = P_\eps \psi  _0 '^i \,,
\end{equation}
where $P_\eps$ stands for the eigenprojector onto
the space spanned by $\{\psi _\eps ^i \}_{i=1}^k$.
To show (\ref{eq-convev}) it suffices to check that
$\{ P_\varepsilon \psi _0 '^i  \}_{i=1}^k$ forms a basis in $\Ran(P_\varepsilon)$.
Using the convergence~\eqref{eq-P} of spectral projections
defined by~\eqref{projector},
we get the asymptotics
\begin{equation}\label{eq-P3}
  \|P_\varepsilon \psi _0 '^i -\psi _0 '^i \|_{L^2 (\R^d)}
  = \|P_\varepsilon \psi _0 '^i -P_0\psi _0 '^i  \|_{L^2 (\R^d)}= O(\varepsilon)
\end{equation}
for $i=1,\dots, k$. Consequently,
\begin{align}\nonumber
\big(\overline{P_\varepsilon \psi _0 '^i } , P_\varepsilon \psi _0 '^j \big)_{L^2 (\R^d)}
&=
\big(\overline{\psi _0 '^i } ,  \psi _0 '^j \big)_{L^2 (\R^d)} -
\big(\overline{P_\varepsilon ^\bot \psi _0 '^i } , P_\varepsilon^\bot  \psi _0 '^j\big)_{L^2 (\R^d)}
\\ \label{eq-deg1}
& = \big(\overline{\psi _0 '^i} ,  \psi _0 '^j \big)_{L^2 (\R^d)} +O(\varepsilon^2)
\,.
\end{align}
 It follows from the above asymptotics that
$\{ P_\varepsilon \psi _0 '^i\}_{i=1}^k$ forms a linearly independent system.
Actually, $\{ P_\varepsilon \psi'^i_0 \}_{i=1}^k$ constitutes
a basis of the range of~$P_\varepsilon$,
since  $\dim\Ran(P_\varepsilon) = k$.

The eigenvalues $\lambda_\eps^i$ of $H_\eps$ are determined by the
eigenvalues of the diagonal matrix
$$
  D:= \{d_{i } \delta_{ij}\}_{i,j=1}^k
  \qquad \mbox{with} \qquad
  d_i :=
  \big( \overline{H_\eps \psi^i_\eps} ,\psi^i_\eps  \big)_{L^2 (\R^d)}
  = h_\varepsilon
  \big(\overline{P_\varepsilon \psi'^i_0 }, P_\varepsilon \psi'^i_0 \big )
  \,.
$$
Now we repeat the steps from the proof of~Theorem~\ref{Thm.uniform}
and show
$$
  \|\psi^i_\varepsilon -\psi'^i_0\|_{L^\infty (\Sigma_{\pm \eps })}
  =O(\varepsilon)
$$
for $i=1, \dots, k$.
Furthermore, we employ the decomposition
$$
  h_\varepsilon \big(\overline{P_\varepsilon \psi'^i_0 }, P_\varepsilon \psi'^i_0 \big)
  = h_0 \big( \overline{\psi'^i_0} , \psi'^i_0 \big)
  + (h_\varepsilon - h_0 ) \big( \overline{\psi '^i_0} , \psi'^i_0\big)
 - h_\varepsilon \big(\overline{P_\varepsilon^\perp  \psi'^i_0},
  P_\varepsilon^\perp  \psi'^i_0 \big)
  \,.
$$
Repeating the arguments from the proofs of
Proposition~\ref{prop-1} and  Lemma~\ref{le-as3},
we establish
\begin{multline}   
  \lim_{\eps \to 0}
  \frac{h_\varepsilon \big( \overline{\psi'^i_0} , \psi'^i_0 \big)
  - h_0 \big( \overline{\psi'^i_0} , \psi'^i_0 \big)}
  {\eps}
  \\ \label{eq-aux10}
  = \alpha_+ \int_{\Sigma_0} \partial^+_n (\psi'^i_0 \psi'^i_0 ) +\alpha_- \int_{\Sigma_0} \partial^-_n (\psi'^i_0 \psi'^i_0 )
 - (\alpha_+ -\alpha_-)(d-1) \int_{\Sigma_0 }  K_1 \,  \psi'^i_0 \psi'^i_0
 \end{multline}
and
\begin{equation}\label{eq-estim4}
 h_\varepsilon \big(\overline{P_\varepsilon^\perp  \psi'^i_0},
  P_\varepsilon^\perp  \psi'^i_0 \big)
 =  \varepsilon \, (\alpha_+ +\alpha_-) \int_{\Sigma_0  } \psi'^i_0 \psi'^i_0
 +O(\varepsilon^2 )
  \,.
\end{equation}
 Since $\{\psi'^j_0\}_{j=1}^k$ is a basis,
 we can express any vector $\psi^j_0 $, $j=1,\dots,k$
satisfying biorthonormal relation~(\ref{basis}),
 as a linear combination
  $ \psi^j_0= \sum_{i=1}^k a_{ji }\psi'^i_0$, where  $a_{ji} \in \C$.
Furthermore, let us define matrix $S'$ as $D$ expressed in the new basis, precisely
$$
S':=\{s'_{ij}\}_{i,j=1}^k \,,\qquad \mathrm{with} \quad s'_{ij} = (\overline{a_i} , D a_j)_{l^2_k}\,,
$$
where $a_i : =(a_{i1},\dots, a_{ik}) \in l^2_k$. The eigenvalues of $S'$ and $D$ coincide. Furthermore, applying (\ref{eq-aux10}) and (\ref{eq-estim4}), we conclude that $s'_{ij} = \lambda_0 \delta_{ij} + s_{ij}\eps +O(\eps^2)$ which implies the claim.
\end{proof}

\appendix

\section{Appendix: Colliding quantum dots}\label{Sec.warm}
%
In this appendix, we focus on the special situation
of two approaching point interactions on the real line.
The simplicity of the problem enables one to derive
more precise asymptotic formulae by a different method.
At the same time, the explicit solutions provide a valuable insight
into the origin of the individual components in the first-order
correction term.

\subsection{Eigenvalue asymptotics}
As a special case of~\eqref{form},
we consider the m-sectorial operator~$H_\varepsilon$
associated with the  form
$$
  h_\varepsilon [\psi] :=
  \int_{\R}| \psi'(x)|^2 \, \D x
  +\alpha_+ |\psi (\varepsilon)|^2
  +\alpha_- |\psi  (- \varepsilon )|^2
  \,,
  \qquad
 \Dom(h_\varepsilon) := H^{1}(\R)
  \,.
$$
Note that the functions from $H^{1}(\R)  $ are continuous and, in this case,
the images of the  trace maps are just
determined by function values $\psi (\pm \varepsilon )$.

For $\varepsilon =0$, the operator~$H_0$
defines a well known model:
one-point interaction in one dimension with the coupling constant  $\alpha_++\alpha_-$.
The spectrum of~$H_0$ consists of
the essential (in fact continuous) spectrum $[0,\infty)$
and, under the condition $\Re (\alpha_+ +\alpha_-) <0$,
one simple discrete eigenvalue
\begin{equation}\label{eq-ev1D}
  \lambda_0 := -\frac{(\alpha_++\alpha_-)^2}{4}
\end{equation}
associated with the eigenfunction
$$
  \psi_0(x) := C_0 \, f_0(x)
  \,, \qquad f_0(x) := \mathrm{e}^{(\alpha_++\alpha_-)|x|/2}
  \,.
$$
Here the complex constant~$C_0$ is chosen in such a way that
the standard normalisation condition for non-self-adjoint spectral problems
$
  \int_\R \psi_0^2 =1
$
holds.

The case of two point interactions in one dimension
corresponding to $\eps > 0$ is also studied in the literature,
at least in the self-adjoint case
(see~\cite[Chap.~II.2]{AGH} and~\cite{KK4}).
The semi-axis $[0,\infty)$ still constitutes the essential spectrum of~$H_\eps$
and possible eigenvalues $\lambda_\eps$ equal $-\kappa_\eps^2$,
where $\kappa_\eps$ are determined as positive solutions of the implicit equation
\begin{equation}\label{eq-speqONE}
  (\alpha_+ +2 \kappa )  (\alpha_- +2 \kappa ) - \alpha_+ \alpha_-
  \mathrm{e}^{-4\kappa \varepsilon} = 0
  \,.
\end{equation}
For $\varepsilon$ small
enough equation (\ref{eq-speqONE}) admits a unique solution
$\kappa_\varepsilon$ which behaves as
\begin{equation}\label{as.kappa}
  \kappa_\varepsilon = \frac{\alpha_+ +\alpha_- }{2}
  + \alpha_+\alpha_- \varepsilon +O(\varepsilon^2 )
\end{equation}
as $\eps \to 0$.
The following theorem summarises the above discussion.
\begin{theorem}[$d=1$]
Let $\Re (\alpha_+ +\alpha_-) <0$.
For $\varepsilon$ small enough operator $H_\varepsilon$
has a unique simple discrete eigenvalue which admits the following asymptotics
\begin{equation}\label{eq-evONEd}
\lambda_\varepsilon=  \lambda_0-
(\alpha_++\alpha_-)\alpha_+\alpha_- \,\varepsilon
+O(\varepsilon^2 )
\end{equation}
or, equivalently,
\begin{equation}\label{eq-evONEda}
  \lambda_\varepsilon=  \lambda_0
  + \left[ \alpha_+ \, {\psi_0^2}'(0^+) - \alpha_- \, {\psi_0^2}'(0^-)
  -(\alpha_+^2+\alpha_-^2)\, \psi_0^2(0) \right] \varepsilon
  + O(\varepsilon^2 )
  \,.
\end{equation}
\end{theorem}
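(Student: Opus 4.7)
The plan is to exploit the explicit solvability of the one-dimensional eigenvalue problem and reduce the asymptotic analysis to an application of the holomorphic implicit function theorem to the transcendental equation~(\ref{eq-speqONE}).

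First I would derive~(\ref{eq-speqONE}) directly. For $\lambda = -\kappa^2$ with $\Re \kappa > 0$, any eigenfunction of $H_\varepsilon$ solves $-\psi'' = \lambda\psi$ on each of the three intervals $(-\infty,-\varepsilon)$, $(-\varepsilon,\varepsilon)$, $(\varepsilon,\infty)$, together with continuity at $\pm\varepsilon$ and the jump conditions $\psi'(\pm\varepsilon^+) - \psi'(\pm\varepsilon^-) = \alpha_\pm \psi(\pm\varepsilon)$ coming from~(\ref{conditions}). Square-integrability forces $\psi(x) = A\,\mathrm{e}^{\kappa x}$ on $(-\infty,-\varepsilon)$ and $\psi(x) = B\,\mathrm{e}^{-\kappa x}$ on $(\varepsilon,\infty)$, while $\psi(x) = C\,\mathrm{e}^{\kappa x} + D\,\mathrm{e}^{-\kappa x}$ on the middle interval. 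The four matching conditions form a homogeneous linear system in $(A,B,C,D)$, and the vanishing of its determinant reduces after elementary algebra to~(\ref{eq-speqONE}).

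Second I would apply the holomorphic implicit function theorem to
$$
F(\kappa,\varepsilon) := (\alpha_+ + 2\kappa)(\alpha_- + 2\kappa) - \alpha_+\alpha_-\,\mathrm{e}^{-4\kappa\varepsilon}
$$
at the base point $(\kappa_0,0)$ with $\kappa_0 := -(\alpha_+ + \alpha_-)/2$, so that $\lambda_0 = -\kappa_0^2$. A direct computation gives $F(\kappa_0,0) = (-\alpha_-)(-\alpha_+) - \alpha_+\alpha_- = 0$ and $\partial_\kappa F(\kappa_0,0) = 2(\alpha_+ + \alpha_- + 4\kappa_0) = -2(\alpha_+ + \alpha_-) = 4\kappa_0$, which is non-zero by the hypothesis $\Re(\alpha_+ + \alpha_-) < 0$. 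The theorem therefore produces a unique analytic branch of solutions $\kappa_\varepsilon$ near $\varepsilon = 0$, with
$$
\kappa_\varepsilon'(0) = -\frac{\partial_\varepsilon F(\kappa_0,0)}{\partial_\kappa F(\kappa_0,0)} = -\frac{4\kappa_0\alpha_+\alpha_-}{4\kappa_0} = -\alpha_+\alpha_-,
$$
so $\kappa_\varepsilon = \kappa_0 - \alpha_+\alpha_-\,\varepsilon + O(\varepsilon^2)$. Squaring and negating yields~(\ref{eq-evONEd}) at once. The equivalent form~(\ref{eq-evONEda}) is a direct substitution: using the explicit eigenfunction $\psi_0(x) = C_0\,\mathrm{e}^{(\alpha_+ + \alpha_-)|x|/2}$ with $C_0^2 = -(\alpha_+ + \alpha_-)/2$ (fixed by $\int_\R \psi_0^2 = 1$), one computes $\psi_0^2(0) = C_0^2$ and $(\psi_0^2)'(0^\pm) = \pm(\alpha_+ + \alpha_-)C_0^2$, so the bracket in~(\ref{eq-evONEda}) collapses to $C_0^2\bigl[(\alpha_+ + \alpha_-)^2 - (\alpha_+^2 + \alpha_-^2)\bigr] = 2 C_0^2 \alpha_+\alpha_- = -(\alpha_+ + \alpha_-)\alpha_+\alpha_-$, matching the coefficient in~(\ref{eq-evONEd}).

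The main obstacle is the uniqueness assertion, since the implicit function theorem only produces a local branch near $\kappa_0$. To exclude other discrete eigenvalues of $H_\varepsilon$ arising from extraneous roots of~(\ref{eq-speqONE}), I would combine the norm-resolvent convergence of Theorem~\ref{Thm.nrs} (which traps $\sigma_\mathrm{disc}(H_\varepsilon)$ in any preassigned neighbourhood of $\{\lambda_0\}$ together with the essential spectrum $[0,\infty)$, for small enough $\varepsilon$) with an elementary analysis of $F$: for $\varepsilon$ small and $\kappa$ in a compact subset of $\{\Re\kappa > 0\}$ bounded away from $\kappa_0$, the smooth dependence of $F$ on $\varepsilon$ and the identification of all roots of $F(\cdot,0)$ ensure $|F(\kappa,\varepsilon)|$ stays bounded below, ruling out spurious branches.
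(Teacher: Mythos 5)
Your proposal is correct and follows essentially the same route as the paper: reduce the problem to the transcendental equation \eqref{eq-speqONE}, apply the implicit function theorem at $(\kappa_0,0)$ to get $\kappa_\varepsilon=\kappa_0-\alpha_+\alpha_-\varepsilon+O(\varepsilon^2)$, and verify \eqref{eq-evONEda} by substituting the explicit values of $\psi_0^2(0)$ and $(\psi_0^2)'(0^\pm)$. You in fact supply more detail than the paper does, notably on the uniqueness of the root of \eqref{eq-speqONE}, which the paper simply asserts.
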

\begin{proof}
The first formula is due to~\eqref{as.kappa},
while its equivalent form follows by identities
${\psi_0^2}'(0^\pm) = \mp (\alpha_++\alpha_-)^2/2$
and $\psi_0^2(0) = -(\alpha_++\alpha_-)/2$.
\end{proof}

Note that~\eqref{eq-evONEda} is a special case of the general formula~\eqref{as.bis}.

Modified subtitle.
\subsection{More insight into the first-order correction term}
The aim of this section is to discuss in more detail
the first-order correction for the two-point interaction model.
In particular, we would like to analyse the source of
the term $-(\alpha_+^2 +\alpha_-^2)\psi_0 (0)^2$.

The general solution of
the eigenvalue problem $H_\varepsilon f_\eps = \lambda_\eps f_\eps$
takes the form
\begin{equation}\label{eq-efHepsilon}
  f_\varepsilon (x)=
  \begin{cases}
    \mathrm{e}^{\kappa_\varepsilon x} & \text{for}  \quad x<-\varepsilon \,,  \\
    c_1  \mathrm{e}^{- \kappa_\varepsilon x} +c_2  \mathrm{e}^{\kappa_\varepsilon x}
     & \text{for}  \quad -\varepsilon <x<\varepsilon\,, \\
     c_3 \mathrm{e}^{-\kappa_\varepsilon x} & \text{for} \quad  x>\varepsilon  \, .
  \end{cases}
\end{equation}
Using the boundary conditions~(\ref{conditions}) at $x=\pm\varepsilon$,
we determine the constants
\begin{equation}\label{eq-const12}
  c_1 = -\frac{\alpha_- }{2 \kappa_\varepsilon }
  \,, \qquad
  c_2 = \frac{\alpha_- +2\kappa_\varepsilon }{2 \kappa_\varepsilon }
  \,, \qquad
  c_3 =\mathrm{e}^{2\kappa_\varepsilon  \varepsilon }
  +\frac{\alpha_-}{2\kappa_\varepsilon} (\mathrm{e}^{2\kappa_\varepsilon  \varepsilon }-
  \mathrm{e}^{-2\kappa_\varepsilon  \varepsilon })
  \,.
\end{equation}
Moreover, employing~(\ref{eq-evONEd}), we get
\begin{equation}\label{eq-const3}
  c_3 = 1+O(\varepsilon)
\end{equation}
as $\eps \to 0$.
Let $\psi_\varepsilon $ stand for the normalised eigenfunction of $H_\varepsilon$,
\ie
$
  \psi_\varepsilon := C_\varepsilon f_\varepsilon
$,
where the complex constant~$C_\eps$ is chosen in such a way that
$\int_{\R}\psi_\varepsilon^2 =1$.
Let~$P_\varepsilon $ denote the corresponding eigenprojector, \ie
$$
  P_\varepsilon g :=
  \left(\overline{\psi_\varepsilon }, g \right)_{L^2 (\R )} \, \psi_\varepsilon\,.
$$
The eigenvalue~$\lambda_\varepsilon$ of~$H_\varepsilon $
satisfies
\begin{equation}\label{eq-formev-0}
  \lambda_\varepsilon
  =\frac{h_\varepsilon
  \big(\overline{P_\varepsilon \psi_0},P_\varepsilon \psi_0\big)}
  {\big(\overline{P_\varepsilon \psi_0}, P_\varepsilon \psi_0\big)_{L^2(\R)}}
  \,,
\end{equation}
where
\begin{equation}\label{eq-aux1-0}
  h_\varepsilon\big(\overline{P_\varepsilon \psi_0} , P_\varepsilon \psi_0\big)
  = h_0 \big(\overline{\psi_0}, \psi_0 \big)
  +(h_\varepsilon - h_0) \big(\overline{\psi_0}, \psi_0\big)
  -h_\varepsilon
  \big(\overline{P_\varepsilon ^\bot  \psi_0} , P_\varepsilon ^\bot \psi_0 \big)
\end{equation}
with  $P ^\bot :=I-P_\varepsilon$.
The first term on the right hand side of~(\ref{eq-aux1-0}) yields
$h_0\big(\overline{\psi_0 }, \psi_0\big) = \lambda_0$
since $\int_{\R} \psi_0^2 =1$.
The second term admits the asymptotics
\begin{eqnarray} \label{eq-estim1}
(h_\varepsilon - h_0 )\big(\overline{\psi_0}, \psi_0 \big)
  = \left( \alpha_+ {\psi_0^2}'(0+)
  - \alpha_- {\psi_0^2}'(0-)
  \right) \varepsilon +
O(\varepsilon^2)
  \,,
\end{eqnarray}
which reproduces the first two components
of the correction term in~(\ref{eq-evONEda}).

The remaining discussion is devoted to the analysis of
the third term on the right hand side of~(\ref{eq-aux1-0}).
A straightforward calculation
using (\ref{eq-const12}) and (\ref{eq-const3}) yields
\begin{equation}\label{eq-cal1}
  \left|\int_\R f^2_\varepsilon  - \int_\R f^2_0   \right|=O(\varepsilon )
  \,, \qquad
  \| f_\varepsilon - f_0\|_{L^2 (\R)} = O(\varepsilon)
  \,.
\end{equation}
Define
\begin{equation}\label{eq-estim2}
  \omega_\varepsilon  := P_\varepsilon ^\bot \psi_0 =
  \psi_0 - \big(\overline{\psi_\varepsilon}, \psi_0 \big)_{L^2 (\R)} \psi_\varepsilon
  \,.
\end{equation}
Note that the derivative of~$\omega_\varepsilon$ is well defined everywhere
apart $x=0$ and $x=\pm \varepsilon$.
Let $\omega_\varepsilon'$ denote this derivative.
Consequently, the third term on the right hand side
of~(\ref{eq-aux1-0}) takes the form
\begin{equation}\label{eq-estim20}
h_\varepsilon (\overline{\omega_\varepsilon}, \omega_\varepsilon ) = \int_{\R}
\omega_\varepsilon '^2  +\alpha_+ \, \omega_\varepsilon ^2(\varepsilon )
+\alpha_- \, \omega_\varepsilon ^2 (- \varepsilon )
  \,.
\end{equation}
%
Using again (\ref{eq-const12}) and (\ref{eq-const3}),
we state that
$$
  \omega_\varepsilon (\pm \varepsilon) =O(\varepsilon)
  \,.
$$
This means that the last two
terms on the right hand side~(\ref{eq-estim20})
behave as $O (\varepsilon^2)$.

Finally, let us analyse the first component~(\ref{eq-estim20}).
In view of~(\ref{eq-efHepsilon}), we decompose
$$
\int_{\R} \omega_\varepsilon '^2 = \int_{-\varepsilon}^\varepsilon
\omega_\varepsilon '^2  + \int_{-\infty }^{-\varepsilon}
\omega_\varepsilon '^2 +\int_{\varepsilon }^\infty
\omega_\varepsilon '^2 \,.
$$
A straightforward calculation shows that the last two terms
on the right hand side
behave as $O(\varepsilon^2)$. The first term requires a more
detailed analysis. Namely, for $x\in (0\,,\varepsilon)$ we have
$$
\omega_\varepsilon '(x)=- \frac{\alpha_++\alpha_-}{2}\left(-\kappa_0 \mathrm{e}^{-\kappa_0 x}+ c_1  \kappa_\varepsilon
\mathrm{e}^{-\kappa_\varepsilon x} -c_2\kappa_\varepsilon  \mathrm{e}^{\kappa_\varepsilon x}\right)^2+O(\varepsilon )= -\alpha_+ +O(\varepsilon)\,,
$$
where we have used (\ref{eq-const12}) together with the fact $\int_{\R}f^2_0 =-\frac{2}{\alpha_++\alpha_-}$. Analogously we show
$\omega_\varepsilon '(x)=-\alpha_- +O(\varepsilon)$ for $x\in (-\varepsilon\,,0)$.
This implies $\int_{-\varepsilon}^\varepsilon \omega_\varepsilon '^2
=(\alpha_+^2 +\alpha_-^2)\varepsilon +O(\varepsilon^2)$, and consequently,
$$
\int_{\R} \omega_\varepsilon '^2  = (\alpha_+^2 +\alpha_-^2) \varepsilon +
O(\varepsilon^2)\,,
$$
which, finally, leads to
$$
h_\varepsilon (\overline{\omega_\varepsilon}, \omega_\varepsilon )  = (\alpha_+^2 +\alpha_-^2) \varepsilon +
O(\varepsilon^2)\,.
$$
On the other hand,
$$
  (\overline{P_\varepsilon \psi_0}, P_\varepsilon \psi_0)_{L^2(\R)} =
(\overline{\psi_0}, \psi_0 )_{L^2 (\R)}-  (\overline{\omega_\varepsilon},\omega_\varepsilon )_{L^2 (\R)} = 1 +O(\varepsilon^2)\,.
$$

Summing up the above discussion,
we have  obtained the total first-order
correction term in~(\ref{eq-evONEda})
and identified the origin of its individual terms.

\subsection*{Acknowledgements}
The work was  supported
by the project RVO61389005 and the GACR grant No.\ 14-06818S as well as
 by the project DEC-2013/11/B/ST1/03067 of the Polish National Science Centre.
S.K. thanks the Department of Theoretical Physics, NPI CAS
in Re\v{z}, for the hospitality in July  2015, when some problems of this work were discussed.

%

\bibliography{bib}
\bibliographystyle{amsplain}

\end{document}